\documentclass[draftclsnofoot,onecolumn]{IEEEtran}

\usepackage{cite}
\usepackage{amsmath,amssymb,amsfonts, pgffor}
\usepackage[ruled,vlined]{algorithm2e}
\usepackage[hidelinks]{hyperref}
\usepackage{graphicx}
\usepackage{amsthm}
\usepackage[table]{xcolor}
\usepackage{nicematrix}
\usepackage{multirow}
\usepackage{tikz}
\usetikzlibrary{matrix,positioning,arrows,shapes,chains,fit,scopes}

 \newtheorem{theorem}{Theorem}
 \newtheorem{corollary}[theorem]{Corollary}

 \newtheorem{lemma}[theorem]{Lemma}

\newtheorem{example}{Example}

\theoremstyle{definition}
 \newtheorem{definition}{Definition}
\theoremstyle{remark}
\newtheorem{remark}{Remark}

\newcommand{\mc}[1]{\mathcal #1}
\newcommand{\vv}[1]{\mathbf #1}

\newcommand{\ls}{\lambda}

\newcommand{\bspan}[1]{\langle #1 \rangle}
\newcommand{\numevnum}{300}
\newcommand{\cv}[1]{\mathbf{#1}}

\usetikzlibrary{matrix,decorations.pathreplacing, calc, positioning,fit}
\tikzset{arrow style mul/.style={circle,yshift=4pt,}}

\DeclareMathOperator*{\argmin}{arg\,min}

\newcounter{relctr} %
\everydisplay\expandafter{\the\everydisplay\setcounter{relctr}{0}} %

\AtBeginDocument{} %
\usepackage {xcolor}

\begin{document}

\title{On Optimal Finite-length Block Codes of Size Four for Binary Symmetric Channels}

\author{Yanyan Dong and Shenghao Yang%
  \thanks{This paper was presented in part at ISITA 2020 and ISIT 2023.}
  \thanks{Y.~Dong is with the Department of Electrical and Computer Engineering, National University of Singapore, Singapore 117597, Singapore; yan.dong@nus.edu.sg}
  \thanks{S.~Yang is with the School of Science and Engineering, The Chinese University of Hong Kong, Shenzhen, Shenzhen 518172, China; shyang@cuhk.edu.cn}
}

\maketitle

\begin{abstract}
A binary code of blocklength $n$ and codebook size $M$ is called an $(n,M)$ code, which is studied for memoryless binary symmetric channels (BSCs) with the maximum likelihood (ML) decoding.
  For any $n \geq 2$, some optimal codes among the \emph{linear} $(n,4)$ codes have been explicitly characterized in the previous study, but whether the optimal codes among the linear codes are better than all the nonlinear codes or not is unknown.
  In this paper, we first show that for any $n\geq 2$, there exists an optimal code (among all the $(n,4)$ codes) that is either linear or in a subset of nonlinear codes, called \emph{Class-I} codes. 
  We identified \emph{all} the optimal codes among the linear $(n,4)$ codes for each blocklength $n\geq 2$, and found ones that were not given in literature. For any $n$ from $2$ to $\numevnum$, all the optimal $(n,4)$ codes are identified, where  except for $n=3$, all the optimal $(n,4)$ codes are equivalent to linear codes. There exist optimal $(3,4)$ codes that are not equivalent to linear codes. 
Furthermore, we derive a subset of nonlinear codes called Class-II codes and justify that for any $n >\numevnum$, the set
composed of linear, Class-I and Class-II codes and their equivalent codes contains all the optimal $(n,4)$ codes.
Both Class-I and Class-II codes are close to linear codes in the sense that they involve only one type of columns that are not included in linear codes. 
Our results are obtained using a new technique to compare the ML decoding performance of two codes, featured by a partition of the entire range of the channel output.
\end{abstract}
\begin{IEEEkeywords}
optimal code, finite blocklength, binary symmetric channel
\end{IEEEkeywords}

\section{Introduction}
Shannon's channel capacity~\cite{shannon1948mathematical} is the
 maximum achievable rate in the sense that there exists
a code with an arbitrarily small error probability when the
blocklength is sufficiently large. Asymptotically capacity achieving channel codes
have been designed, e.g., polar codes \cite{Arikan09} and LDPC codes
\cite{gallager1962low,mitchell2015spatially}.  For practical
applications, codes of short blocklengths are preferred due
to lower latency and lower computation cost.
There have been analyses of the finite-length performance of practical
codes such as LDPC codes, polar codes, convolutional codes and BCH codes (see,
e.g.,
\cite{di2002finite,eslami2013finite,mondelli2014polar,olmos2015scaling,gaudio2017performance,cheng2021bch}).
In addition, bounds on the maximum channel coding rate achievable at a
given blocklength and error probability have been
investigated~\cite{valembois2004sphere,wiechman2008improved,polyanskiy2010channel}.
A classical question related to finite-length codes is the optimal
codes at a given blocklength and codebook size for memoryless binary
symmetric channels (BSCs) with respect to the maximum likelihood (ML)
decoding~\cite{Slepian1956A,fontaine1959group,wagner1966search,tokura1967search,cordaro1967optimum,peterson72}.

A binary code of blocklength $n$ and codebook size $M$ is called an
$(n,M)$ code, where $M\leq 2^n$. A $(n,2^k)$ code for a
certain integer $ k\leq n $ is said to be \emph{linear} if it is a
subspace of $\{0,1\}^n$. For given $n$ and $M$, it is a difficult
problem to find the optimal codes among all $(n,M)$ codes for
BSCs in terms of the ML
decoding. %
For many blocklengths $n$, all equidistant
codes that maximize the minimum Hamming distance   are strictly
suboptimal on a BSC~\cite{chen2013equidistant}. Though the ML decoding is equivalent to the minimum Hamming distance decoding, a code with the largest minimum Hamming distance among all $(n, M)$ codes is not necessarily optimal for ML decoding~\cite{chen2013optimal}.
 For given $ n $ and $ M $, it is hard to search the optimal
$ (n,M) $ codes by computers when $n$ is slightly
large~\cite{peterson72}.  In particular, the optimal codes among the
linear codes when the blocklength is small have been searched by
computer in
\cite{fontaine1959group,wagner1966search,tokura1967search}. If an optimal
code among the linear codes is perfect or quasi-perfect, it is optimal
among all the codes~\cite{Slepian1956A,peterson72}.  In general, it is
unknown whether an optimal code among the linear codes is optimal
among all the codes.  Except for codes that are perfect or
quasi-perfect, little is known about optimal codes for BSCs.

For BSCs, when the channel crossover
probability is small, the optimal code among the linear $(n,4)$ codes have been claimed for each block length $n$~\cite{cordaro1967optimum}.  Kl{\o}ve
\cite{klove2006binary} gave his conjecture of the generator matrices
for the optimal code among the $ (n,2^2) $ and $ (n,2^3) $ linear
codes.  Chen, Lin and Moser \cite{chen2013optimal} proved the optimality of a sequence of $(n,3)$  codes for $n=2,3,\ldots$, and they showed that a sequence of $ (n,4) $  linear codes for $n=2,3,\ldots$
formed by the conjectured optimal linear codes in~\cite{klove2006binary} are optimal among all the $ (n,4) $ 
linear codes. %
Vazquez-Vilar et al.~\cite{vazquez2016bayesian} compares
the optimal codes among all linear $(n,4)$ codes over a BSC proved
in~\cite{chen2013optimal} to the meta-converse lower bound.  For the binary erasure channels, the optimal codes were
found among all the $(n,M)$ codes satisfying $M\leq 4$
in~\cite{lin2018weak}.

In this paper, we study the optimal $(n,4)$ codes for BSCs with
respect to the ML decoding, considering both linear and nonlinear
codes. We say a property is \emph{universal} if it holds for any
crossover probability in the open set $(0,1/2)$. Similar as
in~\cite{cordaro1967optimum,chen2013optimal}, we use a matrix to
represent $(n,4)$ codes, where each codeword is a row of the
matrix, and we use the column types of the code matrix to present our
results.  We say two $(n,4)$ codes are \emph{equivalent} if one can be
obtained from the other by row or column interchanging and column
flipping. Two equivalent codes have the same ML decoding performance
for BSCs universally.  We obtain the following main results:
\begin{itemize}
\item For any blocklength $n\geq 2$, there exists an optimal $(n,4)$
  code that is either linear or in a subset of nonlinear codes, called
  the \emph{Class-I} codes. Class-I codes are close to linear codes
  in the sense that they involve only one type of columns that are not included in linear codes. 
  
\item All the optimal codes among the linear $(n,4)$ codes are
  identified for each given blocklength $n\geq 2$, and are
  universal. We obtain more optimal codes among the linear $ (n,4) $ codes
  than those shown in \cite{chen2013optimal} when $n=3k$ or $3k+1$ for a positive integer $k$. When
  $n\neq 3$, our results are consistent with the optimal codes among
  the linear codes claimed by Cordaro and Wagner for small values of
  the crossover probability in \cite{cordaro1967optimum}. When $n=3$,
  we found more optimal codes, one of which has an all-zero column.
  
\item For any blocklength $n$ from $2$ to $\numevnum$, all the optimal $(n,4)$ codes are characterized, and are universal. Except for $n=3$, all the optimal $(n,4)$ codes are equivalent to linear codes. There exist optimal $(3,4)$ codes that are not equivalent to linear codes. 
  
\item For any blocklength $n > 300$, the set composed of
  linear, Class-I and Class-II codes and their equivalent codes
  contains all the optimal codes. Class-II codes are close to linear codes
  in the same sense as Class-I codes.
\end{itemize}

This paper also moves forward the analytical techniques of binary block codes.
In~\cite{chen2013optimal}, two linear 
$(n,4)$ codes with one column different are compared. 
We derive a new technique to compare the ML decoding performance of two $(n,4)$ codes $C$ and $C'$ with differences in some columns, where $C$ and $C'$ are not necessarily to be linear.
Our technique can induce a strict partial order relation among $(n,4)$ codes, which is crucial for characterizing all the optimal codes. Therefore, even for linear $(n,4)$ codes, our technique can induce stronger results than those in~\cite{chen2013optimal}.
Our technique starts with a specific partition of the entire range of the channel output, i.e., $\{0,1\}^n$, and a permutation $g$ on the range, such that for each subset $\mc Y$ in the partition, one of the following three conditions holds:
\begin{enumerate}
\item For all $\vv y \in \mc Y$, the decoding performance of $\vv y$ for $C$ is the same as the decoding performance of $g(\vv y)$ for $C'$;
\item For all $\vv y \in \mc Y$, the decoding performance of $\vv y$ for $C$ is better than the decoding performance of $g(\vv y)$ for $C'$;
\item For all $\vv y \in \mc Y$, the decoding performance of $\vv y$ for $C$ is worse than the decoding performance of $g(\vv y)$ for $C'$.
\end{enumerate}
See Fig.~\ref{fig:cc} for an illustration of a partition with three subsets.
With such a partition, we only need to focus on the subsets satisfying conditions 2) and 3) for the decoding performance comparison. To make the problem simpler, we hope that the partition has a small number of subsets.
We find such partitions of $5$ subsets for the two cases we use: i) two codes with differences in one column, and ii) two codes with differences in two columns. Based on the code comparison results of these two cases, our main results about the optimal codes can be proved without further involving the technical details of comparing two codes. 

\begin{figure}
  \centering
  \begin{tikzpicture}[scale=0.5]
    \draw[thick] (0,0)--(0,4)--(4,4) node[above,midway] {$\{0,1\}^n$}--(4,0)--cycle;
    \draw[->] (4.5,2) -- (6.5,2) node[above,midway] {$g$};

    \draw (2,2) -- (0,3);
    \draw (2,2) -- (1,0);
    \draw (2,2) -- (4,2.5);
    \node at (2,3) {$\mc Y_1$};
    \node at (0.8,1.5) {$\mc Y_2$};
    \node at (3,1.3) {$\mc Y_3$};

    \begin{scope}[xshift=7cm]
    \draw[thick] (0,0)--(0,4)--(4,4) node[above,midway] {$\{0,1\}^n$}--(4,0)--cycle;
    \draw (2,2) -- (0,1.6);
    \draw (2,2) -- (2,4);
    \draw (2,2) -- (4,0);
    \node at (1.7,1) {$\mc Y_1'$};
    \node at (1.2,3) {$\mc Y_2'$};
    \node at (3,2.3) {$\mc Y_3'$};
  \end{scope}
  \end{tikzpicture}
  \caption{Illustration of our technique to compare two codes $C$ and $C'$. $\{\mc Y_1, \mc Y_2, \mc Y_3\}$ forms a partition of $\{0,1\}^n$. In the figure, $\mc Y_i' = g(\mc Y_i)$ for $i=1,2,3$. For all $\vv y \in \mc Y_1$, the decoding performance of $\vv y$ for $C$ is the same as the decoding performance of $g(\vv y)$ for $C'$. For all $\vv y \in \mc Y_2$, the decoding performance of $\vv y$ for $C$ is better than the decoding performance of $g(\vv y)$ for $C'$. For all $\vv y \in \mc Y_3$, the decoding performance of $\vv y$ for $C$ is worse than the decoding performance of $g(\vv y)$ for $C'$. To compare $C$ and $C'$, we only need to study $\mc Y_2$ and $\mc Y_3$.}
  \label{fig:cc}
\end{figure}

In the remainder of this paper, we first formulate the problem and introduce our main results in \S\ref{sec:formulationresults}. Then \S\ref{sec:approach} presents the major code comparison results and elaborates how they imply the main results in  \S\ref{sec:formulationresults}. The proofs of the major code comparison results are in the following sections.  
In \S\ref{sec:formulation}, we formally introduce the general approach for comparing the ML decoding performance of two codes and discuss a special case in detail, where two codes only differ in one column (see \S\ref{sec:1col}). The comparison of codes with two columns  different is provided in \S\ref{sec:fliptwobits}.  \S\ref{sec:linearcode} compares the performance between two linear codes with only one column different, and  \S\ref{sec:classI} is dedicated to the analysis of Class-I codes, both based on the results in \S\ref{sec:1col}. In \S\ref{sec:remark}, we discuss the open problems and the directions for future research. 

\section{Problem Formulation and Main Results}\label{sec:formulationresults}

\subsection{Formulation of $(n,M)$ Binary Codes}

For positive integers $M$ and $n$ with $M\leq 2^n$, 
an $(n,M)$ binary node $\mc C$ is a subset of $\{0,1\}^n$ of size $M$, and an $(n,2^k)$ code with integer $k\leq n$ is said to be \emph{linear} if it is a subspace of $\{0,1\}^n$. Using the codewords of $\mc C$ as rows, we can form an $M\times n$ binary matrix $C$, which is used interchangeably with $\mc C$. 
For $i=1,\ldots, M$, let $\vv c_i$ be the $i$th row of $C$, i.e., the $i$th codeword of $\mc C$.

For $\vv x, \vv y\in \{0,1\}^n$, let $w(\vv x)$ be the \emph{Hamming weight} of $\vv x$ and let
$\vv x\oplus \vv y$ be the bit-wise exclusive OR of $\vv x$ and $\vv y$. Hence, $w(\vv x\oplus \vv y)$ is the Hamming distance between $\vv x$ and $\vv y$.
The minimum distance of $y\in \{0,1\}^n$ with the code $C$ is denoted as
\begin{equation}\label{eq:dcy}
  d_{C}(\vv y) = \min_{\vv c \in \mc C} w(\vv c \oplus \vv y).
\end{equation}

We use BSC$(\epsilon) $ to denote the symmetric channel (BSC) with crossover probability $\epsilon$ ($0<\epsilon< \frac{1}{2}$). 
Suppose the code $C$ is used for BSC$(\epsilon)$. For a channel input $\vv{x} \in \{0,1\}^n$, the channel output is $\vv y \in \{0,1\}^n$ with probability
\begin{equation*}%
  p(\vv y|\vv x) = (1-\epsilon)^{n-w(\vv x \oplus \vv y)}\epsilon^{w(\vv x\oplus \vv y)}.
\end{equation*}
The maximum-likelihood (ML) decoding rule decodes an output $\vv y$ to a codeword $\vv c$ if
$w(\vv c\oplus \vv y) = d_{C}(\vv y)$,
where a tie is resolved arbitrarily. 
Define
\begin{equation*}
  \alpha_{C}(d)=|\{\vv y\in \{0,1\}^n: d_C(\vv y)=d \}|,
\end{equation*}
which is the number of outputs $\vv y$ that is decoded to a codeword of distance $d$.
Note that the value $\alpha_{C}(d)$ does not depend on $\epsilon$.
The ({average}) correct decoding probability of $C$ is
\begin{IEEEeqnarray}{rCl}
  \ls_{C} (\epsilon) & = &
  \frac{1}{|C|}\sum_{\vv y \in \{0,1\}^n}  (1-\epsilon)^{n-d_{\cv C}(\vv y)}\epsilon^{d_{\cv C}(\vv y)} \label{eq:lambda:1}\\
  & = & \frac{1}{|C|} \sum_{d=0}^n \alpha_{C}(d) (1-\epsilon)^{n-d}\epsilon^{d}. \label{eq:lambda}
\end{IEEEeqnarray}
\begin{definition}
  For BSC$(\epsilon)$, $0<\epsilon<\frac{1}{2}$, the following relations are defined between two $(n,M)$ codes $C$ and $C'$:
  \begin{enumerate}
  \item Code $C$ is \emph{better (resp. strictly better)} than $C'$ for the given crossover probability $ \epsilon $ if $\ls_{C}(\epsilon)\geq \ls_{C'}(\epsilon)$  (resp. $\ls_{C}(\epsilon)> \ls_{C'}(\epsilon)$);
\item Code $C$ is \emph{an optimal code  for the given crossover probability $ \epsilon $} if it is better than any other $(n,M)$ codes;
\item When $M=2^k$, code $C$ is \emph{optimal among linear codes for the given  crossover probability $ \epsilon $} if it is better than any other $(n,2^k)$ \emph{linear} codes;
\item If valid for all crossover probability $\epsilon\in (0,\frac{1}{2})$, a property of a code is said to be \emph{universal}. In particular, we write $ \ls_{C'} \geq \ls_C $ (resp. $ \ls_{C'} > \ls_C $) if $ C' $ is universally better (resp. universally strictly better) than $ C; $
\end{enumerate}
\end{definition}
\begin{remark}
	 In this paper, if the crossover probability $ \epsilon $ is not mentioned for a property, the property discussed are implied to be universal.
\end{remark}
\subsection{Formulation of $(n,4)$ codes}\label{sec:n4f}

In this paper, we focus on $(n,4)$ codes with $n\geq 2$,  which have four codewords. The columns of an $(n,4)$ code $C$ are of vectors in $\{0,1\}^{4}$. 
We use $\bspan{i}$ to denote the binary vector of length $4$ associated with an integer $i= 0,1,\ldots,15$. For example,  %
\begin{equation*}
  \bspan{1} =
  \begin{pmatrix}
    0 \\ 0 \\ 0 \\ 1
  \end{pmatrix}, \quad 
  \bspan{2} =
  \begin{pmatrix}
    0 \\ 0 \\ 1 \\ 0
  \end{pmatrix}.
\end{equation*}
W use $|i|_C$ to denote the number of columns of $C$ equal to $\bspan{i}$.
We may write $|i|_C$ as $|i|$ when the code $C$ is implied in the context.
For example, the $(8,4)$ code
\begin{equation*}
  C = 
  \begin{pNiceMatrix}[last-row]
  0&  0 & 0 & 0 & 0 & 0 & 0 & 0\\
    0&0 & 0 & 0 & 1 & 1 & 1 & 1\\
    0&0 & 1 & 1 & 0 & 0 & 1 & 1\\
    0&1 & 0 & 1 & 0 & 1 & 0 & 1\\
  \bspan{0}&  \bspan{1} & \bspan{2} & \bspan{3} & \bspan{4} & \bspan{5} & \bspan{6} & \bspan{7}
  \end{pNiceMatrix}
\end{equation*}
has the $i$th column of type $\bspan{i}$ and $|i|=1$ for $i=0,1,\ldots,7$.

The column types of $C$ has been used in
literature~\cite{cordaro1967optimum,chen2013optimal}. For example,
Chen, Lin and Moser~\cite{chen2013optimal} compared different codes by
induction in $n$, i.e., increasing one column a time, for studying
$(n,3)$ codes and linear $(n,4)$ codes. In this paper, we develop a
new technique to compare two $(n,4)$ codes with difference in some rows of one or
two columns. The following facts about $(n,4)$
codes are
straightforward~\cite{cordaro1967optimum,chen2013optimal}. First, flipping all the bits in a column does not change the decoding
performance.
 Second, codes with only row interchanging have the same ML decoding performance.
 Third,  column permutations of $C$ do not affect
the decoding performance.  Due to these facts, we define the following
equivalence relation to simplify our discussion.

\begin{definition}
 We say two $ (n,4) $ codes $ C $ and $C' $ are \emph{equivalent} if $ C' $ can be obtained by interchanging rows, interchanging columns and flipping all the bits in some  columns of $C $.
\end{definition}

Two equivalent codes have the same decoding performance universally.  A linear
code is equivalent to its coset codes. By column flipping, a code with a column $\bspan{i}$, $8\leq i \leq 15$, is equivalent to the code obtained by changing the column $\bspan{i}$ to $\bspan{15-i}$. Therefore, a code is optimal if it is optimal among all the codes with columns only from $\bspan{0},\bspan{1},\ldots,\bspan{7}$.
Among codes with only these eight types $\bspan{0},\bspan{1},\ldots,\bspan{7}$,
the linear codes have only the column types $\bspan{0}$, $\bspan{3}$, $\bspan{5}$ or $\bspan{6}$:
\begin{equation*}
\bspan{0}  = 
\begin{pmatrix}
  0 \\ 0 \\ 0 \\ 0
\end{pmatrix},\quad
\bspan{3} =
\begin{pmatrix}
0 \\0\\1\\1
\end{pmatrix}, \quad 
\bspan{5} =
\begin{pmatrix}
0 \\1\\0\\1
\end{pmatrix}, \quad 
\bspan{6} =
\begin{pmatrix}
0\\1\\1\\0
\end{pmatrix}.
\end{equation*}
We call $\bspan{3}$, $\bspan{5}$ and $\bspan{6}$ the \emph{linear types}. Codes with only the linear type columns are also called the weak flip codes in \cite{chen2013optimal}. 
\begin{definition}\label{def:linear}
  We use $ C(n_3,n_5,n_6)$ to represent an $(n,4)$ linear code $ C$ with $|3|_ C = n_3$, $|5|_C=n_5$, $|6|_ C=n_6$ and $n_3+n_5+n_6=n$.
\end{definition}
An $(n,4)$ linear code must have at least two distinct linear types   since otherwise, the four codewords cannot form a subspace of dimension four and thus is not linear.
We define the following three classes of nonlinear codes to better present our results.
\begin{definition}\label{def:class1}
 An $(n,4)$ code $C$ is said to be \emph{Class-I} if $|1|$ is odd, $|3|,|5|,|6|$ are of the same parity, and $|1|+|3|+|5|+|6|=n$.
\end{definition}
\begin{definition}\label{def:class2}
	An $ (n,4) $ code is said to be \emph{Class-II} if $|1|>0$, $|1|+|3|+|5|+|6|=n$ and satisfies one of the following conditions:
	\begin{enumerate}
		\item[a)] $ |1| $ and $ |3| $ are even and $ |5|$ and $|6| $ are odd;
		\item[b)] $ |1|,\ |5|$ and $ |6| $ are even and $ |3| $ is odd.
	\end{enumerate}
\end{definition}

\begin{table}[t]
	\centering
	\caption{Comparison of linear, Class-I and Class-II  codes. Note that for a linear code, at least two of $ |3| $, $ |5| $ and $ |6| $ should be positive.}\label{tab:class123}

        \begin{tabular}{ccccccccccc}
          \hline
          \rowcolor[HTML]{EFEFEF} 
          class & subclass  & blocklength &$|0|$& $|1|$  & $|2|$  & $|3|$    & $|4|$  & $|5|$     & $|6|$     & $|7|$                        \\ \hline
          linear	&          &  &$\geq 0$    & $ 0 $     & $0$    & $ \geq 0 $ & $0$  & $ \geq 0 $    & $ \geq 0 $    & $0$  \\\hline
               &\cellcolor[HTML]{EFEFEF}a         & \cellcolor[HTML]{EFEFEF}odd               &\cellcolor[HTML]{EFEFEF}$0$      & \cellcolor[HTML]{EFEFEF}odd     & \cellcolor[HTML]{EFEFEF}$0$    & \cellcolor[HTML]{EFEFEF}even   & \cellcolor[HTML]{EFEFEF}$0      $     & \cellcolor[HTML]{EFEFEF}even    & \cellcolor[HTML]{EFEFEF}even    &\cellcolor[HTML]{EFEFEF}$0$  \\ \cline{2-11} 
          
          \multirow{-2}{*}{I}  & b              & even     &$0$  & odd                         & $0$                        & odd                        & $0$                        & odd                         & odd                         & $0$         \\ \hline
               &\cellcolor[HTML]{EFEFEF}a        & \cellcolor[HTML]{EFEFEF}even        &\cellcolor[HTML]{EFEFEF}$0$   &\cellcolor[HTML]{EFEFEF}even, $>0$    & \cellcolor[HTML]{EFEFEF}$0$    & \cellcolor[HTML]{EFEFEF}even   & \cellcolor[HTML]{EFEFEF}$0$    & \cellcolor[HTML]{EFEFEF}odd     & \cellcolor[HTML]{EFEFEF}odd     &\cellcolor[HTML]{EFEFEF}$ 0$           
          \\ \cline{2-11} 	\multirow{-2}{*}{II} 
               &b & odd    &$0$   &  even, $>0$ &  $0$ &  odd &  $0$ & even &  even &  $0$        \\ \hline
        \end{tabular}
	\end{table}

We compare linear, Class-I and Class-II $(n,4)$ codes in Table \ref{tab:class123}, where we name  each case of Class-I and Class-II codes as a subclass. Among the operations that preserve the code equivalence, column interchanging does not change $|i|$, but both row interchanging and column flipping affect $|i|$. For two equivalent $(n,4)$ codes $C$ and $C'$  with only columns in $\{\bspan{0},\bspan{1},\ldots,\bspan{7}\}$, we always have
\begin{IEEEeqnarray*}{rCl}
  |0|_C & = & |0|_{C'},  \\
  \{|1|_C,|2|_C,|4|_C,|7|_C\} & = & \{|1|_{C'},|2|_{C'},|4|_{C'},|7|_{C'}\},\\
  \{|3|_C,|5|_C,|6|_C\} & = & \{|3|_{C'},|5|_{C'},|6|_{C'}\}.
\end{IEEEeqnarray*}
In other words, we cannot change a linear type to a nonlinear type while preserving the equivalence.
Hence, there are no equivalent codes belonging to two different classes (or subclasses) among linear, Class-I and Class-II codes. 

\subsection{Main Results about Optimal $(n,4)$ Codes}\label{sec:mainresults}
We give the main results about the optimal $(n,4)$ codes.  Firstly, we derive a relatively small set that contains an optimal code in the following theorem.
\begin{theorem}[Optimal Code Inclusive Set]\label{thm:two}
	For any BSC$(\epsilon)$, $0<\epsilon<\frac{1}{2}$, there exists an optimal $(n,4)$ code in the set formed by all the linear codes and Class-I codes.
\end{theorem}
Note that in Theorem~\ref{thm:two}, it is unknown whether the optimal codes for different crossover probability $ \epsilon $ are the same or not.

By comparing two linear codes with only one column different, %
 the following theorem presents  all the  linear codes  which are universally strictly better than any other non-equivalent linear codes.

 \begin{theorem}[All Optimal Linear Codes]\label{thm:linearopt}
   Considering linear $(n,4)$ codes for BSCs, the following properties are universally satisfied: %
   \begin{enumerate} 
   \item if  $n =  3k-1$ for some $k\in \mathbb N^+$,  $C(k,k,k-1)$ is  optimal among all linear codes;
   \item if  $n =  3k$ for some $k\in \mathbb N^+$. 
     When $k\geq 2$,  $C(k+1,k+1,k-2)$ and $C(k+1,k,k-1)$ are  optimal among all linear codes.
     When $ k=1 $, i.e. $ n=3 $, $C_A$,
     $C(1,1,1)$ and $C(1,2,0)$  are  optimal among all linear codes, where 
     \begin{equation}\label{eq:0columnlinear}
     	C_A \triangleq \begin{pmatrix}
     		 0 & 0 & 0\\
     		0 & 1 & 0\\
     		1 & 0 & 0\\
     		1 & 1 & 0
     	\end{pmatrix};
     \end{equation}
   \item if  $n =  3k+1$ for some $k\in \mathbb N^+$,  $C(k+1,k,k)$ and $C(k+2,k,k-1)$ are    optimal among all linear codes.

   \end{enumerate}
   Moreover, these $(n,4)$ linear codes are universally strictly better than any other $(n,4)$ linear codes that are not equivalent to them.
 \end{theorem}

 \begin{table}[t]
   \centering
   \caption{All the Optimal Linear $(n,4)$ Codes, where $C(i,j,k)$ is defined in Definition~\ref{def:linear} and $C_A$ is given in \eqref{eq:0columnlinear}.}\label{table:optimal-linear}
   \begin{tabular}{c|c}
     \hline
     \rowcolor[HTML]{EFEFEF} 
     $n$      & optimal linear $(n,4)$ codes                               \\ \hline
     $3k-1$, $k \geq 1$ & $C(k,k,k-1)$                        \\ \hline
     \rowcolor[HTML]{EFEFEF} 
     $3$   & $C_A$, $C(1,1,1)$ and $C(1,2,0)$           \\ \hline
     $3k$, $k\geq 2$    & $C(k+1,k+1,k-2)$ and $C(k+1,k,k-1)$ \\ \hline
     \rowcolor[HTML]{EFEFEF} 
		$3k+1$, $k \geq 1$ & $C(k+1,k,k)$ and $C(k+2,k,k-1)$     \\ \hline
	\end{tabular}
      \end{table}

In Table \ref{table:optimal-linear}, the representative optimal linear codes are given with the  equivalent codes omitted. Note that these optimal linear codes are optimal for any crossover probability. 
When the crossover probability is small, 
the optimal code among all the linear $(n,4)$ binary codes
have been claimed for each block length $n$ in~\cite{cordaro1967optimum}.
When $n\neq 3$, the optimal codes given in Theorem~\ref{thm:linearopt} are the same as the ones claimed in \cite{cordaro1967optimum}.
When $n=3$, we obtain two more optimal codes $C_A$ and $C(1,1,1)$ than \cite{cordaro1967optimum}. In \cite{chen2013optimal}, an $(n,4)$ code that is optimal among all the linear codes is derived inductively for each blocklength $n\geq 2$. We get more optimal codes than them when $n=3k$ or $3k+1$ for a positive integer $k$. 

The following theorem characterizes a set containing all optimal codes for any given blocklength.
\begin{theorem}[All Optimal Codes]\label{thm:nbig3}
  For BSCs, the following properties about $(n,4)$ codes are universally satisfied:
  \begin{enumerate}
  \item When $n=2$, all the optimal codes are equivalent to $C(1,1,0)$;
  \item When $n=3$, %
    the set composed of
     $\begin{pmatrix}
			0&0&0\\
			0&1&1\\
			0&0&1\\
			1&1&1
                      \end{pmatrix}$,
                      $\begin{pmatrix}
                         0&0&0\\
                         0&0&1\\
                         0&1&1\\
                         1&1&0
                       \end{pmatrix}$, $C_A$, $C(1,1,1)$ or $C(1,2,0)$ and their equivalent codes is the set of all the optimal codes,
                    where the first two codes are neither linear, Class-I nor Class-II;
                     \item
                       When $n>3$, the set formed by the optimal linear codes, Class-I codes, Class-II codes  and their equivalent codes contains all the optimal codes.
                     \end{enumerate}
                   \end{theorem}

For $n=2$ or $3$, all the optimal codes are characterized in Theorem~\ref{thm:nbig3}, and are universal. It is somehow surprising that when $n=3$, there exist optimal codes that are nonlinear or include an all-zero column. When $n>3$, Theorem~\ref{thm:nbig3} says that an optimal code must be equivalent to a linear, Class-I, or Class-II code.

\begin{theorem}\label{thm:n8}
  For $4\leq n \leq 300$, all the optimal $ (n,4) $ codes are equivalent to linear codes. 
\end{theorem}

Together with Theorem~\ref{thm:linearopt} and Theorem~\ref{thm:nbig3}, all the optimal $(n,4)$ codes for $n$ up to $300$ are characterized. 
When $n\leq 8$, the above result can be proved by the theoretical analysis of Class-I codes.
For the larger blocklength $n$, computer evaluations are used.

\section{Major Results about Code Comparison}\label{sec:approach}
In this section, we give the major results of comparing two
$(n,4)$ codes, derive some further results, and prove Theorems \ref{thm:two}, \ref{thm:linearopt}, \ref{thm:nbig3} and  \ref{thm:n8}.

\subsection{Comparison Results of  Two  $(n,4)$ Codes}
\label{sec:dec_formulation}
We present some comparison results of the ML decoding performance between two $(n,4)$ codes which differ in one or two columns, but leave the technical proofs to the following sections.
We give an example to illustrate how to obtain a new  code with more linear-type columns  for an $(n,4)$ code by sequentially changing columns of the code. This motivates us to compare a general code with a linear code by a series of comparisons of codes with a small difference.
\begin{definition}[Flipping Operation]
  For a binary $(n,4)$ code represented as a $4\times n$ binary matrix, a flipping operation $f_{i,j}$ is defined to map this code to a new code obtained by flipping the $(i,j)$ entry of the code. %
\end{definition}

\begin{example}\label{example:opt1}
	Given a code $C$  with codewords $\cv c_1,\cv c_2,\cv c_3,\cv c_4$, a linear code $ C'$ with codewords $\cv c_1,\cv c_2,\cv c_3',\cv c_4'$ is found  by a series of flipping operations, as presented in \eqref{eq:optcha2a}, \eqref{eq:optcha2b} and \eqref{eq:optcha2c}.  
	\begin{align}
	C \triangleq 
	\begin{pmatrix}
	\cv c_1 \\
	\cv c_2\\
	\cv c_3 \\
	\cv c_4
	\end{pmatrix}\triangleq 
	\begin{pmatrix}
	0 & 0 & 0 & 0 & 0 & 0 \\
	0 & 0 & 0 & 1 & 1 & 1 \\
	\color{blue}\fbox{$0$} & 1 & 1 & 0 & 0 & 1 \\
	1 & 0 & 1 & 0 & 1 & 0 
	\end{pmatrix}
	& \stackrel{ f_{3,1}}{\longrightarrow}
	\begin{pmatrix}
	0 & 0 & 0 & 0 & 0 & 0 \\
	0 & 0 & 0 & 1 & 1 & 1 \\
	\color{red}\fbox{$1$} & 1 & 1 & 0 & 0 & 1 \\
	1 & 0 & 1 & \color{blue}\fbox{$0$} & 1 & 0 
	\end{pmatrix} 
	\label{eq:optcha2a}
	\\
	& \stackrel{ f_{4,4}}{\longrightarrow}
	\begin{pmatrix}
	0 & 0 & 0 & 0 & 0 & 0 \\
	0 & 0 & 0 & 1 & 1 & 1 \\
	1 & 1 & 1 & 0 & 0 & 1 \\
	1 & \color{blue}\fbox{$0$} & 1 & \color{red}\fbox{$1$} & 1 & 0 
	\end{pmatrix} 
	\label{eq:optcha2b}\\
	& \stackrel{ f_{4,2}}{\longrightarrow}
	\begin{pmatrix}
	0 & 0 & 0 & 0 & 0 & 0 \\
	0 & 0 & 0 & 1 & 1 & 1 \\
	1 & 1 & 1 & 0 & 0 & 1 \\
	1 & \color{red}\fbox{$1$} & 1 & 1 & 1 & 0 
	\end{pmatrix} \triangleq 
	\begin{pmatrix}
	\cv c_1 \\
	\cv c_2\\
	\cv c_3' \\
	\cv c_4'
	\end{pmatrix} \triangleq C'. \label{eq:optcha2c}
	\end{align}
	Specifically, we have $C' = (f_{4,2} \circ f_{4,4} \circ f_{3,1}) (C).$
\end{example}

In the above example, if the new code obtained by each flipping operation is better than the old one,  a better linear code $C'$ than $C$ is found.
In the following theorem, we show that a better code can be obtained by changing one column of $\bspan{0}$ in an $(n,4)$ code to a nonzero column.

\begin{theorem} \label{thm:0column} Consider an $(n,4)$ code $C$ with $|0|\geq 1$, and code $C'$ obtained by changing one column $\bspan{0}$ in $C$ to a  column $\bspan{s'}$.
  \begin{enumerate}
  \item If $C$ is equivalent to an $(n,4)$ code $C_0$ with $ |0|_{C_0}+ |5|_{C_0}+ |6|_{C_0}= n$ and $|5|_{C_0}, |6|_{C_0}$ odd, we have $\lambda_{C'}=\ls_C$ for any $s'\in \{0,1,\dots, 15\}$.
  \item If $C$ is not equivalent to $C_0$, then $\lambda_{C'}>\ls_C$ for some $s'\in\{3,5,6\}$. %
  \end{enumerate}
\end{theorem}

\begin{IEEEproof}
  See~\S\ref{sec:0column}.
\end{IEEEproof}

\begin{corollary}\label{cor:0col}
  Consider an $(n,4)$ code $C$ with $|0|\geq 2$. Then
  there exists a code $C'$ which is obtained by changing two $\bspan{0}$ columns in $C$ to two nonzero linear-type columns such that $\ls_{C'}>\ls_C$.
\end{corollary}

\begin{remark}
  Theorem~\ref{thm:0column} and Corollary~\ref{cor:0col} apply to codes with columns of type $\bspan{15}$ as well, which become $\bspan{0}$ after column flipping. 
\end{remark}

\begin{IEEEproof}
  Let $C_0$ be an $(n,4)$ code with $ |0|_{C_0}+ |5|_{C_0}+ |6|_{C_0}= n$ and $|5|_{C_0}, |6|_{C_0}$ odd. If $C$ is not equivalent to $C_0$, by Theorem~\ref{thm:0column}-2), there exists a code $\tilde{C}$ which is obtained by changing one $\bspan{0}$ column in $C$ to a linear-type column such that $\ls_{\tilde{C}}>\ls_C$. Let $C'$ be the code 
obtained by changing one $\bspan{0}$ column in $\tilde{C}$ to any linear-type column. By Theorem~\ref{thm:0column}, $\ls_{C'}\geq \ls_{\tilde{C}} > \ls_{C}$.

If $C$ is equivalent to $C_0$, by Theorem~\ref{thm:0column}-1), we have
$\ls_{\tilde C}=\ls_C$ with $\tilde C$ being obtained by changing one
$\bspan{0}$ column in $C$ to $\bspan{5}$. Then $\tilde C$ is not
equivalent to $C_0$, and by Theorem~\ref{thm:0column}-2), there 
exists a code $C'$ obtained by changing one $\bspan{0}$ column in
$\tilde C$ to some linear-type column such that
$\ls_{C'}>\ls_{\tilde C}$.  Thus $\ls_{C'}> \ls_{C}$.

For both cases,  we have $\ls_{C'}>\ls_C$ where $C'$ is obtained by changing two $\bspan{0}$ columns in $C$ to two linear-type columns.
\end{IEEEproof}

For an $(n,4)$ code $C$ of columns
$\bspan{0},\bspan{1},\ldots,\bspan{7}$, we have
\begin{IEEEeqnarray}{rCl}
  w(\vv c_1 \oplus \vv c_2) & = & |4| + |5| + |6| + |7|, \label{eq:w1} \\
  w(\vv c_1 \oplus \vv c_3) & = & |2| + |3| + |6| + |7|, \label{eq:w2}\\
  w(\vv c_1 \oplus \vv c_4) & = & |1| + |3| + |5| + |7|, \label{eq:w3}\\
  w(\vv c_2 \oplus \vv c_3) & = & |2| + |3| + |4| + |5|, \label{eq:w4}\\
  w(\vv c_3 \oplus \vv c_4) & = & |1| + |2| + |5| + |6|. \label{eq:w5}
\end{IEEEeqnarray}
In the following theorem, we give the sufficient conditions so that a  better code is obtained by a certain flipping operation. See an illustration in Fig.~\ref{fig:dong1}.

\begin{figure}[t]
  \begin{equation*}
    \begin{pNiceMatrix}[first-col,last-row]
      \vv c_1 & 0 & * & \cdots & *\\
      \vv c_2 & 0 & * & \cdots & *\\
      \vv c_3 & \fbox{0} & * & \cdots & *\\
      \vv c_4 & 1 & * & \cdots & *\\
      & \bspan{1} & 
    \end{pNiceMatrix}
    \xrightarrow{f_{3,1}}
    \begin{pNiceMatrix}[first-col,last-row]
      \vv c_1 & 0 & * & \cdots & *\\
      \vv c_2 & 0 & * & \cdots & *\\
      \vv c_3' & \fbox{1} & * & \cdots & *\\
      \vv c_4 & 1 & * & \cdots & *\\
      & \bspan{3} & 
    \end{pNiceMatrix}
  \end{equation*}
	\caption{An example of Theorem \ref{thm:even}. $ C $ has codewords $ \cv c_1,\dots, \cv c_4 $ and $ C' $ has codewords $ \cv c_1,\cv c_2,\cv c_3', \cv c_4 $, where $\vv c_3'$ is obtained by flipping the first bit of $\vv c_3$. Theorem \ref{thm:even} says $ \lambda_{ C'} \geq \lambda_{ C}$ if $ w(\cv c_3\oplus \cv c_4)$ is even. }\label{fig:dong1}
\end{figure}

\begin{theorem}[$1$-bit Flip] \label{thm:even}
	Consider an $(n,4)$ code $ C$ of codewords $\vv c_1,\vv c_2,\vv c_3,\vv c_4$, and with a column being  type $\bspan{1}$ and 
        $w(\vv c_3\oplus \vv c_4)$ even. Let $ C'$ be the code obtained by replacing a column of type $\bspan{1}$ in $ C$ by $\bspan{3}$ (or  flipping the third entry in a column of type $\bspan{1}$ in $C$). Then $\lambda_{ C'}\geq \lambda_C$.
        Moreover, when $C$ has only columns of types $\bspan{0}, \bspan{1},\ldots, \bspan{7}$,  $\lambda_{ C'} = \lambda_C$ if and only if 
		one of the following conditions are satisfied
		\begin{enumerate}
			\item[i)] $w(\vv c_1\oplus \vv c_3)$ and $w(\vv c_2\oplus \vv c_3)$ are both odd;
			\item[ii)] $ |1|=1$, $|2|=|4|=|5|=|7|=0 $ and, $ |3|$ and $|6| $ are both odd;
			\item[iii)] $ |1|=1$, $|2|=|4|=|6|=|7|=0 $ and,  $ |3| $ and $ |5| $ are both odd.
                        \end{enumerate}
\end{theorem}
\begin{IEEEproof}
	See  \S\ref{sec:proof:even}.
\end{IEEEproof}
\begin{remark}\label{rm:code1}
  Theorem \ref{thm:even} can be applied to other codewords as well by applying the code equivalence relation.   
		  Consider an $(n,4)$ code $C$ of codewords $\vv c_1,\vv c_2,\vv c_3,\vv c_4$ with $w(\vv c_s\oplus \vv c_t)$ being even for certain $1\leq s, t \leq 4$ with $s\neq t$, and with a column of type $\bspan{2^{4-s}}$, i.e. a column with only the $s$-th entry being $1$.
	Let $C'$ be the code obtained by flipping the $t$-th entry in a column of type $\bspan{2^{4-s}}$ in $C$, or we say 
	replacing a column of type $\bspan{2^{4-s}}$ in $C$ by $\bspan{2^{4-s}+2^{4-t}}$. Since $C$ and $C'$ are respectively equivalent to the codes defined in Theorem~\ref{thm:even}, we have $\lambda_{C'}\geq \lambda_C$.
	The sufficient and necessary condition for $\lambda_{ C'}= \lambda_{C}$   can also be obtained by the equivalence relation, as illustrated in the following example.
\end{remark}
\begin{example}
  Consider an $(n,4)$ code $C$ of codewords $\vv c_1,\cv c_2,\cv c_3,\vv c_4$, and with the first column being  type $\bspan{1}$.
  Suppose $w(\vv c_2\oplus \vv c_4)$ is even. Let $ C'$ be the code obtained by replacing the first column of $C$ by $\bspan{5}$. 
  Suppose $\cv c_2'$ is obtained by flipping the first entry of $\cv c_2$. Let
  \begin{align*}
    C =\begin{pmatrix}
         \cv c_1\\
         \cv c_2\\
         \cv c_3\\
         \cv c_4
       \end{pmatrix},
    C_0 =\begin{pmatrix}
           \cv c_1\\
           \cv c_3\\
           \cv c_2\\
           \cv c_4
         \end{pmatrix},
    C' =\begin{pmatrix}
          \cv c_1\\
          \cv c_2'\\
          \cv c_3\\
          \cv c_4
        \end{pmatrix},
    C_0' =\begin{pmatrix}
            \cv c_1\\
            \cv c_3\\
            \cv c_2'\\
            \cv c_4
          \end{pmatrix},
  \end{align*}
  where $ C_0$ is obtained by exchanging the second row and the third row of $C$ and $ C_0'$ is obtained by exchanging the second row and the third row of $C'$, which induces
  $\lambda_{ C} = \lambda_{ C_0}$ and $\lambda_{ C'} = \lambda_{ C_0'}$. As $ C_0$ satisfies the conditions in Theorem \ref{thm:even},  $\lambda_{C_0} = \lambda_{C'_0}$ if and only if one of the following conditions holds
  \begin{enumerate}
  \item[i)] $w(\vv c_1\oplus \vv c_2)$ and $w(\vv c_3\oplus \vv c_2)$ are odd;
  \item[ii)] $ |1|_{ C_0}=1$, $|2|_{ C_0}=|4|_{C_0}=|5|_{C_0}=|7|_{C_0}=0 $, $ |3|_{ C_0}$ and $|6|_{ C_0} $ are odd;
  \item[iii)] $ |1|_{C_0}=1,\ |2|_{ C_0}=|4|_{ C_0}=|6|_{ C_0}=|7|_{ C_0}=0 $, $ |3|_{ C_0} $ and $ |5|_{C_0} $ are odd.
  \end{enumerate}
  Due to $C_0$ is obtained by exchanging the second row and the third row of $C$, we have
  \begin{align*}
    |1|_{C} = |1|_{C_0} , |2|_{C} =  |4|_{ C_0} ,  |3|_{C} =  |5|_{C_0} ,   |4|_{C} =  |2|_{C_0} ,  |5|_{C} =  |3|_{ C_0} ,  
    |6|_{ C} =  |6|_{C_0} ,  |7|_{C} =  |7|_{C_0} .
  \end{align*}
  Hence we have $\lambda_{C} = \lambda_{C'}$ if and only if one of the following conditions holds
  \begin{enumerate}
  \item[i)] $w(\vv c_1\oplus \vv c_2)$ and $w(\vv c_3\oplus \vv c_2)$ are odd;
  \item[ii)] $ |1|_{C}=1$, $|2|_{C}=|4|_{C}=|3|_{C}=|7|_{C}=0 $, $ |5|_{C}$ and $|6|_{C} $ are odd;
  \item[iii)] $ |1|_{C}=1,\ |2|_{ C}=|4|_{ C}=|6|_{C}=|7|_{C}=0 $, $ |3|_{C} $ and $ |5|_{ C} $ are odd.
  \end{enumerate}
\end{example}

Theorem \ref{thm:even} gives a sufficient condition to find a  better code with one more linear-type column. Using Theorem \ref{thm:even}, for Example \ref{example:opt1}, we can verify that the flipping operations $f_{3,1}$, $f_{4,4}$ and $f_{4,2}$ can all lead to a  better code  and thus the code $C'$ is  better  than $C$. 
Based on the inequality $\ls_{C'}\geq \ls_C$ in Theorem \ref{thm:even}, we can verify that for a code without columns of type $\bspan{7}$, there exists a   better code with only the linear-type and type $\bspan{1}$ columns in  the following corollary.
\begin{corollary}\label{cor:twopo}
  Consider an $(n,4)$ code $C$ with $\sum_{i=1}^6|i|_C=n$. There exists a code $C'$ with $\lambda_{C'}\geq \lambda_C$ and $|1|_{C'} + |3|_{C'}+|5|_{C'}+|6|_{C'}=n$.
\end{corollary}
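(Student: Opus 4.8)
The plan is to reach the claimed form by a sequence of applications of Theorem~\ref{thm:even}, each converting one \emph{weight-one} column (type $\bspan{1}$, $\bspan{2}$, or $\bspan{4}$) into a \emph{weight-two} column (type $\bspan{3}$, $\bspan{5}$, or $\bspan{6}$), followed by a single row permutation. A caveat shapes the argument: Theorem~\ref{thm:even} by itself can get stuck --- for instance the $(1,2)$ code with single column $\bspan{2}$ admits no valid application --- so the right intermediate target is not ``no column of type $\bspan{2}$ or $\bspan{4}$'' but ``all weight-one columns equal.'' Once the latter holds, a permutation of the rows indexed by $\{2,3,4\}$ turns every weight-one column into type $\bspan{1}$; such a permutation leaves $\lambda$ unchanged, permutes $\bspan{3},\bspan{5},\bspan{6}$ among themselves, and fixes the (absent) type $\bspan{7}$, so the resulting code lies in the desired set.

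Since $\sum_{i=1}^{6}|i|_C=n$, the given $C$ is already in normal form: every column is one of $\bspan{1},\dots,\bspan{6}$, and $\vv c_1=\bzero$. I would induct on $m=|1|_C+|2|_C+|4|_C$, the number of weight-one columns. If at most one of $|1|_C,|2|_C,|4|_C$ is positive (in particular if $m=0$), the weight-one columns are all equal and the row permutation above produces the required $C'$. Otherwise at least two of these counts are positive. Put $p_i=w(\vv c_i)\bmod 2$, so $p_1=0$ and $w(\vv c_s\oplus\vv c_t)$ is even iff $p_s=p_t$. By Theorem~\ref{thm:even}, a column of type $\bspan{2^{4-s}}$ --- which is $\bspan{1},\bspan{2},\bspan{4}$ for $s=4,3,2$ --- can be replaced by $\bspan{2^{4-s}+2^{4-t}}$ whenever $p_s=p_t$ for some $t\neq s$; this is blocked only when $(p_2,p_3,p_4)$ equals $(1,0,0)$, $(0,1,0)$, or $(0,0,1)$, for $s=2,3,4$ respectively. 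These three patterns are pairwise incompatible, so at most one weight-one type is blocked, and since two of the counts are positive, some weight-one column can be converted. The replacement column has Hamming weight $2$; after flipping it into normal form if needed, it is of type $\bspan{3}$, $\bspan{5}$, or $\bspan{6}$. Hence the new code $C''$ has $\lambda_{C''}\geq\lambda_C$, still satisfies $\sum_{i=1}^{6}|i|_{C''}=n$, and has $m-1$ weight-one columns; the inductive hypothesis applied to $C''$ completes the argument.

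The step I expect to be the main obstacle is the key claim in the induction: that whenever two or more weight-one types occur, Theorem~\ref{thm:even} applies to one of them and the outcome introduces no column of type $\bspan{7}$. This reduces to the finite parity case analysis over $(p_2,p_3,p_4)\in\{0,1\}^3$ sketched above --- in particular, confirming that the three ``blocked'' patterns are mutually exclusive --- together with the elementary fact that any weight-two vector in $\{0,1\}^4$ normalizes to one of $\bspan{3},\bspan{5},\bspan{6}$. Given that, the induction on $m$ and the concluding row permutation are routine.
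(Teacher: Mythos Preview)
Your proof is correct and essentially the same as the paper's: both iterate Theorem~\ref{thm:even}, using a parity argument to show that whenever at least two weight-one column types are present one of them can be converted to a weight-two column, and finish with a row permutation once at most one weight-one type remains. The paper phrases the parity step as ``among $w(\vv c_2\oplus\vv c_3)$, $w(\vv c_3\oplus\vv c_4)$, $w(\vv c_2\oplus\vv c_4)$ at least one is even,'' which is equivalent to your $(p_2,p_3,p_4)$ analysis (and avoids the extra column flip you need when $t=1$); one minor quibble is that your motivating ``$(1,2)$ code with single column $\bspan{2}$'' is not actually a code, since four codewords of length~$1$ cannot be distinct, though this does not affect the argument.
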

\begin{remark}
  An $(n,4)$ code with only the linear-type and type $\bspan{1}$
  columns is equivalent to some code with only the linear-type and type
  $\bspan{2}$ (or $\bspan{4}$) columns. So the corollary still holds
  if $|1|_{C'} + |3|_{C'}+|5|_{C'}+|6|_{C'}=n$ is replaced by
  $|2|_{C'} + |3|_{C'}+|5|_{C'}+|6|_{C'}=n$ or
  $|4|_{C'} + |3|_{C'}+|5|_{C'}+|6|_{C'}=n$.
\end{remark}
\begin{IEEEproof}
  In this proof, we write $|i|_C$ as $|i|$. 
  Suppose at least two of $|1|, |2|, |4|$ are positive, since otherwise, the proof is done by interchanging rows of $C$. We argue the case that $|1|$ and $|2|$ are positive. Other cases can be converted to this case by interchanging rows. Write
\begin{IEEEeqnarray*}{rCl}
  w(\vv c_2\oplus \vv c_3) & = & |2| + |3| + |4| + |5|,\\
  w(\vv c_3\oplus \vv c_4) & = & |1| + |2| + |5| + |6|,\\
  w(\vv c_2\oplus \vv c_4) & = & |1| + |3| + |4| + |6|.
\end{IEEEeqnarray*}
We claim that one of the above three weights must be even. Assume $w(\vv c_2\oplus \vv c_3)$ is odd. Then $|3|+|4|$ and $|2|+|5|$ are of different parity, so that one of $w(\vv c_3\oplus \vv c_4)$ and $w(\vv c_2\oplus \vv c_4)$ must be even. Therefore, one of the following three is possible when $|1|$ and $|2|$ are positive:
\begin{enumerate}
\item If $w(\vv c_2\oplus \vv c_3)$ is even,  Theorem~\ref{thm:even} implies a  better code with $|2|$ smaller and $|6|$ bigger.
\item If $w(\vv c_3\oplus \vv c_4)$ is even,  Theorem~\ref{thm:even} implies a  better code with $|1|$ smaller and $|3|$ bigger.
\item If $w(\vv c_2\oplus \vv c_4)$ is even,  Theorem~\ref{thm:even} implies a  better code with $|1|$ smaller and $|5|$ bigger.
\end{enumerate}
As long as $|1|$ and $|2|$ are positive, the above step can be repeated. 
Thus, there exists a code $C'$ with at most one of $|1|_{C'}$ and $|2|_{C'}$ being positive that is better than  the code $C$. 

If there are still two of $|1|, |2|, |4|$ positive, using the same argument, we can obtain a better code $C'$ where at most one of $|1|_{C'}, |2|_{C'}, |4|_{C'}$ is positive and $\sum_{i=1}^6|i|_{C'}=n$. 
The corollary is proved by properly interchanging rows of $C'$.
\end{IEEEproof}
	We can refine the better code in Corollary \ref{cor:twopo} to a smaller subset of $(n,4)$ codes.
In the following corollary, we  show  that for a code with $|1|_{C} + |3|_{C}+|5|_{C}+|6|_{C}=n$ that is both non-Class-I and nonlinear, a  better linear or Class-I code always exists.
\begin{corollary}\label{cor:onepo}
  Consider a non-Class-I, nonlinear $(n,4)$ code $C$ with $|1|_{C} + |3|_{C}+|5|_{C}+|6|_{C}=n$. There exists an either linear or Class-I code $C'$ with $\lambda_{C'}\geq \lambda_C$ and $|1|_{C'}<|1|_C$.
\end{corollary}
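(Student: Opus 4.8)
The plan is to push $|1|_C$ down one column at a time by repeated applications of Theorem~\ref{thm:even}, each step keeping $\lambda$ from decreasing and keeping all columns in $\{\bspan 1,\bspan 3,\bspan 5,\bspan 6\}$, and to stop as soon as the code becomes linear or Class-I. First I would record that $|1|_C\geq 1$: if $|1|_C=0$ then, with $|3|_C+|5|_C+|6|_C=n$, every column is one of $\bspan 3=(0,0,1,1)^\top$, $\bspan 5=(0,1,0,1)^\top$, $\bspan 6=(0,1,1,0)^\top$, all of which have the fourth entry equal to the exclusive OR of the second and third, so $\vv c_1=\bzero$ and $\vv c_4=\vv c_2\oplus\vv c_3$; the four codewords then form a subspace and $C$ is linear, contradicting the hypothesis. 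A column of type $\bspan 1=\bspan{2^{4-4}}$ matches the case $s=4$ of Theorem~\ref{thm:even}, and taking $t=3,2,1$ the replacement column $\bspan{2^{4-4}+2^{4-t}}$ is $\bspan 3,\bspan 5,\bspan 9$; since $\bspan 9$ equals $\bspan 6$ after flipping all its bits (which does not affect decoding), each such move decreases $|1|$ by one, keeps all columns in $\{\bspan 1,\bspan 3,\bspan 5,\bspan 6\}$, and yields a code at least as good as $C$ — provided, respectively, that $w(\vv c_4\oplus\vv c_3)$, $w(\vv c_4\oplus\vv c_2)$, or $w(\vv c_4\oplus\vv c_1)$ is even.

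Next I would carry out the (routine) parity computation. Using $\vv c_1=\bzero$ and the four admissible column types, one gets
\begin{equation*}
 w(\vv c_4\oplus\vv c_1)=|1|+|3|+|5|,\quad w(\vv c_4\oplus\vv c_2)=|1|+|3|+|6|,\quad w(\vv c_4\oplus\vv c_3)=|1|+|5|+|6|.
\end{equation*}
The sum of these three numbers is congruent to $|1|\pmod 2$, so when $|1|$ is even at least one of them is even, and when $|1|$ is odd all three are odd exactly when $|3|\equiv|5|\equiv|6|\pmod 2$. Hence, unless $C$ lies in this \emph{parity-obstructed} family, some choice of $t$ makes the relevant weight even, and one application of Theorem~\ref{thm:even} produces a code that is at least as good, has $|1|$ smaller by one, and still has all columns in $\{\bspan 1,\bspan 3,\bspan 5,\bspan 6\}$.

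Finally I would iterate. Since $C$ is nonlinear and not Class-I, it is not parity-obstructed (this is exactly where the characterization of Class-I codes from \S\ref{sec:classI} enters), so the first reduction already gives a strictly smaller $|1|$ with $\lambda$ non-decreasing; I would then repeat as long as the current code is neither linear nor Class-I. Because $|1|$ strictly decreases and is bounded below by $0$, the process stops, and it can stop only at a linear code ($|1|=0$) or at a Class-I code; the resulting $C'$ then satisfies $\lambda_{C'}\geq\lambda_C$ and, since at least one step was taken, $|1|_{C'}<|1|_C$. I expect the crux — the only genuinely nontrivial point — to be the matching used above: that the codes with columns in $\{\bspan 1,\bspan 3,\bspan 5,\bspan 6\}$ from which no column of type $\bspan 1$ can be removed by Theorem~\ref{thm:even} are precisely the Class-I codes (or are contained among them). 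Establishing this is where the structural analysis of \S\ref{sec:classI} does the work, whereas the monotonicity of $\lambda$ is immediate from Theorem~\ref{thm:even} and the bookkeeping that each flip preserves the standard column form is routine.
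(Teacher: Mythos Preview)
Your proposal is correct and matches the paper's proof: both reduce $|1|$ one step at a time via Theorem~\ref{thm:even}, using the parity of $w(\vv c_4\oplus\vv c_j)$ for $j=1,2,3$ to locate an applicable move, and iterate until the code is linear or Class-I. The ``crux'' you flag needs nothing from \S\ref{sec:classI}: the parity-obstructed family is the Class-I family by Definition~\ref{def:class1} verbatim, and your own parity computation already verifies this, so the argument is complete as sketched (your explicit handling of the $t=1$ case via the $\bspan 9\leftrightarrow\bspan 6$ column flip is in fact more careful than the paper, which leaves this implicit).
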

\begin{IEEEproof}
  In this proof, we write $|i|_C$ as $|i|$. Since $C$ is nonlinear, $|1|>0$.
  We claim that at least one of the following three weights are even:
\begin{IEEEeqnarray}{rCl}
  w(\vv c_1\oplus \vv c_4) & = & |1| + |3| + |5|, \label{eq:3w:1}\\
  w(\vv c_3\oplus \vv c_4) & = & |1| + |5| + |6|, \label{eq:3w:2}\\
  w(\vv c_2\oplus \vv c_4) & = & |1| + |3| + |6|. \label{eq:3w:3}
\end{IEEEeqnarray}
Consider two cases of $|1|$:
\begin{itemize}
\item When $|1|$ is odd, $|3|$, $|5|$ and $|6|$ are not of the same parity since $C$ is not of Class-I, which implies at least one of \eqref{eq:3w:1}, \eqref{eq:3w:2}, \eqref{eq:3w:3} is even. By Theorem~\ref{thm:even}, there is a  better code $C_1$ with $|1|_{C_1}=|1|-1$ even and $|1|_{C_1} + |3|_{C_1}+|5|_{C_1}+|6|_{C_1}=n$.
\item 
When $|1|$ is even, if \eqref{eq:3w:1}, \eqref{eq:3w:2}, \eqref{eq:3w:3} are all odd, then $|3|+|5|$, $|5|+|6|$ and $|3|+|6|$ are all odd, which is not possible for any integers $|3|$, $|5|$, $|6|$.
Then at least one of \eqref{eq:3w:1}, \eqref{eq:3w:2}, \eqref{eq:3w:3} is even, 
Theorem~\ref{thm:even} implies a  better code $C_1$ with $|1|_{C_1}=|1|-1$ odd and $|1|_{C_1} + |3|_{C_1}+|5|_{C_1}+|6|_{C_1}=n$.
\end{itemize}
For both cases, a better code $C'$ with $|1|$ strictly smaller always exists if $C$ is non-Class-I, nonlinear.
By repeating the similar argument on $C_1$, we eventually obtain a  better code $C'$ which is either linear (i.e., $|1|_{C'}=0$) or is of Class-I so that \eqref{eq:3w:1}, \eqref{eq:3w:2}, \eqref{eq:3w:3} are all odd.
\end{IEEEproof}

Theorem~\ref{thm:even} and the above two corollaries help us to find a better code which has only one nonlinear type column $\bspan{1}$. These results can be applied on codes with type $\bspan{7}$ columns as well by flipping columns and interchanging rows to change $\bspan{7}$ columns to $\bspan{1}$, $\bspan{2}$ or $\bspan{4}$. However, this approach is not effective when the code has all $|1|$, $|2|$ and $|4|$ positive. In the following example, we see that when changing the $\bspan{7}$ column to $\bspan{1}$, the original $\bspan{1}$ column is changed to $\bspan{7}$. 
\begin{IEEEeqnarray*}{rCl}
  \begin{pmatrix}
    0 & 0 & 0 & 0 & 0 & 0 & 0\\
    0 & 0 & 0 & 1 & 1 & 1 & 1\\
    0 & 1 & 1 & 0 & 0 & 1 & 1\\
    1 & 0 & 1 & 0 & 1 & 0 & 1
  \end{pmatrix}
  &\xrightarrow{\text{flip col } 7}&
  \begin{pmatrix}
    0 & 0 & 0 & 0 & 0 & 0 & 1\\
    0 & 0 & 0 & 1 & 1 & 1 & 0\\
    0 & 1 & 1 & 0 & 0 & 1 & 0\\
    1 & 0 & 1 & 0 & 1 & 0 & 0
  \end{pmatrix} \\
  & \xrightarrow{\text{interchange row } 1, 4} &
  \begin{pmatrix}
    1 & 0 & 1 & 0 & 1 & 0 & 0\\
    0 & 0 & 0 & 1 & 1 & 1 & 0\\
    0 & 1 & 1 & 0 & 0 & 1 & 0\\
    0 & 0 & 0 & 0 & 0 & 0 & 1
  \end{pmatrix} \\
  & \xrightarrow{\text{flip col } 1, 3, 5}&
  \begin{pmatrix}
    0 & 0 & 0 & 0 & 0 & 0 & 0\\
    1 & 0 & 1 & 1 & 0 & 1 & 0\\
    1 & 1 & 0 & 0 & 1 & 1 & 0\\
    1 & 0 & 1 & 0 & 1 & 0 & 1
  \end{pmatrix} 
\end{IEEEeqnarray*}
Theorem~\ref{thm:odd} provides an approach to handle this case, which can find a better code by changing two  columns being of type $ \bspan{1} $ and $ \bspan{7} $ together to linear types. 

\begin{theorem}[$2$-bit Flip in One Row]  \label{thm:odd}
	Consider an $(n,4)$ code $C$ with two columns being of type $ \bspan{1} $ and $ \bspan{7} $. Let $C' $ be the code obtained by replacing two columns of type $ \bspan{1} $ and $ \bspan{7} $ in $C $ by $ \bspan{3} $ and $ \bspan{5} $ (i.e., flipping  the bits in the third row of these two columns). Then %
	$ \lambda_{C'}\geq \lambda_{C} $.
	Moreover, when $C$ has only columns of types $\bspan{0}, \bspan{1},\ldots, \bspan{7}$,   the equality holds if and only if $ |1|_{C}=|7|_{C}=1$, $|2|_{C}=|4|_{C}=|6|_{C}=0 $ and at least one of $ |3|_{C}$ and $ |5|_{C} $ is odd.
\end{theorem}
\begin{IEEEproof}
See \S\ref{sec:fliptwobits}.
\end{IEEEproof}

Fig.~\ref{fig:dong2} illustrates an example of the above theorem.

\begin{remark}
	Consider an $(n,4)$ code $C$ with two columns of the types  $\bspan{7}$ and $\bspan{2^{4-s}}$, $s\in \{2,3,4\} $, i.e. a column with only the $s$-th entry being $1$. Let $C'$ be the code obtained by replacing these two columns  with
	$\bspan{7-2^{4-t}}$ and   $\bspan{2^{4-s}+2^{4-t}}$ for $t\in \{2,3,4\} \setminus \{s\}$, or we say flipping  the bits in $t$-th row of these two columns. Since $C$ and $C'$ are respectively equivalent to the codes defined in Theorem~\ref{thm:odd},  $\lambda_{C'}\geq \lambda_{C}$. 	The sufficient and necessary condition for $\lambda_{ C'}= \lambda_{C}$   can also be obtained by Theorem~\ref{thm:even} using code equivalence.
\end{remark}

\begin{figure}[t]

    \begin{equation*}
    \begin{pNiceMatrix}[first-col,last-row]
      \vv c_1 & 0 & 0 & * & \cdots & *\\
      \vv c_2 & 0 & 1 & * & \cdots & *\\
      \vv c_3 & \fbox{0} & \fbox{1} & * & \cdots & *\\
      \vv c_4 & 1 & 1 & * & \cdots & *\\
      & \bspan{1} & \bspan{7} & 
    \end{pNiceMatrix}
    \xrightarrow{f_{3,1}}
    \begin{pNiceMatrix}[first-col,last-row]
      \vv c_1 & 0 & 0 & * & \cdots & *\\
      \vv c_2 & 0 & 1 & * & \cdots & *\\
      \vv c_3' & \fbox{1} & \fbox{0} & * & \cdots & *\\
      \vv c_4 & 1 & 1 & * & \cdots & *\\
      & \bspan{3} & \bspan{5} &
    \end{pNiceMatrix}
  \end{equation*}
  \caption{An example of Theorem \ref{thm:odd}. $ C $ has codewords $ \cv c_1, \cv c_2, \cv c_3, \cv c_4 $ and $ C' $ has codewords $ \cv c_1, \cv c_2, \cv c_3', \cv c_4 $. Theorem \ref{thm:odd} says $ \lambda_{ C'} \geq \lambda_{ C}. $}\label{fig:dong2}
\end{figure}

\subsection{Optimal Code Inclusive Set: Proof of Theorem \ref{thm:two}}

The aforementioned  Theorem~\ref{thm:0column}, Theorem~\ref{thm:even} and Theorem~\ref{thm:odd} induce a partial order on all the $(n,4)$ codes. Based on this partial order, we are able to prove Theorem \ref{thm:two} by showing that for any $(n,4)$ code $C$, there exists an either linear or Class-I code better than $C$.

\begin{IEEEproof}[Proof of Theorem \ref{thm:two}]
  Consider an arbitrary $(n,4)$ code $C$. By Theorem~\ref{thm:0column}, if $C$ has $\bspan{0}$ or $\bspan{15}$ columns, there exists an $(n,4)$ code $C'$ without $\bspan{0}$ and $\bspan{15}$ columns such that $\ls_{C'}\geq \ls_{C}$. 
  Suppose $C$ has no $\bspan{0}$ or $\bspan{15}$  columns.
  As column flipping does not change the ML decoding performance, we can obtain an equivalence code with $\sum_{i=1}^7|i|=n$ by flipping the columns of $C$. We then discuss $C$ with $\sum_{i=1}^7|i|=n$ in two cases.

  If $0<|7|\leq |1|+|2|+|4|$ in $C$, by Theorem~\ref{thm:odd}, there exists a code $C_2$ with $\lambda_{C_2} \geq \lambda_{C}$ and $\sum_{i=1}^6|i|=n$ obtained by replacing, one-by-one, pairs of columns of types $\bspan{7}$ and $\bspan{2^s}$ ($s=0,1,2$).
Following Corollary~\ref{cor:twopo}, there exists code $C_3$, no worse than $C_2$, where $|1|_{C_3}+ |3|_{C_3}+ |5|_{C_3}+|6|_{C_3}=n$. Then by Corollary~\ref{cor:onepo}, there exists an either linear or Class-I code $C_4$ such that $\lambda_{C_4} \geq \lambda_{C_3} \geq \lambda_{C_2} \geq \lambda_C$.

If $|1|+|2|+|4|< |7|$ in $C$, by Theorem~\ref{thm:odd}, there exists a better code $C_2'$ with $|1|+|2|+|4|=0$. By flipping columns, we can obtain a code $C_3'$ of the same performance of $C_2'$ that has $|1|_{C_3'}>0$ and $|1|_{C_3'}+|3|_{C_3'}+|5|_{C_3'}+|6|_{C_3'}=n$. Again, by Corollary~\ref{cor:onepo}, the proof is completed.
\end{IEEEproof}

\subsection{All Optimal Codes Among Linear Codes: Proof of Theorem \ref{thm:linearopt}}

Now we move on to compare two linear codes with only one column difference. 

\begin{theorem}\label{thm:linearcompare}
  Consider an $(n,4)$ linear code $C(n_3,n_5,n_6)$ with $n_3>0$.
  Let $ C'$ be the code obtained by replacing a column of type $\bspan{3}$ of $C$ by $\bspan{5}$.
  \begin{enumerate}
  \item When $n_3, n_5+1,n_6$ have the same parity, $\lambda_{ C'}= \lambda_ C$;
  \item When $n_3, n_5,n_6$ have the same parity,%
    \begin{itemize}
    \item if $n_3=1$,  $\lambda_{C'}= \lambda_ C$, and
    \item  if $n_3\geq 2$,  $\lambda_{ C'}>\lambda_C$.
    \end{itemize}
  \item When $n_5\leq \min\{n_3, n_6\} $ and  $n_3-1, n_5,n_6$ have the same parity, 
    \begin{itemize}
    \item if $n_3= n_5+1$,  $\lambda_{ C'}= \lambda_C$, and
    \item if $n_3> n_5+1$,  $\lambda_{C'}> \lambda_ C$.
    \end{itemize}
  \end{enumerate}
\end{theorem}
\begin{IEEEproof}
  See \S\ref{sec:linearcode}.
\end{IEEEproof}

\begin{remark}
  Our results of comparing two linear codes are stronger than the one 
  in~\cite{chen2013optimal} since we give the sufficient conditions for the strict inequality and equality between the decoding performance of two codes.
\end{remark}
\begin{remark}\label{rm:equilinear}
	Since $\ls_{C(n_3,n_5,n_6)} = \ls_{C(n_r,n_s,n_t)}$ with $\{r,s,t\}=\{3,5,6\}$,  Theorem~\ref{thm:linearcompare} holds when the roles of $n_3$, $n_5$ and $n_6$ are switched. For example, if we want to compare $C(u,v,w)$ with $C(u-1,v,w+1)$ when 	$u, v,w+1$ have the same parity, we can get   $ \ls_{C(u,w,v)}=\ls_{ C(u-1,w+1,v)}$  first by Theorem~\ref{thm:linearcompare}-1) and following this
	we have  $ \ls_{C(u,v,w)}=\ls_{ C(u-1,v,w+1)}$. 
\end{remark}
\begin{corollary}\label{cor:linear1}
  For a linear $(n,4)$ code $C(n_3,n_5,n_6)$ with $n_3\geq 2$, let $C'=C(n_3-2,n_5,n_6+2)$.
  \begin{enumerate}
  \item When $n_3$, $n_5+1$ and $n_6$ have the same parity, $n_3\geq  n_6+3$ and $n_5\geq n_6-1$, we have $\lambda_{ C'}>\lambda_C$;
  \item When $n_3$, $n_5$ and $n_6+1$ have the same parity, $n_3\geq  n_6+4$ and $n_5\geq n_6+1$, we have $\lambda_{ C'}>\lambda_C$.
  \end{enumerate}
\end{corollary}
\begin{IEEEproof}
  \begin{enumerate}
  \item
    By Theorem \ref{thm:linearcompare}--1), we have $$\lambda_{C(n_3-1,n_5+1,n_6)}=\lambda_{C(n_3,n_5,n_6)}.$$ Due to
    $n_3-1>n_6 +1$ and $n_6 \leq \min\{n_3-1,n_5+1\}$, by Theorem \ref{thm:linearcompare}--3) and  Remark~\ref{rm:equilinear}, $$\lambda_{C(n_3-2,n_5+1,n_6+1)}> \lambda_{C(n_3-1,n_5+1,n_6)}.$$ Hence we have $\lambda_{C(n_3-2,n_5+1,n_6+1)}>\lambda_{C(n_3,n_5,n_6)}$. As $(n_3-2),$ $(n_5+1)$ and $(n_6+1)+1$ have the same parity, by Theorem \ref{thm:linearcompare}--1) and Remark~\ref{rm:equilinear}, $$\lambda_{C(n_3-2,n_5+1,n_6+1)} = \lambda_{C(n_3-2,n_5,n_6+2)} .$$
  \item 
    By Theorem \ref{thm:linearcompare}--1) and Remark~\ref{rm:equilinear},  $$\lambda_{C(n_3-1,n_5,n_6+1)}=\lambda_{C(n_3,n_5,n_6)}.$$ Due to
    $n_3-1>(n_6 +1)+1$ and $n_6+1 \leq \min\{n_3-1,n_5\}$, by Theorem \ref{thm:linearcompare}--3) and Remark~\ref{rm:equilinear},
    \begin{equation*}
      \lambda_{C(n_3-2,n_5,n_6+2)}>\lambda_{C(n_3-1,n_5,n_6+1)}.\qedhere  
    \end{equation*}
  \end{enumerate}
\end{IEEEproof}

Now we are ready to give the proof of Theorem~\ref{thm:linearopt}.
\begin{IEEEproof}[Proof of Theorem \ref{thm:linearopt}]
	Due to the code equivalence, we can search the optimal codes among codes with only  columns $\bspan{0},$ $\bspan{3}$, $\bspan{5}$ and $\bspan{6}$.
	We first find all the optimal $(n,4)$ codes among all the linear codes  with only columns $\bspan{3}$, $\bspan{5}$ and $\bspan{6}$ for the three cases of $n$. After that, we will discuss the general linear codes that may contain $\bspan{0}$ columns. In this proof, we write $\max\{a,b\}$ as $a\lor b$ and $\min\{a,b\}$ as $a\land b$. 
	
First, consider $n =  3k-1$ for a positive integer $k$. Observe that  $C(n_3,n_5,n_6)$ with $n_3\lor n_5\lor n_6 -n_3\land n_5\land n_6 \leq 1$ is equivalent to $C(k,k,k-1)$.
If  we can show that $C(k,k,k-1)$ is strictly better than any linear code $C(n_3,n_5,n_6)$ with $n_3\lor n_5\lor n_6 -n_3\land n_5\land n_6 > 1$, then $C(k,k,k-1)$ is universally optimal among linear codes with only columns $\bspan{3}$, $\bspan{5}$ and $\bspan{6}$. It can be verified that $n_3\lor n_5\lor n_6 -n_3\land n_5\land n_6 > 1$ is equivalent to $n_3\land n_5\land n_6 >k. $
WLOG, we  consider the linear code $ C(n_3,n_5,n_6) $ with $n_3>k$ and $n_3\geq n_5\geq n_6$. There are totally four cases for the parities of $n_3$, $n_5$ and $n_6$, and we can find a strictly better code than $ C(n_3,n_5,n_6) $  in each case:
\begin{enumerate}
\item[1-1)] When $n_3$, $n_5$ and $n_6$ have the same parity, we have $n_6\leq k-1\leq n_3-2$. By Theorem \ref{thm:linearcompare}--2), $\lambda_{C(n_3-1,n_5,n_6+1)} > \lambda_{C(n_3,n_5,n_6) }$ since $n_3>k\geq 1$.  %
\item[1-2)] When $n_3+1$, $n_5$ and $n_6$ have the same parity,  we have $n_6\leq k-2$ and $n_3\geq n_6 +3$. By Theorem \ref{thm:linearcompare}--3),  $\lambda_{C(n_3-1,n_5,n_6+1)} > \lambda_{C(n_3,n_5,n_6) }$. %
\item[1-3)] When $n_3$, $n_5+1$ and $n_6$ have the same parity, we have $n_6\leq k-2$, $n_3\geq k+2$ and  
  $n_3-n_6\geq 4$. By Corollary \ref{cor:linear1}--1),  $\lambda_{C(n_3-2,n_5,n_6+2)}>\lambda_{C(n_3,n_5,n_6)}$. %
 \item[1-4)] When $n_3$, $n_5$ and $n_6+1$ have the same parity, we have $n_6\leq k-3$, $n_3\geq n_6+5$ and $n_5>n_6$. By Corollary \ref{cor:linear1}--2),
  $\lambda_{C(n_3-2,n_5,n_6+2)}>\lambda_{C(n_3,n_5,n_6)}$. %
\end{enumerate}
Denote by $C(n_3',n_5,n_6')$ the strictly better code than code $C(n_3,n_5,n_6)$ obtained above. When $n_3>n_5$, for all the four cases above, either $ n_3'=n_3-1,n_6'= n_6+1$ or $n_3'=n_3-2,n_6'= n_6+2$  and hence
\begin{equation*}
n_3'\lor n_5\lor n_6'\leq (n_3-1)\lor n_5 \lor k =n_3-1 < n_3\lor n_5\lor n_6
\end{equation*}  and $n_3'\land n_5\land n_6' \geq n_3\land n_5\land n_6$. When $n_3=n_5$, we have 
\begin{equation*}
n_3'\lor n_5\lor n_6' =n_5 = n_3\lor n_5\lor n_6 \text{ and } n_3'\land n_5\land n_6'= n_6+1  > n_3\land n_5\land n_6.
\end{equation*}
Therefore, in each case above, the strictly better code $C(n_3',n_5,n_6')$ satisfies that $n_3' \lor n_5\lor n_6’ - n_3' \land n_5\land n_6’$ is strictly smaller than $n_3\lor n_5\lor n_6 -n_3\land n_5\land n_6 $. We can repeat the above argument 
until a strictly better code $C(n_3'',n_5'',n_6'')$ with $n_3''\lor n_5''\lor n_6'' -n_3''\land n_5''\land n_6'' \leq 1$ is obtained.
Therefore, we have $\lambda_{C(k,k,k-1)} > \lambda_{C(n_3,n_5,n_6)}$. %

Second, consider $n =  3k$ for a positive integer $k$.
For a code $C(n_3,n_5,n_6)$ with $n_3\geq  k+2$ and $n_3\geq n_5\geq n_6$, we can verify that either $C(n_3-1,n_5,n_6+1)$ or $ C(n_3-2,n_5,n_6+2) $ is  strictly better than $C(n_3,n_5,n_6)$ in the following four cases:
\begin{enumerate}
\item[2-1)] When $n_3,n_5$ and $n_6$ have the same parity, by Theorem \ref{thm:linearcompare}-2), $C(n_3-1,n_5,n_6+1)$  is  strictly better than $C(n_3,n_5,n_6)$.
\item[2-2)] When $n_3+1,n_5$ and $n_6$ have the same parity, by Theorem \ref{thm:linearcompare}-3), the code $C(n_3-1,n_5,n_6+1)$  is strictly better than $C(n_3,n_5,n_6)$.
\item[2-3)] When $n_3,n_5+1$ and $n_6$ have the same parity, we have $n_3\geq n+3$ and $n_6\leq k-3$.
  By Corollary \ref{cor:linear1}-1), $C(n_3-2,n_5,n_6+2)$ is strictly better than $C(n_3,n_5,n_6)$. %
\item[2-4)] When $n_3,n_5$ and $n_6+1$ have the same parity, we have $n_3\geq k+3$ and $n_6\leq k-2$.
  By Corollary \ref{cor:linear1}-2), $C(n_3-2,n_5,n_6+2)$ is strictly better than $C(n_3,n_5,n_6)$. %
\end{enumerate}
Denote by $C(n_3',n_5,n_6')$ the strictly better code than code $C(n_3,n_5,n_6)$ obtained above. Observe that $n_6'\leq k.$
By the above four cases, we now verify that there exists a strictly better code $C(n_3',n_5',n_6')$ with $n_3'\lor n_5'\lor n_6'<n_3\lor n_5\lor n_6$  for the code $ C(n_3,n_5,n_6)$.
When $n_3>n_5$, we have 
\begin{equation*}
n_3'\lor n_5\lor n_6'\leq (n_3-1)\lor n_5 \lor k < n_3=n_3\lor n_5\lor n_6. 
\end{equation*}
When $n_3=n_5,$ we have $n_6\leq k-4$ and by the four cases 2-1)-4) above, there exists $C(n_3^*,n_5,n_6^*)$ with either $ n_3^*=n_3-1,n_6^*= n_6+1$ or $n_3^*=n_3-2, n_6^*= n_6+2$ such that $\ls_{C(n_3,n_5,n_6)}< \ls_{C(n_3^*,n_5,n_6^*)}$. Then we have $n_3^*\geq k$ and $n_6^*\leq k-2$, which implies $n_6^*< n_3^*$. Now for the code $C(n_3^*,n_5,n_6^*)$  with $n_5\geq k+2$ and $n_5 > n_3^*>n_6^*$, by the equivalence between $C(n_3^*,n_5,n_6^*)$ and $C(n_5,n_3^*,n_6^*)$ and applying the above four cases 2-1)-4) to $C(n_5,n_3^*,n_6^*)$, we get that there exists  $C(n_3',n_5',n_6')$ with either $ n_3'=n_3^*,n_5'= n_5-1,n_6'= n_6^*+1$ or $n_3'=n_3^*, n_5'=n_5-2,n_6'= n_6^*+2$ such that $\ls_{C(n_3^*,n_5,n_6^*)}< \ls_{C(n_3',n_5',n_6')}$ and 
\begin{equation*}
n_3'\lor n_5'\lor n_6'\leq  n_3^*\lor n_5'\lor (n_6^*+2)\leq n_3-1<n_3=n_3\lor n_5\lor n_6.
\end{equation*}
Therefore, there always exists a strictly better code $C(n_3',n_5',n_6')$ with $n_3'\lor n_5'\lor n_6'<n_3\lor n_5\lor n_6$  for the code $ C(n_3,n_5,n_6)$.
Then we can repeat the above argument until a strictly better code $C(n_3'',n_5'',n_6'')$ with $n_3''\lor n_5''\lor n_6''\leq k+1$ is obtained. 
Thus there always exists a strictly better code $C(n_3'',n_5'',n_6'')$ with $n_3''\lor n_5''\lor n_6''\leq k+1$  for any code $ C(n_3,n_5,n_6) $ with $ n_3\lor n_5\lor n_6\geq k+2$.
When $k\geq 2$, excluding equivalent codes, there are totally three possibilities of $C(n_3'',n_5'',n_6'')$ with $ n_3''\lor n_5''\lor n_6''\leq k+1$: $$C(k,k,k), C(k+1,k,k-1) \text{ and } C(k+1,k+1,k-2).$$  By Theorem \ref{thm:linearcompare}-1), $\lambda_{C(k+1,k+1,k-2)}=\lambda_{C(k+1,k,k-1)}$. By Theorem \ref{thm:linearcompare}-2),  $\lambda_{C(k,k,k)} < \lambda_{C(k,k-1,k+1)}$.
Therefore, $ C(k+1,k,k-1) $ and $ C(k+1,k+1,k-2) $ are strictly better than their non-equivalent codes with only columns from  $\bspan{3}$, $\bspan{5}$ and $\bspan{6}$. 
When $k=1$,  the code  $C(n_3'',n_5'',n_6'')$ with $ n_3''\lor n_5''\lor n_6'' \leq k+1$  can be either $C(1,1,1)$ or $C(1,2,0)$. By Theorem \ref{thm:linearcompare}-2), $C(1,1,1)$ has the same performance as $C(1,2,0)$. Hence, $C(1,1,1)$ and $C(1,2,0)$ are strictly better than their non-equivalent codes with only columns from  $\bspan{3}$, $\bspan{5}$ and $\bspan{6}$.

Third, consider  $n =  3k+1$ for a positive integer $k$.
For a code $C(n_3,n_5,n_6)$ with $n_3\geq  k+3$ and $n_3\geq n_5\geq n_6$, we can verify that either $C(n_3-1,n_5,n_6+1)$ or $ C(n_3-2,n_5,n_6+2) $ is  strictly better than $C(n_3,n_5,n_6)$ in the following four cases:
\begin{enumerate}
\item[3-1)] When $n_3,n_5$ and $n_6$ have the same parity, by Theorem \ref{thm:linearcompare}--2), $C(n_3-1,n_5,n_6+1)$  is  strictly better than $C(n_3,n_5,n_6)$.
\item[3-2)] When $n_3+1,n_5$ and $n_6$ have the same parity, by Theorem \ref{thm:linearcompare}--3), the code $C(n_3-1,n_5,n_6+1)$  is strictly better than $C(n_3,n_5,n_6)$.
\item[3-3)] When $n_3,n_5+1$ and $n_6$ have the same parity, we have $n_6\leq k-2$.
  By Corollary \ref{cor:linear1}--1), $C(n_3-2,n_5,n_6+2)$ is strictly better than $C(n_3,n_5,n_6)$. %
\item[3-4)] When $n_3,n_5$ and $n_6+1$ have the same parity, we have $n_6\leq k-2$ and $n_5\geq n_6+1$.
  By Corollary \ref{cor:linear1}--2), $C(n_3-2,n_5,n_6+2)$ is strictly better than $C(n_3,n_5,n_6)$. %
\end{enumerate}
Similar to the analysis when $n=3k,$ we can verify that  there exists a strictly better code $C(n_3',n_5',n_6')$ with $n_3'\lor n_5'\lor n_6'<n_3\lor n_5\lor n_6$  for the code $ C(n_3,n_5,n_6)$.
Then we can repeat the above argument until 
a strictly better code $C(n_3',n_5',n_6')$ with
$ n_3'\lor n_5'\lor n_6' \leq k+2$ is obtained.
Excluding the equivalent codes, there are five possibilities for $C(n_3',n_5',n_6')$ with $ n_3'\lor n_5'\lor n_6' \leq k+2$: $C(k+1,k,k)$, $C(k+1,k+1,k-1)$, $C(k+1,k+2,k-2)$, $C(k+2,k,k-1)$ and $C(k+2,k+2,k-3)$.  By Theorem \ref{thm:linearcompare}--1),
\begin{IEEEeqnarray*}{rCl}
  \lambda_{C(k+1,k,k)}&=&\lambda_{C(k+2,k,k-1)}, \\
  \lambda_{C(k+2,k+2,k-3)}&=&\lambda_{C(k+2,k+1,k-2)}=\lambda_{C(k+1,k+2,k-2)}.
\end{IEEEeqnarray*}
By Theorem \ref{thm:linearcompare}--2),
\begin{equation*}
 \lambda_{C(k+1,k+1,k-1)}<\lambda_{C(k+2,k,k-1)}, 
\end{equation*}
and by Theorem \ref{thm:linearcompare}--3),
\begin{equation*}
  \lambda_{C(k+1,k+2,k-2)}<\lambda_{C(k,k+2,k-1)} = \lambda_{C(k+2,k,k-1)}.
\end{equation*}
Therefore, $C(k+1,k,k)$ and $C(k+2,k,k-1)$ are strictly better than their non-equivalent codes with only columns from  $\bspan{3}$, $\bspan{5}$ and $\bspan{6}$. 

Now we consider general linear codes that can have $\bspan{0}$ columns. For a linear code $C$ with at least two $\bspan{0}$ columns, by Corollary~\ref{cor:0col}, $C$ is not optimal among all the linear codes.
For a linear code $C$ with exactly one $\bspan{0}$ column, we consider two cases: When $C$ is not equivalent to $C_0$  with $ |0|_{C_0}+ |5|_{C_0}+ |6|_{C_0}= n$ and $|5|_{C_0}, |6|_{C_0}$ odd, by Theorem~\ref{thm:0column}, there exists a strictly better linear code with no $\bspan{0}$ column. 
When $C$ is equivalent to $C_0$, WLOG, we suppose $C=C_0$. By Theorem~\ref{thm:0column}, we have $\ls_{C'} =\ls_C$ with $C'$ being obtained by changing the $\bspan{0}$ column in $C$ to $\bspan{3}$. Then $|3|_{C'}+|5|_{C'}+|6|_{C'}=n$  and $|3|_{C'}$, $|5|_{C'}$ and $|6|_{C'}$ are all odd. 
\begin{enumerate}
\item
When $n\neq 3$, we have shown that the best codes among $(n,4)$ codes with only columns from $\bspan{3}$, $\bspan{5}$ and $\bspan{6}$ are equivalent to some $C(r,s,t)$ where at least one of $r$, $s$ and $t$ is even. Thus, there exists a linear $(n,4)$ code that is strictly better than $C'$, and hence $C$ is not optimal among linear codes. 
\item 
When $n=3$, we have $|0|_{C}= |5|_{C}=|6|_{C}=1$, $C' = C(1,1,1)$ and $\ls_{C} = \ls_{C(1,1,1)}$. 
\end{enumerate}

Therefore, when  $n=3$, $C(1,1,1)$, $C(1,2,0)$ and $\begin{pmatrix}
	0&0&0\\
	0&1&1\\
	0&0&1\\
	0&1&0
\end{pmatrix}$  are universally optimal among all linear codes and strictly better than  other $(n,4)$ linear codes that are not equivalent to them. When $n\neq 3$, the codes with $\bspan{0}$ columns are not optimal among all linear codes.
  \end{IEEEproof}

\subsection{All Optimal Codes: Proof of Theorem \ref{thm:nbig3}}\label{sec:enhanced}

Note that Theorem~\ref{thm:0column}, Theorem~\ref{thm:even} and
Theorem~\ref{thm:odd} also have the necessary and sufficient condition
such that the two codes in comparison have the same performance. In
other words, these theorems also induce a strict partial order on all the
$(n,4)$ codes. Based on this partial order, together with Theorem~\ref{thm:linearopt}, we can eventually prove Theorem~\ref{thm:nbig3}.
We first define a new class of nonlinear code.

\begin{table}[t]
	\centering
	\caption{Definition of Class-III codes. }\label{tab:class3}

		\begin{tabular}{ccccccccccc}
			\hline
			\rowcolor[HTML]{EFEFEF} 
			class & subclass  & blocklength &$|0|$& $|1|$  & $|2|$  & $|3|$    & $|4|$  & $|5|$     & $|6|$     & $|7|$                        \\ \hline
			& a                                            & odd         &$0$   & $1  $     & $0$    & even   &$0 $   & odd     & $0  $     & $1$        
			\\ \cline{2-11} 
			\multirow{-2}{*}{III} 	&  b       & odd    &$0$   & 1                           & $0 $                       & odd                        & $0$                        & $0$                           & odd                         & $0$             \\ \hline
		\end{tabular}
	\end{table}

\begin{definition}
	An $ (n,4) $ code is said to be \emph{Class-III} if it satisfies $|1|+|3|+|5|+|6|+|7|=n$ and one of the following two conditions:
	\begin{enumerate}
		\item[a)] $ |1|=|7|=1$, $|6|=0 $, $ |3|$ is even and $|5| $ is odd;
		\item[b)]  $ |1|=1$, $|5|=|7|=0 $ and $ |3|$ and $|6| $ are odd.
	\end{enumerate}
\end{definition}
In Table~\ref{tab:class3}, we list the Class-III codes in two subclasses. Following the same analysis in \S\ref{sec:n4f} about linear, Class-I and Class-II codes,
it holds that  there are no equivalent codes belonging to two different classes (or subclasses) among   linear, Class-I, Class-II codes and Class-III codes.
The following lemma is about a relation among linear, Class-I, Class-II and Class-III codes.

\begin{lemma}\label{thm:class2}
  \begin{enumerate}
  \item For a Class-II code $C$, there exists a Class-I code $C'$ with $\ls_C = \ls_{C'}$ and $|1|_{C'} = |1|_{C}-1$.
  \item For a Class-III code $C$, there exists a linear code $C'$ with $\ls_C = \ls_{C'}$ and $|1|_{C'} = |1|_{C}-1$.
  \end{enumerate}
\end{lemma}
\begin{IEEEproof}
  \begin{enumerate}
  \item Let $ C$ be a Class-II $(n,4)$ code, and let $C'$ be obtained by changing one column of type $\bspan{1}$ of $C$ to $\bspan{3}$. If $ C$ is Class-II-a, then $C'$  is Class-I-b  and by the condition i) of equality in Theorem \ref{thm:even}, $C'$ has the same correct decoding probability as $ C$. If $ C$ is Class-II-b, then $C'$ is Class-I-a and by the condition i) of equality in Theorem \ref{thm:even} has the same correct decoding probability as $ C$.
  \item Let $ C$ be a Class-III $(n,4)$ code.
    When $C$ is of subclass a, let $ C''$ be obtained by replacing two columns of the types $\bspan{1}$  and $\bspan{7}$ of $C$ with $\bspan{3}$ and $\bspan{5}$. Then $C''$   is linear, and by Theorem~\ref{thm:odd}, $C'$ has the same correct decoding probability as $C$.
  When $C$ is of subclass b, $C'$ be obtained by changing one column of type $\bspan{1}$ of $C$ to $\bspan{3}$. Then $C''$ is linear, and by the condition ii) of equality in Theorem~\ref{thm:even}, $C'$ has the same correct decoding probability as $C$.
  \end{enumerate}
\end{IEEEproof}

In the following, we first argue that a set formed by linear,  Class-I, Class-II, Class-III codes and their equivalent codes contains all the optimal codes. We then apply Theorem \ref{thm:linearopt} and Lemma \ref{thm:class2} to further reduce the set that includes all the optimal codes, proving Theorem \ref{thm:nbig3}.

\begin{lemma}\label{lm:all}
  All
  the optimal codes among $(n,4)$ codes with only columns in
  $\{\bspan{1},\dots, \bspan{14}\}$ are equivalent to the linear codes, Class-I codes,
  Class-II codes or Class-III codes.
\end{lemma}
\begin{IEEEproof}
	  Let $\mc S$ be the set formed by the linear codes, Class-I codes,
	Class-II codes, Class-III codes and all their equivalent codes. 
  Let $C_0 \notin \mc S$ be an $(n,4)$ code   with only columns in $\{\bspan{1},\dots, \bspan{14}\}$, and let $C$ be the equivalent code of  $C_0$ obtained by flipping all the columns of type $\bspan{i}$, $i>7$. Then $C$ has only columns in $\{\bspan{1},\dots, \bspan{7}\}$.
   We will show there exists a strictly better code than $C$ for the following cases:
  \begin{enumerate}
  \item one of $|1|$, $|2|$, $|4|$ and $|7|$ is positive;
  \item at least two of $|1|$, $|2|$, $|4|$ and $|7|$ are positive.
  \end{enumerate}
Note that if $|1|$, $|2|$, $|4|$ and $|7|$ are all $0$, $C$ is linear.

For Case-1), we only argue the case $|1|>0$ and $ |2| = |4| = |7| = 0 $ since other cases can be transformed to this case by interchanging rows.
Referring to the proof of Corollary \ref{cor:onepo}, we see that at least one of  $w(\vv c_1\oplus \vv c_4)$, $w(\vv c_2\oplus \vv c_4)$ and $w(\vv c_3\oplus \vv c_4)$ are even due to $ C$ is non-Class-I. 
Here we assume $w(\vv c_3\oplus \vv c_4)$ is even since other cases can be transformed to this case by interchanging rows. Hence, $C$ does not satisfy all the following conditions:
\begin{itemize}
\item[i)] $w(\vv c_1\oplus \vv c_3)$ and $w(\vv c_2\oplus \vv c_3)$ are odd. Otherwise, $C$ is Class-II. 
\item[ii)] $ |1|=1$, $|5|=0 $, $ |3|$ and $|6| $ are odd. Otherwise, $C$ is Class-II-b.
\item[iii)] $ |1|=1$, $|6|=0 $, $ |3|$ and $|5| $ are odd. Otherwise, $C$ is equivalent to Class-II-b. 
\end{itemize}
Then by Theorem \ref{thm:even}, replacing a column of type $ \bspan 1 $ of $  C $ by $ \bspan 3 $ can give a strictly better code.

For Case-2), we first argue the case $|7|>0$ and $|1|>0$. Denote as Condition-A that $ |1|=|7|=1$, $|2|=|4|=|6|=0 $ and at least one of $ |3|$ and $ |5| $ is odd.
If Condition-A is not satisfied, by Theorem \ref{thm:odd}, changing columns $  \bspan 1$ and  $ \bspan 7 $ of $C$ to $ \bspan 3 $ and $ \bspan 5 $ can  give a strictly better code.
If Condition-$\mathrm A$ is satisfied, we have $|3|$ and $|5|$ are bot odd since $ C$ is not equivalent to a Class-III-a code.
Then we have $w(\vv c_3\oplus \vv c_4)$,  $w(\vv c_1\oplus \vv c_3)$ and $w(\vv c_2\oplus \vv c_3)$ are  all even, and hence by Theorem \ref{thm:even}, replacing a column of type $\bspan{1}$ of $ C$ by $\bspan{3}$  gives a strictly better code.

The case $|7|>0$ and $|2|>0$ or $|4|>0$ can be transformed to the case $|7|>0$ and $|1|>0$ by interchanging rows. 
Last, we consider the case $ |7|=0 $ and at least two of $ |1|,|2| $ and $ |4| $ are positive, which can be transformed equivalently to the case with $|7|>0$ and $|1|+|2|+|4|>0$. 
We give the transformation when $ |1| $ and $ |2| $ are positive and the other cases can be transformed to this case by interchanging rows. 
Let $ C' $ be the code obtained by flips all the columns of type $ \bspan 2,\bspan 3 $ and $ \bspan 6 $ in $C$ and then exchanging the third and  the first rows.
See Fig.~\ref{fig:iwla} for an illustration of this transformation. 
Observe that 
\begin{align*}
  |1| = |1|_{C'}, |2| = |7|_{C'} , |3| = |6|_{C'}, |4| = |4|_{C'}, |5| = |5|_{C'}, |6| = |3|_{C'}.
\end{align*}
Then we have $|7|_{C'} >0$ and $|1|_{C'}>0$. The proof is completed. 
\end{IEEEproof}
              
\begin{figure}
  \centering
  \begin{IEEEeqnarray*}{rCl}
    \begin{pNiceMatrix}[last-row]
      0 & \color{blue}0 &\color{blue}0 &0 &0 &\color{blue}0\\
      0&\color{blue}0&\color{blue}0&1&1&\color{blue}1\\
      0&\color{blue}1&\color{blue}1&0&0&\color{blue}1\\
      1&\color{blue}0&\color{blue}1&0&1&\color{blue}0\\
      \bspan{1} & \bspan{2} & \bspan{3} & \bspan{4} & \bspan{5} & \bspan{6} 
    \end{pNiceMatrix}
    &\xrightarrow{\text{flip col } 2,3,6 }&
    \begin{pNiceMatrix}[last-row]
      0 & \color{red}1  &\color{red}1  &0 &0 &\color{red}1\\
      0 &\color{red}1   &\color{red}1 &1&1    &\color{red}0\\
      0 &\color{red}0  &\color{red}0 &0&0   &\color{red}0\\
      1 &\color{red}1   &\color{red}0 &0&1   &\color{red}1\\
      \bspan 1 & \bspan{13} & \bspan{12} & \bspan 4 & \bspan 5 & \bspan 9
    \end{pNiceMatrix}\\
    &\xrightarrow{\text{interchange row } 1,2 }&
    \begin{pNiceMatrix}[last-row]
      \color{green}0 &\color{green}0  &\color{green}0 &\color{green}0&\color{green}0   &\color{green}0\\
      0 & 1  &1  &1 &1 &0\\
      \color{green}0 &\color{green}1   &\color{green}1 &\color{green}0&\color{green}0    &\color{green}1\\
      1 &1   &0 &0&1   &1 \\
      \bspan 1 & \bspan{7} & \bspan{6} & \bspan 4 & \bspan 5 & \bspan 3
    \end{pNiceMatrix}
  \end{IEEEeqnarray*}
  \caption{Illustration of the code transformation used in the proof of Lemma~\ref{lm:all}. Suppose $C$ has only one column for each type $\bspan{1}, \ldots, \bspan{6}$. Then $ C'$
		is obtained by flipping all the columns of type $ \bspan 2$, $\bspan 3 $ and $ \bspan 6 $ and interchanging the third row and  the first row. }
  \label{fig:iwla}
\end{figure}

\begin{IEEEproof}[Proof of Theorem \ref{thm:nbig3}]
First, consider $n=2$. By Theorem~\ref{thm:0column}, $(2,4)$ codes with $\bspan 0$ or $\bspan{15}$ are not optimal. By definition, there is no $(2,4)$ codes of Class-I/II/III. Hence, by Lemma~\ref{lm:all}, all the optimal $(2,4)$ codes are equivalent to linear codes.
Last, by Theorem~\ref{thm:linearopt}, $C(1,1,0)$ is universally strictly optimal among all $(2,4)$ linear codes.

Second, consider $n=3$. By Theorem~\ref{thm:0column}, if a $(3,4)$ code with a column $\bspan 0$ or $\bspan{15}$ is optimal, it must be equivalent to $C_A$ defined in \eqref{eq:0columnlinear}, and has the same performance as $C(1,1,1)$ and $C(1,2,0)$, which, and their equivalent codes, are actually all the optimal linear codes by Theorem~\ref{thm:linearopt}.
When $n=3$, all the possible Class-III codes are equivalent to $C_1$ with $|1|_{C_1}=|5|_{C_1}=|7|_{C_1}=1$ or $C_2$ with $|1|_{C_2}=|3|_{C_2}=|6|_{C_2}=1$.
By Theorem~\ref{thm:odd}, we have $\ls_{C_1}=\ls_{C(1,2,0)}$  and by Theorem~\ref{thm:even}, we have $\ls_{C_2}=\ls_{C(1,2,0)}$.  
When $n=3$, all the possible Class-I codes are equivalent to a code $C_3$ with $|1|_{C_3}=1$ and $|3|_{C_3}=2$ or a code $C_4$ with $|1|_{C_4}=3$. By Theorem~\ref{thm:even}, we have $\ls_{C(1,2,0)}>\ls_{C_3}>\ls_{C_4}$. Then all the Class-I codes are not optimal. Last, 
 by Lemma~\ref{thm:class2}, Class-II codes are also not optimal since each of them has the same performance as some Class-I code. 
Therefore, all the optimal $(3,4)$ are equivalent to $C_A$, $C(1,1,1)$, $C(1,2,0)$, $C_1$ and $C_2$.

Third, consider $n>3$. By Theorem~\ref{thm:0column} and Corollary~\ref{cor:0col}, $(n,4)$ codes with $\bspan 0$ or $\bspan{15}$ are not optimal.
Based on Lemma~\ref{lm:all}, we only need to show that a Class-III code $C$ is not optimal. By Lemma \ref{thm:class2}, there exists a linear $(n,4)$ code $C'$ that has the same correct decoding probability as $C$.
By further checking the proof of Lemma \ref{thm:class2}, we can find such a $C'$ without type $\bspan{5}$ or $\bspan{6}$ columns.
When $n>3$, $C'$ cannot be an optimal code by Theorem~\ref{thm:linearopt} since the blocklength of a Class-III code is odd. The proof is completed.
      \end{IEEEproof}

\subsection{More Results about Class-I Codes and Proof of Theorem \ref{thm:n8}}

Up to here, the optimal $(n,4)$ code problem has been completely solved for $n=2$ and $3$. For $n\geq 4$, Theorem~\ref{thm:nbig3} and Lemma~\ref{thm:class2} reduce the essential problem to whether a Class-I code is optimal. Here we present some further results about Class-I codes, which enable us to prove Theorem \ref{thm:n8} that exactly characterizes all the optimal codes for $n$ up to $300$. 

The following lemma presents a sufficient condition for the
existence of linear codes that are optimal among all codes.

\begin{lemma}[Sufficient Condition for Existence of Optimal Codes]\label{thm:ccs}
  Fix a blocklength $n$.  If for any Class-I $(n,4)$ code $C$, there
  exists an $(n,4)$ code $C'$ such that $|1|_{C'}<|1|_C$,
  $|1|_{C'}+|3|_{C'}+|5|_{C'}+|6|_{C'} = n$ and
  $\lambda_{C} \leq \lambda_{C'}$, then there exists an optimal
  $(n,4)$ code that is linear. 
\end{lemma}
\begin{IEEEproof}
  Fix an optimal $(n,4)$ code $C$ that is Class-I. If such a code does
  not exist, by Theorem~\ref{thm:two}, there must exist an optimal
  $(n,4)$ code that is linear, and the proof is done. According to the
  statement of this theorem, there exists an optimal code $C'$ such
  that $|1|_{C'}< |1|_C $ and
  $|1|_{C'}+|3|_{C'}+|5|_{C'}+|6|_{C'} = n$. If $|1|_{C'}=0$, then
  $C'$ is linear, and the proof is done. If $C'$ is Class-I, we
  repeat the above argument. If $C'$ is non-Class-I and nonlinear,
  then by Corollary~\ref{cor:onepo}, there exists an optimal code
  $C''$ with $|1|_{C''}<|1|_{C'}$ that is either linear or Class-I. If
  $C''$ is linear, the proof is done. If $C''$
  is Class-I, we can repeat the above argument. As $|1|_C$ is finite,
  the process will eventually stop with an optimal linear code.
\end{IEEEproof}

 In the following two theorems, it will be shown that there always exists a better code with more linear-type columns for the Class-I codes $ C$ that have only one $\bspan{1}$ column, or have at most one column of some  linear-type, i.e. $\min_{i = 3,5,6}|i|_C = 0$ or $1$.
\begin{theorem}\label{thm:11}
	Let $C$ be a Class-I $(n,4)$ code with $|1|_C=1$. Let $C'$ be the code obtained by replacing the $\bspan{1}$ column of $C$ by $\bspan{s}$, where $s= \argmin_{i = 3,5,6}|i|_C$. Then $\lambda_{C'} > \lambda_{C}$ when $n\neq 3$, and $ \lambda_{C'} = \lambda_{C}$ when $n=3$.
\end{theorem}
\begin{IEEEproof}
  See \S\ref{sec:11}.
\end{IEEEproof}

In the above theorem, code $C'$ is linear.

\begin{theorem}\label{thm:301}
	Let $C$ be a Class-I $(n,4)$ code with $\min_{i = 3,5,6}|i|_C = 0$ or $1$. Let $C'$ be the code obtained by replacing one $\bspan{1}$ column of $C$ by $\bspan{s}$, where $s\in\{3,5,6\}$ has $|s|_C=0$ or $1$. Then $\lambda_{C'}\geq \lambda_{C}$.
\end{theorem}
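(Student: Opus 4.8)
The plan is to recast the statement as a comparison of two codes differing in one codeword in a single bit and to run the one‑bit comparison method of \S\ref{sec:1col}, in the same spirit as the proof of Theorem~\ref{thm:11} (which is the case $|1|_C=1$).

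First, a reduction. Since $C$ is Class‑I, all its columns are of the types $\bspan1,\bspan3,\bspan5,\bspan6$, with $|1|_C$ odd and $|3|_C,|5|_C,|6|_C$ of one common parity; the hypothesis forces this parity to be even with some $|i|_C=0$, or odd with some $|i|_C=1$. A permutation of the rows $1,2,3$ fixes the column type $\bspan1$, permutes $\bspan3,\bspan5,\bspan6$ transitively among themselves, and preserves the Class‑I property, so without loss of generality I may take $s=3$, i.e.\ $|3|_C\in\{0,1\}$. Then $C'$ arises from $C$ by turning one column $\bspan1=(0,0,0,1)^\top$ into $\bspan3=(0,0,1,1)^\top$, which flips exactly one coordinate $j$ of $\vv c_3$ (turning a $0$ into a $1$) and leaves $\vv c_1,\vv c_2,\vv c_4$ unchanged; thus $C$ and $C'$ form a one‑bit‑flip pair and the machinery of \S\ref{sec:1col} applies.

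Next, the comparison itself. For an output $\vv y$ only $w(\vv c_3\oplus\vv y)$ changes: it rises by one if $y_j=0$ and drops by one if $y_j=1$. Hence $d_C(\vv y)\ne d_{C'}(\vv y)$ only when $\vv c_3$ is a closest codeword of $C$ to $\vv y$; among such outputs, $d$ drops by one for the \emph{gaining} ones ($y_j=1$) and rises by one for the \emph{losing} ones ($y_j=0$ and $\vv c_3$ the \emph{unique} closest codeword). Parametrising $\vv y$ by $(a_1,a_3,a_5,a_6)$, where $a_i$ counts the ones of $\vv y$ in the $|i|_C$ coordinates of type $i$, each $w(\vv c_t\oplus\vv y)$ is an affine function of $(a_1,a_3,a_5,a_6)$ (as $|2|_C=|4|_C=|7|_C=0$), ``$\vv c_3$ closest'' and ``$\vv c_3$ uniquely closest'' become explicit linear inequalities, and the number of outputs with a given $(a_1,a_3,a_5,a_6)$ is $\binom{|1|_C}{a_1}\binom{|3|_C}{a_3}\binom{|5|_C}{a_5}\binom{|6|_C}{a_6}$. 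Because $(1-\epsilon)^{n-d}\epsilon^{d}$ is strictly decreasing in $d$ for $\epsilon<\tfrac12$, pairing one losing output at distance $d$ with one gaining output at distance $\le d$ always produces a nonnegative contribution to $\lambda_{C'}-\lambda_C$ (equal to $(1-2\epsilon)^2$ times a positive factor when the gaining distance is $d-1$, and to $0$ in the marginal case $d+1$). It therefore suffices to build an injection from the set of losing outputs into the set of gaining outputs that does not increase the distance; this gives $\lambda_{C'}\ge\lambda_C$ for every $\epsilon$.

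Building that injection is where $|3|_C\le 1$ is essential and where the real work lies. Since $a_3\in\{0,|3|_C\}$, the ``$\vv c_3$ uniquely closest'' region is cut out by only two or three tight linear constraints, pinning $a_3$ and forcing $a_5,a_6$ to the boundary in the hard cases. For a generic losing output one can flip $y_j$ together with one coordinate of the $\bspan5$‑ or $\bspan6$‑group (or, when $|3|_C=1$, a $\bspan3$‑coordinate), chosen so that $w(\vv c_3\oplus\cdot)$ is restored while only the distance to $\vv c_1$, $\vv c_2$ or $\vv c_4$ increases, landing on a gaining output at the same distance; when a spare $\bspan1$‑coordinate with a one is available one can instead flip $y_j$ against it, leaving all four distances unchanged. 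The main obstacle is the residual family of losing outputs for which $\vv c_4$ is only barely farther from $\vv y$ than $\vv c_3$ (e.g.\ those with $a_5=0$ and $a_6=|6|_C$): for these the one‑coordinate moves fail, and one seems forced to argue at the level of the counts $\alpha_d(C')-\alpha_d(C)$, matching losing to gaining outputs in aggregate through Pascal/Vandermonde identities relating $\binom{|1|_C}{\cdot}$ to $\binom{|1|_C-1}{\cdot}$, with the parities of $|1|_C$ and of $|3|_C,|5|_C,|6|_C$ ensuring the distances line up. Carrying out this bookkeeping — and seeing that it collapses without the smallness hypothesis — is the crux; everything before it is routine.
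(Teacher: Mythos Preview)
Your setup is sound and matches the paper's: the reduction to $s=3$ and the one-bit framework of \S\ref{sec:1col} are exactly how the paper proceeds. (One small slip: permuting rows $1,2,3$ does \emph{not} act transitively on $\{\bspan3,\bspan5,\bspan6\}$---for instance swapping rows $1,2$ sends $\bspan5\to\bspan9$---so the reduction to $s=3$ genuinely needs column flipping as well, as the paper notes.)

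The real problem is that your proposal stops at the hard part. You correctly identify that an explicit output-level injection breaks down on a residual family, and then gesture at ``Pascal/Vandermonde identities'' without saying which identities or how the parities make them line up. But this is precisely the content of the theorem: everything up to that point is the routine \S\ref{sec:1col} machinery, and the substance is the combinatorial inequality $\sum_{i\le d}\alpha_i^3\ge\sum_{i<d}\alpha_i^5$ of Corollary~\ref{cor:1}. A proposal that names this as ``the crux'' and leaves it open is not yet a proof.

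What the paper actually does is concrete and does not go through an injection. The key observation is that the constraint \eqref{eq:5yc1}, namely $w_3=\tfrac{n+|3|-1}{2}-i$, together with $0\le w_3\le|3|$, forces $\alpha_i^5=0$ except for at most $|3|+1$ values of $i$ near $n/2$; when $|3|\le1$ this means a single value ($|3|=0$) or two values ($|3|=1$). For those few $i$ the paper bounds $\alpha_i^5$ above by replacing $\binom{|5|}{w_5}$ (with $w_5$ pinned below $|5|/2$ by \eqref{eq:5yc4}) by its central value, and then applies the reflection $w_1\mapsto|1|-w_1+1$ in the remaining $\binom{|1|-1}{w_1-1}$ sum. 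On the other side it bounds $\sum_{i\le d}\alpha_i^3$ below using only the $\mc Y_3^B$ contribution, where \eqref{eq:3yd4} again pins $w_5$. The two resulting double sums over $(w_1,w_6)$ then have identical summands, and the index region for $\alpha^5$ is contained in that for $\alpha^3$ because $w_6+\tfrac{|1|-|6|-1}{2}\ge -w_6+\tfrac{|1|+|6|-1}{2}$ once $w_6\ge|6|/2$. No Vandermonde identity is invoked; the argument is a direct containment of summation ranges after the $w_1$-reflection, and this containment is exactly where $|3|\le1$ is indispensable.
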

\begin{IEEEproof}
  See \S\ref{sec:301}.
\end{IEEEproof}

\begin{theorem}\label{thm:18}
  For each Class-I $(n,4)$ code $C$ with $|1|_C\geq 3$, let $ C'$ be the code obtained by
replacing one $\bspan{1}$ column of $ C$ by $\bspan{s}$ with
$s=\argmin_{i=3,5,6} |i|_C$. Then for $n\leq 300$,
$\lambda_{C} \leq \lambda_{C'}$.
\end{theorem}
\begin{IEEEproof}
 This theorem is proved using computer evaluations. 
  See \S\ref{sec:comp}.
\end{IEEEproof}

Using  Lemma~\ref{thm:ccs}, and Theorem~\ref{thm:18}, we can show the existence of an optimal code that is linear for $n\leq 300$. We improve the technique to derive a stronger result that all the optimal codes are equivalent to linear codes for $4\leq n \leq 300$.

\begin{IEEEproof}[Proof of Theorem \ref{thm:n8}]
  By Theorem~\ref{thm:11}, for any Class-I $(n,4)$ code $C$ with $n\geq 4$ and $|1|_C=1$, there
  exists an $(n,4)$ linear code $C'$ such that $\lambda_{C} < \lambda_{C'}$.
  By Theorem \ref{thm:nbig3} and Lemma~\ref{thm:class2},
  if we can further show that for any Class-I $(n,4)$ code $C$ with $|1|_C>1$, there
  exists a Class-I code $C'$ such that $|1|_{C'}=1$ and
  $\lambda_{C} \leq \lambda_{C'}$, 
  then all the optimal $(n,4)$ codes are equivalent to linear codes.

  We first consider $4\leq n\leq 8$. For a Class-I $(n,4)$ code $C$ with $|1|\geq 3$, we have $|3|+|5|+|6|\leq 5$ which implies $\min\{|3|, |5|, |6|\}\leq 1$. By Theorem~\ref{thm:301},
  we have $\lambda_{C} \leq \lambda_{C'}$ for the $(n,4)$ code $C'$ obtained by replacing one $\bspan{1}$ column of $C$ by $\bspan{s}$, where $s\in\{3,5,6\}$ with $|s|_C=0$ or $1$.
  Note that $C'$ is equivalent to a Class-II code. By Lemma~\ref{thm:class2}, there exists a Class-I code $C''$ with $\ls_{C''} = \ls_{C'} \geq \ls_C$ and $|1|_{C''} = |1|_C -2$. Repeat the argument, we can find a Class-I code $C''$ such that $|1|_{C''}=1$ and $\lambda_{C} \leq \lambda_{C''}$.  
Hence, all the optimal $(n,4)$ codes are equivalent to linear codes.

When $8< n \leq 300$, by Theorem~\ref{thm:18}, following the same argument for $n\leq 8$, we can derive that all the optimal $(n,4)$ codes are equivalent to linear codes.
\end{IEEEproof}

\section{An Approach of Comparing Two $(n,4)$ Codes}
\label{sec:formulation}

In this section, we study the ML decoding performance comparison of two $(n,4)$ codes, and prove Theorem~\ref{thm:even}. %

\subsection{Code Comparison Problem Formulation}
\label{sec:compgen}

Following the formulation of $(n,4)$ codes in \S\ref{sec:formulationresults}, 
we further define some notations.
For a binary vector $\vv y$, denote $(\vv y)_i$ or $y_i$ as the $i$th entry of $\vv y$. For example, the $3$rd entry of $\bspan{2}=[0\ 0\ 1\ 0]^\top$ is $(\bspan{2})_3=1$. 
Let $C$ be an $(n,4)$ code with the $j$th codeword/row $\vv c_j$, $j=1,\ldots,4$. We use $\{i\}_C$ to denote the index set of the columns of $C$ equal to $\bspan{i}$.
For $i=0,1,\ldots,15$, define $w_i(\vv y) = \sum_{j\in \{i\}_C}  y_j$ for $\vv y \in \{0,1\}^n$.  Let $\overline{w_i} (\cv y)= |i|_C - w_i(\cv y)$. When $\vv y$ is clear from the context, we write $w_i = w_i(\vv y)$  and $\overline{w_i}=\overline{w_i}(\cv y)$.
For a vector $\vv y\in \{0,1\}^n$, denote
\begin{equation} %
  d_j(\vv y)  =  w(\vv c_j \oplus \vv y) \label{eq:d}
   = \sum_{i=0}^{15} \hat{w}_{ij} %
 \end{equation}
 where
 \begin{equation*}
   \hat{w}_{ij}=
   \begin{cases}
     w_i ,& (\bspan{i})_j = 0, \\
     \overline{w_i}, & (\bspan{i})_j = 1.
   \end{cases}
 \end{equation*}
 We also write $d_j = d_j(\vv y)$ when $\vv y$ is clear from the context.

 \begin{example}\label{ex:dw}
For example, when $C$ only has only the columns of types  $\bspan{0},\bspan{1},\ldots,\bspan{7}$, 
\begin{IEEEeqnarray}{rCl}
  d_1(\vv y) & = & w_0+w_1+w_2+w_3+w_4+w_5+w_6+w_7, \label{eq:d1}\\
  d_2(\vv y) & = & w_0+w_1+w_2+w_3+\overline{w_4}+ \overline{w_5}+\overline{w_6}+\overline{w_7}, \label{eq:d2} \\
  d_3(\vv y) & = & w_0+w_1+\overline{w_2}+\overline{w_3}+ w_4+w_5+\overline{w_6}+\overline{w_7}, \label{eq:d3} \\
  d_4(\vv y) & = & w_0+\overline{w_1}+w_2+\overline{w_3}+ w_4+\overline{w_5}+w_6+\overline{w_7}, \label{eq:d4}
\end{IEEEeqnarray}    
 \end{example}

We compare $C$ with another
$(n,4)$ code $C'$ obtained by modifying $C$ as follows.  
Let $\mc O$ be a nonempty, proper subset of $\{1,2,3,4\}$
and let $\mc P$ be its complement, which is also nonempty.
Let $C'$ be the code obtained by
flipping the first $t$ bits of $\vv c_i$ for each $i\in \mc P$.
Denote by $\vv c'_i$ the $i$th codeword/row of $C'$, $i=1,\ldots,4$.
For $\vv y \in \{0,1\}^n$, let $F_t(\vv y)$ be the vector obtained by
flipping the first $t$ bits of $\vv y$.  We see that
$\vv c_i'=\vv c_i$ for $i\in \mc O$ and $\vv c_i' = F_t(\vv c_i)$ for
$i\in \mc P$.

The ML decoding performance depends on the function $d_C(\vv y)$ defined in \eqref{eq:dcy}.
Denote by
$\vv s_{\tau}$, $\tau = 1,2,\ldots,t$ the $\tau$th column of
$C$.  For $\vv y \in \{0,1\}^n$, let
\begin{equation}\label{eq:dprim}
  d_i'(\vv y) = d_i(F_t(\vv y)) = d_i(\vv y) + \sum_{\tau=1}^t (-1)^{(\vv s_{\tau})_i} (\overline{y_\tau}- y_\tau),
\end{equation}
where $ \overline{y_\tau} = y_\tau\oplus 1$.
For a nonempty subset $\mc S \subset \{1,\ldots, 4\}$, let
\begin{equation*}
  d_{\mc S}(\vv y)  =  \min_{i\in \mc S} d_i(\vv y) \ \ \text{and}\ \ 
  d_{\mc S}'(\vv y)  =  \min_{i\in \mc S} d_i'(\vv y).
\end{equation*}
We have
\begin{IEEEeqnarray}{rCl}
  d_C(\vv y) & = & \min\{d_{\mc O}(\vv y), d_{\mc P}(\vv y)\}, \label{eq:dc} \\
  d_{C'}(\vv y) & = &  \min\{d_{\mc O}(\vv y), d_{\mc P}'(\vv y)\}, \label{eq:dcp} \\
  d_{C'}(F_t(\vv y)) & = & \min\{d_{\mc O}(F_t (\vv y)), d_{\mc P}'(F_t(\vv y))\} \nonumber\\
  & = & \min\{d_{\mc O}'(\vv y), d_{\mc P}(\vv y)\}. \label{eq:dcpf}
\end{IEEEeqnarray}

In the following lemma, we demonstrate a special case of our technique to compare two codes.

\begin{lemma}\label{thm:com}
	For two $ (n,4) $ codes $ C $ and $ C' $,  if there exists a subset $ \mc S\subseteq \{0,1\}^n $ and an one-to-one and onto mapping $ g:\{0,1\}^n\to \{0,1\}^n $ such that $ d_C(\vv y)> d_{C'}(g(\vv y)) $ for $ \vv y\in \mc S$ and $ d_ C(\vv y)= d_{C'}(g(\vv y)) $ for $ \vv y\in \{0,1\}^n\setminus\mc S$, we have $ \lambda_{ C'}\geq \lambda_{ C} $. 
	 Moreover, $ \lambda_{ C'}>\lambda_{C} $ if $ \mc S\neq \emptyset $ and $ \lambda_{ C'}=\lambda_{ C} $  if $ \mc S= \emptyset $ .
\end{lemma}
\begin{IEEEproof}
  Since $g$ is an one-to-one and onto mapping,
  by \eqref{eq:lambda:1},
  \begin{align*}
    \lambda_{ C'} - \lambda_{C}
    &= \frac{1}{4}\sum_{\vv y\in \{0,1\}^n}  \big(1-\epsilon\big)^{n-d_{C'}(\vv y)}\epsilon^{d_{ C'}(\vv y)} - \frac{1}{4}\sum_{\vv y\in \{0,1\}^n}  \big(1-\epsilon\big)^{n-d_C(\vv y)}\epsilon^{d_ C(\vv y)}\\
    &=  \frac{1}{4}\sum_{\vv y\in \{0,1\}^n} \bigg(\Big(1-\epsilon\Big)^{n-d_{ C'}\big(g(\vv y)\big)}\epsilon^{d_{C'}\big(g(\vv y)\big)} -\big(1-\epsilon\big)^{n-d_{ C}(\vv y)}\epsilon^{d_{ C}(\vv y)}    \bigg) \\
    & = \frac{1}{4}\sum_{\vv y\in \mc S} \bigg(\Big(1-\epsilon\Big)^{n-d_{ C'}\big(g(\vv y)\big)}\epsilon^{d_{C'}\big(g(\vv y)\big)} -\big(1-\epsilon\big)^{n-d_{ C}(\vv y)}\epsilon^{d_{ C}(\vv y)}    \bigg).
  \end{align*}
  As for $\vv y\in \mc S$, $ d_C(\vv y)> d_{C'}(g(\vv y)) $ and  $ (1-\epsilon)^{n-x}\epsilon^x $ is strictly decreasing over $ x $ when $ 0<\epsilon <1/2 $, $ \lambda_{ C'}\geq \lambda_{ C} $ with equality if and only if $\mc S = \emptyset$. 
\end{IEEEproof}

In general, our approach to compare the ML decoding performance of $C$
and $C'$ is based on an one-to-one and onto mapping
$ g:\{0,1\}^n\to \{0,1\}^n $ and a partition
$\{\mc Y_i,i=1,\ldots,i_0\}$ of $\{0,1\}^n$, where $i_0$ indicates the
number of subsets in the partition. The mapping $g$ and the partition satisfy
the following property: for each $i=1,\ldots,i_0$,  one of the following
conditions holds:
\begin{enumerate}
\item for all $\vv y \in \mc Y_i$, $d_C(\vv y) = d_{C'}\left(g(\vv y)\right)$;
\item for all $\vv y \in \mc Y_i$, $d_C(\vv y) < d_{C'}\left(g(\vv y)\right)$;
\item for all $\vv y \in \mc Y_i$, $d_C(\vv y) > d_{C'}\left(g(\vv y)\right)$.
\end{enumerate}
Such a mapping $g$ and a partition always exist. For example, the identity mapping $g(\vv y) = \vv y$ and the partition including only the singleton sets. But this example does not help to simplify the problem. For the two special cases used to prove Theorem~\ref{thm:even} and~\ref{thm:odd}, there exists such a partition with $i_0=5$.
Lemma~\ref{thm:com} applies to the case that all the partitions satisfy only conditions 1) and 3). 

In the remainder of this paper, we will discuss two ways of generating $C'$ with $t=1$ and $2$, respectively. 
We write $\min\{a,b\}$ as $a \land b$.
For a function $h:\{0,1\}^n\rightarrow \mathbb{R}$, we write $\{\vv y \in \{0,1\}^n: h(\vv y) \geq 0\}$ as $\{h\geq 0\}$ to simplify the notations.

\subsection{Change of One Column}
\label{sec:1col}

We study how the ML decoding performance is affected after changing one column of an $(n,4)$ code. 
Consider an $(n,4)$ code $C$ with the first column $\bspan{s}$, $0\leq s \leq 15$. Let $C'$ be the code formed by changing the first column of $C$ to $\bspan{s'}$, $s'\neq s$. Let $\mc O$ be the set of index $j$ such that $(\bspan{s})_j = (\bspan{s'})_j$, and $\mc P$ be the set of index $j$ such that $(\bspan{s})_j \neq (\bspan{s'})_j$. When $s'=15-s$, the bits in the first column are all flipped and hence $C$ and $C'$ are equivalent. 
Assume $s'\neq s$ and $s'\neq 15-s$, and hence both $\mc O$ and $\mc P$ are nonempty. 
In this case, $d_i'$ defined in \eqref{eq:dprim} becomes
\begin{equation}\label{eq:d1di}
  d_i'(\vv y) = d_i(\vv y) + (-1)^{(\bspan{s})_i}(\overline{y_1} - y_1).
\end{equation}

\begin{example}\label{ex:s1s3}
Consider an example with
$s=1$ and $s'=3$. Now $\mc O=\{1,2,4\}$ and $\mc P=\{3\}$. Substituting $\bspan{1}$ into \eqref{eq:d1di}, 
\begin{IEEEeqnarray*}{rCl}
	d_1'(\vv y)&=& d_1(\vv y) - y_1 + \overline{y_1}, \label{eq:s1d1}  \\
	d_2'(\vv y)&=& d_2(\vv y) - y_1 + \overline{y_1},  \\
	d_3'(\vv y)&=& d_3(\vv y) - y_1 + \overline{y_1},  \\
	d_4'(\vv y)&=& d_4(\vv y) + y_1 - \overline{y_1}. \label{eq:s1d4}
\end{IEEEeqnarray*}
and hence
\begin{IEEEeqnarray}{rCl}
  d_{\mc O}(\vv y) & = & d_1 \land d_2 \land d_4,  \label{eq:op1}\\
  d_{\mc P}(\vv y) & = & d_3, \\
  d_{\mc O}'(\vv y) & = & [(d_1\land d_2)- y_1 + \overline{y_1}]\land (d_4 + y_1 - \overline{y_1}), \\
  d_{\mc P}'(\vv y) & = & d_3 - y_1 + \overline{y_1}. \label{eq:op4}
\end{IEEEeqnarray}
\end{example}

The crucial part of our technique for comparing $C$ and $C'$ is
the following $5$ subsets of $\{0,1\}^n$:
\begin{IEEEeqnarray}{rCl}
  \mc Y_1 & = & \{ d_{\mc O} \leq d_{\mc P} < d_{\mc P}' \} \cup \{ d_{\mc O} \leq d_{\mc P}' \leq d_{\mc P}, d_{\mc O}' \leq d_{\mc P}' \},  \label{eq:y1}\\
  \mc Y_2 & = & \{ d_{\mc P} \leq d_{\mc P}', d_{\mc P} < d_{\mc O}\} \cup  \{ d_{\mc P}' < d_{\mc P} \leq d_{\mc O}, d_{\mc P} \leq d_{\mc O}'\}, \\
  \mc Y_3 & = & \{d_{\mc P}' = d_{\mc O}' < d_{\mc P} = d_{\mc O}\},  \label{eq:y3}  \\
  \mc Y_4 & = & \{d_{\mc P} = d_{\mc P}' = d_{\mc O} < d_{\mc O}' \}, \\
  \mc Y_5 & = & \{ d_{\mc P}' = d_{\mc O} < d_{\mc O}' = d_{\mc P} \}.  \label{eq:y5}
\end{IEEEeqnarray}
Recall that $F_1$ (defined in \S\ref{sec:compgen}) flips the first bit of a binary vector.
Define a mapping $g_1: \{0,1\}^n \rightarrow \{0,1\}^n$ as
\begin{equation*}
  g_1(\vv y)  =
  \begin{cases}
    \vv y  & \vv y \in \mc Y_1 \cup \mc Y_3, \\
    F_1(\vv y) & \text{otherwise}.
  \end{cases}
\end{equation*}
The next lemma shows that $\{\mc Y_1, \mc Y_2, \mc Y_3, \mc Y_4, \mc Y_5\}$ and $g_1$ satisfy the properties described in \S\ref{sec:compgen} for $C$ and $C'$.

\begin{lemma}\label{lemma:1}
  $\{\mc Y_1, \mc Y_2, \mc Y_3, \mc Y_4, \mc Y_5\}$ defined in \eqref{eq:y1}--\eqref{eq:y5} forms a partition of $\{0,1\}^n$, and $g_1$ is a one-to-one and onto mapping. Moreover, for the $(n,4)$ codes $C$ and $C'$ formulated above with only the first column different,  
  \begin{enumerate}
  \item for $\vv y \in \mc Y_1$, $d_C(\vv y) = d_{C'}(\vv y)=d_{\mc O}$;
  \item for $\vv y \in \mc Y_2$, $d_C(\vv y) = d_{C'}(\vv y') = d_{\mc P}$ where $\vv y' \triangleq F_1(\vv y)$;
  \item for $\vv y \in \mc Y_3$, $d_C(\vv y) = d_{\mc P} = d_{C'}(\vv y)+1=d_{\mc P}'+1$;
  \item for $\vv y \in \mc Y_4$, $d_C(\vv y) = d_{\mc O} = d_{C'}(\vv y')=d_{\mc P}$ where $\vv y'\triangleq F_1(\vv y)$;
  \item for $\vv y \in \mc Y_5$, $d_C(\vv y) +1 = d_{\mc O}+1 = d_{C'}(\vv y')=d_{\mc P}$ where $\vv y'\triangleq F_1(\vv y)$.
  \end{enumerate}
\end{lemma}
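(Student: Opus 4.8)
The plan is to first isolate two structural properties of the modified distances that drive everything. From \eqref{eq:d1di} one checks by a short computation (using $d_i(\vv y)=w(\vv c_i\oplus\vv y)$) that $d_i'(\vv y)=w(\vv c_i\oplus f_1(\vv y))$ for every $i$; this is the identity underlying \eqref{eq:dcp} and \eqref{eq:dcpf}. Two consequences will be used repeatedly. First, replacing $\vv y$ by $f_1(\vv y)$ interchanges $d_i$ and $d_i'$ pointwise, hence interchanges $d_{\mc O}\leftrightarrow d_{\mc O}'$ and $d_{\mc P}\leftrightarrow d_{\mc P}'$. Second, $d_i$ and $d_i'$ differ by exactly $1$ for every $i$ (one bit of $\vv y$ is flipped), so the minimum over any fixed index set changes by at most $1$; that is, $|d_{\mc O}-d_{\mc O}'|\le 1$ and $|d_{\mc P}-d_{\mc P}'|\le 1$. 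Write $a=d_{\mc O}(\vv y)$, $b=d_{\mc P}(\vv y)$, $a'=d_{\mc O}'(\vv y)$, $b'=d_{\mc P}'(\vv y)$, so that by \eqref{eq:dc}--\eqref{eq:dcpf} we have $d_C(\vv y)=a\land b$, $d_{C'}(\vv y)=a\land b'$, and $d_{C'}(f_1(\vv y))=a'\land b$.

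Next I would prove that $\{\mc Y_1,\dots,\mc Y_5\}$ is a partition of $\{0,1\}^n$. All the membership conditions depend only on the relative order of $a,b,a',b'$. Writing $a'=a+\alpha$, $b'=b+\beta$ with $\alpha,\beta\in\{-1,0,1\}$ (admissible by the second property above) and $r=b-a$, each of the conditions defining $\mc Y_1,\dots,\mc Y_5$ becomes a condition on $(\alpha,\beta)$ together with an interval condition on $r\in\mathbb Z$. For each of the nine choices of $(\alpha,\beta)$ one checks that these conditions cut the line of $r$-values into five pairwise disjoint pieces whose union is all of $\mathbb Z$, three of the pieces being single points or empty (these give $\mc Y_3$, $\mc Y_4$, $\mc Y_5$); the two-line conditions defining $\mc Y_3$, $\mc Y_4$, $\mc Y_5$ in \eqref{eq:y3}--\eqref{eq:y5} are engineered exactly to catch the boundary values of $r$, where they force $b=b'+1$ or $a=a'+1$. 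This finite but bookkeeping-heavy verification is the step I expect to be the main obstacle; the rest is short.

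For the second family, observe that by the first property above, $f_1$ acts on the triple by $(\alpha,\beta,r)\mapsto(-\alpha,-\beta,\,r+\beta-\alpha)$. Using the explicit description of $\mc Y_1$ and $\mc Y_3$ obtained in the previous step, I would check the inclusion $f_1(\mc Y_1\cup\mc Y_3)\subseteq\mc Y_1\cup\mc Y_3$. Since $f_1$ is an involution of the finite set $\{0,1\}^n$, this inclusion is an equality, so the complement $\mc Y_2\cup\mc Y_4\cup\mc Y_5$ is also $f_1$-invariant, whence $\mc Y_2'\cup\mc Y_4'\cup\mc Y_5'=\mc Y_2\cup\mc Y_4\cup\mc Y_5$; as $f_1$ is a bijection, $\mc Y_2'$, $\mc Y_4'$, $\mc Y_5'$ are pairwise disjoint, so $\{\mc Y_1,\mc Y_2',\mc Y_3,\mc Y_4',\mc Y_5'\}$ is a partition and in particular $|\mc Y_i|=|\mc Y_i'|$ for $i=2,4,5$.

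Finally I would read off the five distance relations by unwinding the definitions with the abbreviations above. On $\mc Y_1$ the defining inequalities force $a\le b$ and $a\le b'$, so $d_C(\vv y)=a\land b=a=a\land b'=d_{C'}(\vv y)=d_{\mc O}$. On $\mc Y_2$ they force $b\le a$, and either $b<a$ (so $b\le a-1\le a'$ by $|a-a'|\le1$) or $b\le a'$ directly; either way $d_C(\vv y)=a\land b=b=a'\land b=d_{C'}(f_1(\vv y))$, with $f_1(\vv y)\in\mc Y_2'$ by definition. On $\mc Y_3$, the condition $b'=a'<b=a$ gives $d_C(\vv y)=b$, $d_{C'}(\vv y)=b'$, and $b=b'+1$ since $b'<b$ and $|b-b'|\le1$. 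Cases $\mc Y_4$ and $\mc Y_5$ are handled the same way, the ``$+1$'' in the $\mc Y_5$ claim arising from $a'=b$, $a<a'$ and $|a-a'|\le1$. This completes the proof.
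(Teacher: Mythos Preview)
Your proposal is correct and follows essentially the same route as the paper's proof. Both arguments hinge on the two structural facts you isolate: the involution $f_1$ swaps $d_{\mc S}\leftrightarrow d_{\mc S}'$, and $|d_{\mc S}-d_{\mc S}'|\le 1$ (the paper records this as \eqref{eq:lemma5d}). For the first partition, the paper groups the pieces as $\mc Y_1\cup\mc Y_4\cup\mc Y_5=\{d_{\mc O}\le d_{\mc P}\land d_{\mc P}'\}$ and $\mc Y_2\cup\mc Y_3=\{d_{\mc O}> d_{\mc P}\land d_{\mc P}'\}$ and then refines each half, whereas you do a systematic nine-case check over $(\alpha,\beta)\in\{-1,0,1\}^2$; these are two bookkeeping styles for the same finite verification. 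For the second partition, the paper shows $(\mc Y_2'\cup\mc Y_4'\cup\mc Y_5')\cap(\mc Y_1\cup\mc Y_3)=\emptyset$ and then invokes bijectivity of $f_1$, while you show the complementary statement $f_1(\mc Y_1\cup\mc Y_3)\subseteq\mc Y_1\cup\mc Y_3$ and invoke the involution; these are logically equivalent. The verification of claims 1)--5) is identical in both.
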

\begin{IEEEproof}
  By checking the definition, we see that $\mc Y_1,\ldots \mc Y_5$ are all disjoint. To show they form a partition, we can verify that
	\begin{IEEEeqnarray*}{rCl}
		\mc Y_1 \cup \mc Y_4 \cup \mc Y_5 & = &  \{d_{\mc O} \leq d_{\mc P}\land d_{\mc P}'\}, \\
		\mc Y_2 \cup \mc Y_3 & = &  \{d_{\mc O} > d_{\mc P}\land d_{\mc P}'\}
	\end{IEEEeqnarray*}
	and hence $(\mc Y_1 \cup \mc Y_4 \cup \mc Y_5)\cup(\mc Y_2 \cup \mc Y_3) = \{0,1\}^n$.
	
	We first prove that $\mc Y_1 \cup \mc Y_4 \cup \mc Y_5  =  \{d_{\mc O} \leq d_{\mc P}\land d_{\mc P}'\}$, where the three sets can be rewritten as follows: First,
	\begin{IEEEeqnarray}{rCl}
          \mc Y_1 &=&  \{ d_{\mc O} \leq d_{\mc P} < d_{\mc P}' \} \cup \{ d_{\mc O} \leq d_{\mc P}' \leq d_{\mc P}, d_{\mc O}' \leq d_{\mc P}' \}\nonumber\\
          &=&(\{ d_{\mc O}\leq d_{\mc P}\land d_{\mc P}'\}\cap \{  d_{\mc P} < d_{\mc P}'  \})\cup  (\{ d_{\mc O}\leq d_{\mc P}\land d_{\mc P}'\} %
          \cap \{d_{\mc P}' \leq d_{\mc P}, d_{\mc O}' \leq d_{\mc P}' \}  ) \label{eq:lemma5a}.
        \end{IEEEeqnarray}
        For all $\vv y\in \{0,1\}^n$, we have
	\begin{equation}
          |d_{\mc S}(\vv y) -d_{\mc S}'(\vv y)| \leq 1.  \label{eq:lemma5d}
	\end{equation}
        Hence, if $d_{\mc O}\leq d_{\mc P}'<d_{\mc O}'$, then $d_{\mc O}= d_{\mc P}'$. So
        \begin{IEEEeqnarray}{rCl}
          \mc Y_4&=& \{d_{\mc P} = d_{\mc P}' = d_{\mc O} < d_{\mc O}' \}\nonumber\\
          & = & \{ d_{\mc O}\leq d_{\mc P}\land d_{\mc P}'\}\cap \{d_{\mc P}'=d_{\mc P}, d_{\mc O}'>d_{\mc P}' \}.
          \label{eq:lemma5b}
        \end{IEEEeqnarray}
        Furthermore, if $d_{\mc O}'>d_{\mc P}' $ we have $d_{\mc O}\geq d_{\mc P}'$. So
        \begin{IEEEeqnarray}{rCl}
          \mc Y_5 &=&\{ d_{\mc P}' = d_{\mc O} < d_{\mc O}' = d_{\mc P} \} \nonumber\\
          & = & \{ d_{\mc O}\leq d_{\mc P}\land d_{\mc P}'\}\cap \{d_{\mc P}' < d_{\mc P},d_{\mc O}'>d_{\mc P}'\}.
        \label{eq:lemma5c}
	\end{IEEEeqnarray}
	By \eqref{eq:lemma5b} and \eqref{eq:lemma5c}, we have
	\begin{IEEEeqnarray}{rCl}
		\mc Y_4\cup\mc Y_5 &=& \{ d_{\mc O}\leq d_{\mc P}\land d_{\mc P}'\} \cap \{d_{\mc P}' \leq d_{\mc P},d_{\mc O}'>d_{\mc P}' \}.\nonumber
	\end{IEEEeqnarray}
	From \eqref{eq:lemma5a}, this further implies 
	\begin{IEEEeqnarray*}{rCl}
		\mc Y_1 \cup \mc Y_4 \cup \mc Y_5  =  \{d_{\mc O} \leq d_{\mc P}\land  d_{\mc P}'\}.
	\end{IEEEeqnarray*}
	
	Similarly, we can prove $\mc Y_2\cup \mc Y_3=\{d_{\mc O} > d_{\mc P}\land  d_{\mc P}'\}$ by rewriting the two sets as follows: 
	\begin{IEEEeqnarray*}{rCl}
		\mc Y_2 &=& \{ d_{\mc P} \leq d_{\mc P}', d_{\mc P} < d_{\mc O}\} \cup  \{ d_{\mc P}' < d_{\mc P} \leq d_{\mc O}, d_{\mc P} \leq d_{\mc O}'\}\nonumber \\
                & = & (\{d_{\mc O} > d_{\mc P}\land  d_{\mc P}'\}\cap \{d_{\mc P} \leq d_{\mc P}' \}  )\cup ( \{d_{\mc O} > d_{\mc P}\land  d_{\mc P}'\} \cap \{d_{\mc P} > d_{\mc P}' , d_{\mc P} \leq d_{\mc O}'\}   ), \label{eq:lemma5e}\\
		\mc Y_3&=& \{d_{\mc P}' = d_{\mc O}' < d_{\mc P} = d_{\mc O}\}\nonumber\\
                & = & \{d_{\mc O} > d_{\mc P}\land  d_{\mc P}'\}\cap \{d_{\mc P} > d_{\mc P}' , d_{\mc P} > d_{\mc O}'\}.\label{eq:lemma5f} 
	\end{IEEEeqnarray*}

        For $i=2,4,5$, define
        $\mc Y_i'  =  \left\{F_1(\vv y):\vv y\in \mc Y_i\right\}$, which can be 
        rewritten as
	\begin{IEEEeqnarray*}{rCl}
		\mc Y_2' & = & \{ d_{\mc P}' \leq d_{\mc P}, d_{\mc P}' < d_{\mc O}'\} \cup \{ d_{\mc P} < d_{\mc P}' \leq d_{\mc O}', d_{\mc P}' \leq d_{\mc O}\}, \\
		\mc Y_4' & = & \{d_{\mc P} = d_{\mc P}' = d_{\mc O}' < d_{\mc O}\}, \\
		\mc Y_5' & = & \{ d_{\mc P} = d_{\mc O}' < d_{\mc O}= d_{\mc P}'\}.
	\end{IEEEeqnarray*}
	It can be verified that $(\mc Y_2'\cup \mc Y_4'\cup \mc Y_5')\cap (\mc Y_1\cup \mc Y_3) = \emptyset$.
	As $F_1$ is a one-to-one mapping, $\mc Y_2'\cup \mc Y_4'\cup \mc Y_5' = \mc Y_2\cup \mc Y_4\cup \mc Y_5$. Hence, we conclude that $g_1$ is a one-to-one and onto mapping. %
	
	We use the following facts in the remaining part of the proof  (ref.~\eqref{eq:dc}--\eqref{eq:dcpf}):
	\begin{IEEEeqnarray*}{rCl}
          d_C(\vv y) & = & \min\{d_{\mc O}(\vv y), d_{\mc P}(\vv y)\} \\
          d_{C'}(\vv y) & = &  \min\{d_{\mc O}(\vv y), d_{\mc P}'(\vv y)\}\\
          d_{C'}(F_1(\vv y)) & = &  \min\{d_{\mc O}'(\vv y), d_{\mc P}(\vv y)\}.
        \end{IEEEeqnarray*}
        The claims 1)--5) can be proved as follows:
        \begin{itemize}
        \item  
	For $\vv y \in \mc Y_1$, as $d_{\mc O}(\vv y) \leq \min\{d_{\mc P}(\vv y), d_{\mc P}'(\vv y)\}$, $d_C(\vv y) = d_{C'}(\vv y)=d_{\mc O}(\vv y)$.
      \item
	For $\vv y \in \mc Y_2 $, as $d_{\mc P}(\vv y) \leq \min\{d_{\mc O}(\vv y),d_{\mc O}'(\vv y)\}$, $d_C(\vv y) = d_{C'}(\vv y')= d_{\mc P}(\vv y)$.
      \item 
	For $\vv y\in \mc Y_3$, $d_C(\vv y)=d_{\mc O}(\vv y)\land d_{\mc P}(\vv y)=d_{\mc P}(\vv y)$. Hence,
	\begin{IEEEeqnarray*}{rCl}
		d_{ C'}(\vv y)&=& d_{\mc O}(\vv y)\land  d_{\mc P}'(\vv y)=d_{\mc P}'(\vv y)<d_{C}(\vv y).
	\end{IEEEeqnarray*}
	By \eqref{eq:lemma5d}, we have $d_{\mc P(\vv y)}=d_{\mc P}'(\vv y)+1$.
      \item For $\vv y\in \mc Y_4$, as $d_{\mc P}(\vv y) = d_{\mc O}(\vv y) < d'_{\mc O}(\vv y)$, $d_C(\vv y) = d_{\mc O}(\vv y) = d_{\mc P}(\vv y) = d_{C'}(\vv y')$. 
      \item 
	For $\vv y\in \mc Y_5$, $d_C(\vv y)=d_{\mc O}(\vv y)\land d_{\mc P}(\vv y)=d_{\mc O}(\vv y)$. Hence,
	\begin{IEEEeqnarray*}{rCl}
		d_{ C'}(\vv y')&=& d_{\mc O}'(\vv y)\land  d_{\mc P}(\vv y)=d_{\mc O}'(\vv y) = d_{\mc P}(\vv y) >d_{C}(\vv y).
	\end{IEEEeqnarray*}
	By \eqref{eq:lemma5d}, we have $d_{\mc O}'(\vv y)=d_{\mc O}(\vv y)+1$.
        \end{itemize}
	The proof is completed.
\end{IEEEproof}

Now we move on to compare $\ls_{\mc C}(\epsilon)$ and $\ls_{\mc C'}(\epsilon)$ as defined in \eqref{eq:lambda}.
Define for $i=1,\ldots, 5$ and $d=0,1,\ldots,n$,
\begin{equation*}
  \alpha_{C}^i(d)= |\{\vv y\in \mc Y_i: d_{C}(\vv y) = d\}|.
\end{equation*}
As $\{\mc Y_1, \mc Y_2, \mc Y_3, \mc Y_4, \mc Y_5\}$ is a partition of $\{0,1\}^n$, we have
\begin{equation}\label{eq:10}
  \alpha_{C}(d) = \sum_{i=1}^5 \alpha_{C}^i(d).  
\end{equation}
As we will show in the following theory, the comparison of $\ls_{\mc C}(\epsilon)$ and $\ls_{\mc C'}(\epsilon)$ uses only $\alpha_C^3$ and $\alpha_C^5$. 
We have $\alpha_{C}^3(0) = 0$ and $\alpha_{C}^5(n)=0$:
\begin{itemize}
\item For $\vv y \in \mc Y_3$, $d_C(\vv y) = d_{\mc P} \land d_{\mc O} > d_{\mc P}' \geq 0$, and hence $\alpha_{C}^3(0) = 0$. 
\item For $\vv y \in \mc Y_5$, $d_C(\vv y ) = d_{\mc P} \land d_{\mc O} \leq d_{\mc O} < d_{\mc P} \leq n$, and hence $\alpha_{C}^5(n)=0$.
\end{itemize}

\begin{theorem}\label{the:1}
  Given BSC($\epsilon$), $0<\epsilon<1/2$, and two $(n,4)$ codes $C$ and $C'$ with only one column difference,
  \begin{equation*}
    \ls_{C'}(\epsilon)-\ls_{ C} (\epsilon) = \frac{(1-\epsilon)^n}{4}\left(1-\frac{\epsilon}{1-\epsilon}\right) \sum_{d=1}^n\left[\alpha_{C}^3(d)-\alpha_{C}^5(d-1)\right] \left(\frac{\epsilon}{1-\epsilon}\right)^{d-1}.
  \end{equation*}
  Moreover,
\begin{enumerate}
\item 	$\ls_{C'}(\epsilon) > \ls_{C}(\epsilon)$ if and only if
\begin{IEEEeqnarray*}{rCl}
	\sum_{d=1}^n\left[\alpha_{C}^3(d)-\alpha_{C}^5(d-1)\right] \left(\frac{\epsilon}{1-\epsilon}\right)^{d-1} >0;
\end{IEEEeqnarray*}
\item 	$\ls_{C'} (\epsilon)= \ls_{ C}(\epsilon)$ if and only if
\begin{IEEEeqnarray*}{rCl}
	\sum_{d=1}^n\left[\alpha_{C}^3(d)-\alpha_{C}^5(d-1)\right] \left(\frac{\epsilon}{1-\epsilon}\right)^{d-1} = 0.
\end{IEEEeqnarray*}
\end{enumerate}
\end{theorem}
\begin{IEEEproof}
  Due to code equivalence, we suppose that the difference of $C$ and
  $C'$ is in the first column.  As shown in the proof of
  Lemma~\ref{lemma:1},
  $\{\mc Y_1, \mc Y_2', \mc Y_3, \mc Y_4', \mc Y_5'\}$ is a partition
  of $\{0,1\}^n$. Define
\begin{IEEEeqnarray}{rCl}
	\alpha_{C'}^1(d) & = & |\{\vv y\in \mc Y_1: d_{C'}(\vv y) = d\}| = \alpha_C^1(d), \label{eq:1-1}\\
	\alpha_{C'}^2(d) & = & |\{\vv y\in \mc Y_2': d_{C'}(\vv y) = d\}| = \alpha_C^2(d), \label{eq:1-2}\\
	\alpha_{C'}^3(d) & = & |\{\vv y\in \mc Y_3: d_{ C'}(\vv y) = d\}| =
	\begin{cases}
		\alpha_{C}^3(d+1) & d<n,\\
		0 & d=n,
	\end{cases} \label{eq:1-3}\\
	\alpha_{C'}^4(d) & = & |\{\vv y\in \mc Y_4': d_{ C'}(\vv y) = d\}| = \alpha_d^4( C), \label{eq:1-4}\\
	\alpha_{C'}^5(d) & = & |\{\vv y\in \mc Y_5': d_{ C'}(\vv y) = d\}| =
	\begin{cases}
		\alpha_{C}^5(d-1) & d\geq 1,\\
		0 & d=0,
	\end{cases}\label{eq:1-5}
\end{IEEEeqnarray}
Where the second equality in each line follows from Lemma~\ref{lemma:1}. We have $\alpha_{C'}(d) = \sum_{i=1}^5 \alpha_{C'}^i(d)$. Together with \eqref{eq:10}, we write
\begin{IEEEeqnarray*}{rCl}
  \ls_{C'}(\epsilon)-\ls_{ C} (\epsilon)
  & = & \frac{1}{|C|} \sum_{d=0}^n( \alpha_{C'}(d)-\alpha_{C}(d) (1-\epsilon)^{n-d}\epsilon^{d}\\
  & = & \frac{1}{|C|} \sum_{d=0}^n\sum_{i=1}^5( \alpha_{C'}^i(d)-\alpha_{C}^i(d)) (1-\epsilon)^{n-d}\epsilon^{d} \\
  & = & \frac{1}{|C|} \sum_{d=0}^n\sum_{i=3,5}( \alpha_{C'}^i(d)-\alpha_{C}^i(d)) (1-\epsilon)^{n-d}\epsilon^{d}, \IEEEyesnumber \label{eq:1-6}
\end{IEEEeqnarray*}
where the last equality follows from \eqref{eq:1-1}, \eqref{eq:1-2} and \eqref{eq:1-4}.
By substituting \eqref{eq:1-3} and \eqref{eq:1-5} into \eqref{eq:1-6}, we get  %
\begin{IEEEeqnarray*}{rCl}
  \ls_{C'}(\epsilon)-\ls_{C} (\epsilon)
  & = &\frac{(1-\epsilon)^n}{|C|}\Bigg(
    \sum_{d=1}^{n-1}\left[\alpha_{C}^3(d+1)-\alpha_{C}^3(d)+\alpha_{C}^5(d-1)-\alpha_{C}^5(d)\right] \left(\frac{\epsilon}{1-\epsilon}\right)^{d}  \\
    & & + \alpha_{C}^3(1) - \alpha_{C}^5(0) + \left[-\alpha_C^3(n) + \alpha_C^5(n-1)\right] \left(\frac{\epsilon}{1-\epsilon}\right)^{n} \Bigg)\\
  & = &\frac{(1-\epsilon)^n}{|C|}
  \sum_{d=1}^n\left[\alpha_{C}^3(d)-\alpha_{C}^5(d-1)\right] \left(\frac{\epsilon}{1-\epsilon}\right)^{d-1}\left(1-\frac{\epsilon}{1-\epsilon}\right).
\end{IEEEeqnarray*}
The proof is completed by further checking when $\ls_{ C'}(\epsilon) > \ls_{ C}(\epsilon)$ and $\ls_{ C'}(\epsilon) = \ls_{ C}(\epsilon)$.
\end{IEEEproof}

The comparison in Theorem~\ref{the:1} depends on the crossover probability $\epsilon$ and hence is not universal. We further derive some universal code comparison results using Theorem~\ref{the:1}. 

\begin{corollary}\label{cor:1}
  For two $(n,4)$ codes $C$ and $C'$ with only one column different, 
  \begin{enumerate}
  \item 	$\ls_{ C'} = \ls_{ C}$ if for $d=1,\ldots, n$,
    \begin{equation*}
      \sum_{i=1}^d\alpha_{C}^3(i) = \sum_{i=0}^{d-1}  \alpha_{C}^5(i);
    \end{equation*} 
  \item $\ls_{ C'} > \ls_{C}$ if $      \sum_{i=1}^{d'}\alpha_{C}^3(i) \geq  \sum_{i=0}^{d'-1}  \alpha_{C}^5(i)$ for $d'=1,\dots,n$ and there exists  $d\in \{1,\ldots, n\}$ such that
    \begin{equation*}
      \sum_{i=1}^d\alpha_{C}^3(i) > \sum_{i=0}^{d-1}  \alpha_{C}^5(i);
    \end{equation*} 
  \item When $\mc Y_5=\emptyset$, $\ls_{ C'} \geq \ls_{C}$, where the equality holds if and only if $\mc Y_3=\emptyset$.
  \end{enumerate}
\end{corollary}
\begin{IEEEproof}
  Let $\epsilon_0=\frac{\epsilon}{1-\epsilon}$ and let $\Psi_d = \sum_{i=1}^d\left[\alpha_{C}^3(i) -  \alpha_{C}^5(i-1)\right]$ for $d=1,\ldots, n$ and $\Psi_0=0$.
  Write
  \begin{IEEEeqnarray*}{rCl}
    \sum_{d=1}^n[\alpha_{C}^3(d)-\alpha_{C}^5(d-1)] \left(\frac{\epsilon}{1-\epsilon}\right)^{d-1}
    & = & 
    \sum_{d=1}^n(\Psi_d-\Psi_{d-1}) \epsilon_0^{d-1} \\
    & = & \Psi_n\epsilon_0^{n-1}+ \sum_{d=1}^{n-1} \Psi_d (\epsilon_0^{d-1}-\epsilon_0^d).
  \end{IEEEeqnarray*}
  Note that for $0<\epsilon<\frac{1}{2}$, $\epsilon_0^d=\left(\frac{\epsilon}{1-\epsilon}\right)^{d}$ is a strictly  decreasing function of $d$. The first two claims can be proved as follows:
  \begin{enumerate}
  \item When $\Psi_d=0$, for $d=1,\ldots,n$, we have $\lambda_{C'}(\epsilon)= \lambda_{C}(\epsilon)$ for any $\epsilon$ by Theorem~\ref{the:1}--1).
  \item When $\Psi_d'\geq 0$ for all $d'=1,\dots,n$ and $\Psi_d>0$, for some $d\in\{1,\ldots,n\}$ we have $\lambda_{C'}(\epsilon)> \lambda_{C}(\epsilon)$ for any $\epsilon$ by Theorem~\ref{the:1}--2) due to $\epsilon_0^{n-1}>0$ and $\epsilon_0^{d-1}-\epsilon_0^d>0$ for $\forall d\in \{1,\ldots, n-1\}$.
  \end{enumerate}
  To prove the last claim, as $\mc Y_5=\emptyset$, $\alpha_d^5(C)=0$ and hence $\Psi_d\geq 0$ for $d=1,\ldots,n$. Thus, by the first two claims, $\ls_{ C'} \geq \ls_{C}$.
  Suppose $\mc Y_3=\emptyset$, which implies $\alpha_d^3(C)=0$. By claim 1), $\ls_{ C'} = \ls_{C}$.
  Suppose $\mc Y_3\neq \emptyset$. Since $|\mc Y_3| = \sum_{d=1}^n \alpha_{C}^3(d)>0$, $\alpha_{C}^3(d)>0$ for some $d\in \{1,\dots,n\}$. By Theorem~\ref{the:1}--1), $\ls_{ C'}(\epsilon) > \ls_{C}(\epsilon)$ for all $0<\epsilon<1/2$.
\end{IEEEproof}

The general approach described above can be used to study the optimality of codes with  $\bspan{0}$ columns (see the  \S\ref{sec:0column}), to find optimal linear $(n,4)$ codes (see \S\ref{sec:linearcode}), and to analyze Class-I codes (see \S\ref{sec:classI}). %
In \S\ref{sec:proof:even}, we will use Corollary~\ref{cor:1} to prove Theorem~\ref{thm:even}.

\subsection{Proof of Theorem~\ref{thm:0column}}\label{sec:0column}

Suppose an $(n,4)$ code $C$ has the first column $\bspan{0}$. Let code $C'$ be the $(n,4)$ code obtained by flipping bits in the rows of $\mc O\subset \{1,2,3,4\}$ of the first column of $C$. Theorem~\ref{thm:0column} states that $\lambda_{C'}\geq \lambda_C$. In the subsection, we give the proof for Theorem~\ref{thm:0column} based on the discussion in \S\ref{sec:1col}.

Let $\mc P=\{1,2,3,4\}\setminus \mc O$. 
Substituting $s=0$ to the discussion in \S\ref{sec:1col}, we have 
	\begin{IEEEeqnarray}{rCl}
		d_{\mc O}'(\vv y) & = & d_{\mc O}(\vv y) + \overline{y_1} -y_1, \label{eq:0column2}\\
		d_{\mc P}'(\vv y) & = & d_{\mc P}(\vv y) + \overline{y_1} -y_1.\label{eq:0column3}
	\end{IEEEeqnarray}
	As $d_{\mc O} - d_{\mc P} = d_{\mc O}' - d_{\mc P}'$, by the definition in \eqref{eq:y5}, $\mc Y_5 = \emptyset$. Hence by Corollary~\ref{cor:1}, we have $\lambda_{C'}\geq \lambda_C$.
We then check whether $\ls_{C'} =\ls_C$ through verifying whether $\mc Y_3 =\emptyset$.
  
  We first prove 1). WLOG, suppose $C=C_0$, i.e.,  $ |0|_{C}+ |5|_{C}+ |6|_{C}= n$ and $|5|_{C}, |6|_{C}$ are odd. For any $\cv y\in \{0,1\}^n$,
\begin{align}
  d_1(\cv y) &= w_0 + w_5 + w_6 , \\ d_2(\cv y) &= w_0 + \overline{w_5} + \overline{w_6} ,\label{eq:0columnc}\\
  d_3(\cv y) &= w_0 + w_5 +\overline{ w_6} , \\ d_4(\cv y) &= w_0 + \overline{w_5} + w_6.\label{eq:0columnd}
\end{align}
By the definition in \eqref{eq:y3},
\begin{equation}
  \label{eq:6}
  \mc Y_3 = \{\vv y\in \{0,1\}^n:y_1=1, d_{\mc O}(\vv y) = d_{\mc P}(\vv y)\}.
\end{equation}
For $\cv y\in \mc Y_3$, there must exist $i, j,h,k$ such that $\{i,j,h,k\} = \{1,2,3,4\}$ and $d_i = d_j\leq d_h \land d_k$ by the definition of $\mc Y_3$.
Consider the $6$ cases of $(i,j)$:
\begin{itemize}
\item When $(i,j) = (1,3)$, $(1,4)$, $(2,3)$ or $(2,4)$, as $|5|_{C}$ and $|6|_{C}$ are odd, $d_i+d_j$ is odd (checking~\eqref{eq:0columnc}--\eqref{eq:0columnd}), which is a contradiction to $d_i=d_j$. 
\item When $(i,j) = (1,2)$,  $d_1=d_2\leq d_3\land d_4$. By~\eqref{eq:0columnc}--\eqref{eq:0columnd}, we have $w_6 = \overline{w_6}$ and $w_5 = \overline{w_5}$, which implies $	w_6= \frac{|6|_{C}}{2} $ and $w_5 = \frac{|5|_{C}}{2}$. We get a contradiction to the assumption that $|5|_{C}$ and $|6|_{C}$ are odd.
\item When $(i,j) = (3,4)$, $d_3=d_4\leq d_1\land d_2$. The same contradiction as the previous case can be obtained. 
\end{itemize}
Thus we have $\mc Y_3=\emptyset$ for any $\mc O$ and $\mc P$.

Now we prove 2). For the code $C$, if the distances between a codeword with two other codewords $\cv c, \cv c'$ are both odd, the $\cv c$ and $ \cv c'$  must be of an even distance. Thus there exist two codewords in $C$ of an even distance. WLOG, suppose $\cv c_1=\cv 0$ and $w(\vv c_1\oplus \vv c_2)$ is even.
We prove 2) for $s'=5$, where $\mc O=\{1,3\}$ and $\mc P=\{2,4\}$.
Following \eqref{eq:6},
\begin{IEEEeqnarray*}{rCl}
  \mc Y_3 
  & = & \{\vv y\in \{0,1\}^n:y_1=1, d_1 \land d_3 = d_2\land d_4\}.
\end{IEEEeqnarray*}
$\mc Y_3$ is not empty if we find $\cv y \in \{0,1\}^n$ such that 
\begin{equation}\label{eq:0column1}
  d_3\land d_4 \geq d_1= 
  d_2 = \frac{w(\cv c_1\oplus \cv c_2)}{2}  +1 =  \frac{w_4+w_5+w_6+w_7}{2}  +1,%
\end{equation}
where the last equality follows from $\vv c_0=\cv 0$ and \eqref{eq:w1}. 
Let $\cv y\in \{0,1\}^n$ satisfy
\begin{equation*}
	y_j=
\begin{cases}
1 & j=1,\\
0  & j\geq 2\text{ and } j\notin \cup_{i=4}^7 \{i\}_C.
\end{cases}
\end{equation*} 
Substituting this form of $\vv y$ into \eqref{eq:d1}--\eqref{eq:d4}, we get
\begin{IEEEeqnarray*}{rCl}
  d_1(\vv y)&=&1+w_4+w_5+w_6+w_7, \\ 
  d_2(\vv y)&=&1+\overline{w_4}+ \overline{w_5}+\overline{w_6}+\overline{w_7}, \label{eq:0columna}\\
  d_3(\vv y)&=& 1+|2|_C+|3|_C+w_4+w_5+\overline{w_6}+\overline{w_7}, \\  d_4(\vv y)&=& 1+|1|_C+|3|_C+w_4+\overline{w_5}+w_6+\overline{w_7}. \label{eq:0columnb}
\end{IEEEeqnarray*}

We assign values of $w_4, w_5, w_6, w_7$ to satisfy \eqref{eq:0column1} by considering difference cases of $|4|_C$, $|5|_C$, $|6|_C$ and $|7|_C$. As 
$w(\vv c_1\oplus \vv c_2)= \sum_{i=4}^7 |i|_C$ is even, there are totally $8$ feasible parity combinations of $|4|_C$, $|5|_C$, $|6|_C$ and $|7|_C$.
Except for the case with $|4|_C=|7|_C=0$ and $|5|_C,|6|_C$ odd, we give the assignment of $w_4, w_5, w_6, w_7$ in Table~\ref{table:0column}.
For the case $|4|_C=|7|_C=0$ and $|5|_C,|6|_C$ odd, as $C$ is not equivalent to $C_0$, at least one of $|1|_C$, $|2|_C$ and $|3|_C$  is nonzero. In this case $w_4=w_7=0$. We give the assignment of $w_5$ and $w_6$ for $|1|_C>0$, $|2|_C>0$ and $|3|_C>0$ respectively in Table~\ref{table:0column2}. Thus, $\mc Y_3\neq \emptyset$ in all the cases above and we have $\ls_{C'}>\ls_C$.

Last, a general $(n,4)$ code $C$ can be converted to one with $\vv c_1=\cv 0$ and $w(\vv c_1\oplus \vv c_2)$ by interchanging rows and flipping all the bits in some columns. Hence, we know that there exists a code $C'$ such that $\ls_{C'}>\ls_C$, where $C'$ is obtained by changing a $\bspan{0}$ of $C$ by $\bspan{s'}$ with $s'\in \{2^r+2^t: 0\leq r\neq t\leq 3\}$. If $s'>7$, we can flip all the bits in the $\bspan{s'}$ column in $C'$.

\begin{table}[t]
	\centering
	\caption{Assignment of $w_4, w_5, w_6, w_7$ under different parities of $|4|_C$, $|5|_C$, $|6|_C$ and $|7|_C$}\label{table:0column}
	\begin{tabular}{ccccccccc}
		\hline
		\rowcolor[HTML]{EFEFEF} 
	case&	$|4|_C$ & $|5|_C$ & $|6|_C$ & $|7|_C$ & $w_4-\frac{|4|_C}{2}$ & $w_5-\frac{|5|_C}{2}$ & $w_6-\frac{|6|_C}{2}$ & $w_7-\frac{|7|_C}{2}$ \\ \hline
	$1$&	even   & even   & even   & even   & $0$                           & $0$                           & $0$                           & $0$                           \\ \hline
		\rowcolor[HTML]{EFEFEF} 
		$2$&odd    & odd    & even   & even   & $\frac{1}{2}$                 & $-\frac{1}{2}$                & $0$                           & $0$                           \\ \hline
		$3$&odd    & even   & odd    & even   & $\frac{1}{2}$                 & $0$                           & $-\frac{1}{2}$                & $0$                           \\ \hline
		\rowcolor[HTML]{EFEFEF} 
	$4$&	odd    & even   & even   & odd    & $\frac{1}{2}$                 & $0$                           & $0$                           & $-\frac{1}{2}$                \\ \hline
	$5a$&	even, $>0$   & odd    & odd    & even   & $1$                           & $-\frac{1}{2}$                 & $-\frac{1}{2}$                & $0$                            \\ \hline
	$5b$&	even   & odd    & odd    & even, $>0$   & $0$                           & $\frac{1}{2}$                 & $\frac{1}{2}$                & $-1$                          \\ \hline
		\rowcolor[HTML]{EFEFEF} 
	$6$&	even   & odd    & even   & odd    & $0$                           & $\frac{1}{2}$                & $0$                           & $-\frac{1}{2}$                 \\ \hline
	$7$&	even   & even   & odd    & odd    & $0$                           & $0$                           & $\frac{1}{2}$                & $-\frac{1}{2}$                 \\ \hline
		\rowcolor[HTML]{EFEFEF} 
	$8$&	odd    & odd    & odd    & odd    & $\frac{1}{2}$                 & $\frac{1}{2}$                           & $-\frac{1}{2}$                           & $-\frac{1}{2}$                 \\ \hline
	\end{tabular}
\end{table}
\begin{table}[]
  \centering
  \caption{Assignment of $w_5, w_6$ when $|4|_C=|7|_C=0$, and $|5|_C,|6|_C$ are odd. }\label{table:0column2}
  \begin{tabular}{ccc}
    \hline
    \rowcolor[HTML]{EFEFEF} 
    case    & $w_5-\frac{|5|_C}{2}$ & $w_6-\frac{|6|_C}{2}$  \\ \hline
    $|1|_C>0$ & $\frac{1}{2}$                & $-\frac{1}{2}$                     \\ \hline
    \rowcolor[HTML]{EFEFEF} 
    $|2|_C>0$ &  $-\frac{1}{2}$                & $\frac{1}{2}$                           \\ \hline
    $|3|_C>0$ & $\frac{1}{2}$                & $-\frac{1}{2}$                                          \\ \hline
  \end{tabular}
\end{table}

\subsection{Proof of  Theorem~\ref{thm:even}}
\label{sec:proof:even}

Consider an $(n,4)$ code $C$ with a type $\bspan{1}$ column and
$w(\vv c_3\oplus \vv c_4)$ even. Let $C'$ be the code obtained by
replacing a $\bspan{1}$ column of $C$ by
$\bspan{3}$. Theorem~\ref{thm:even} states that
$\lambda_{C'} \geq \lambda_C$ together with a necessary and sufficient
condition such that the equality holds.  Now we give the proof of
Theorem~\ref{thm:even}. WLOG, we assume that the first column of $C$
is of type $\bspan{1}$ and is replaced by $\bspan{3}$ in $C'$. In the
proof, we write $|i|_C =|i|$. 

  Substituting $s=1$ and $s'=3$ to the discussion in \S\ref{sec:1col} (ref. Example~\ref{ex:s1s3}), we have $\mc O=\{1,2,4\}$ and $\mc P=\{3\}$ for $C$ and $C'$, and hence 
  \begin{equation*}
    \mc Y_5 = \{d_{3}'  =  d_{\{1,2,4\}} < d_{3} = d_{\{1,2,4\}}'\}.
  \end{equation*}
Assume $\mc Y_5$ is nonempty and fix $\vv y \in \mc Y_5$.
  As $d_{3}'(\vv y) = d_3(\vv y) - \vv y_1 + \overline{\vv y_1}$, we have $\vv y_1=1$.
  Further, due to
\begin{IEEEeqnarray*}{rCl}
  d_{\{1,2,4\}}(\vv y) & = & d_1 \land d_2 \land d_4, \\
  d_{\{1,2,4\}}'(\vv y) & = & (d_1-1) \land (d_2-1) \land (d_4+1), 
\end{IEEEeqnarray*}
we have $d_{\{1,2,4\}} = d_4$ and hence $d_3 = d_4 + 1$.
By \eqref{eq:d},
\begin{IEEEeqnarray*}{rCl}
  d_3(\vv y) + d_4(\vv y)
  & = &
   \sum_{i}\hat{w}_{i3} + 
    \sum_{i}\hat{w}_{i4}\\
  & = & \sum_{i:(\bspan{i})_3\neq (\bspan{i})_4} |i| + 
   2 \sum_{i:(\bspan{i})_3=(\bspan{i})_4} \hat{w}_{i3} \\
  & = & w(\vv c_3\oplus \vv c_4) + 2 \sum_{i:(\bspan{i})_3=(\bspan{i})_4} \hat{w}_{i3}.
\end{IEEEeqnarray*}
As $w(\vv c_3\oplus \vv c_4)$ is even, we see that $d_3+d_4$ is even, which is a contradiction to $d_3 = d_4+1$. Therefore, $\mc Y_5 = \emptyset$ and hence by Corollary~\ref{cor:1}, $\lambda_{C'}\geq \lambda_{C}$.

Now we study the condition of  $\lambda_{C'}= \lambda_C$.
To simplify the discussion, WLOG, we further assume that
the code $C$ has only the columns of types $\bspan{0},\bspan{1},\ldots,\bspan{7}$.
Also by Corollary~\ref{cor:1},  a sufficient and necessary condition for
$\lambda_{C'}= \lambda_C$ is 
$\mc Y_3=\emptyset$. Similar to $\mc Y_5$, $\mc Y_3$ can be rewritten as
\begin{IEEEeqnarray*}{rCl}
  \mc Y_3&=&\{d_{3}' = d_{\{1,2,4\}}' < d_{3} = d_{\{1,2,4\}}\}\\
  &=&\{y_1=1,d_3=d_1\land d_2\leq d_4   \}\\
  &=& \mc Y_3^1 \cup \mc Y_3^2,
\end{IEEEeqnarray*}	
where $ \mc Y_3^1= \{y_1=1,d_3=d_1\leq d_2\land d_4\} $ and $ \mc Y_3^2= \{y_1=1,d_3=d_2\leq d_1\land d_4\}$.
We verify the condition such that  $\mc Y_3=\emptyset$ for different parity of $w(\cv c_1\oplus \cv c_3)$ and 
$w(\cv c_2\oplus \cv c_3)$.
When $ w(\vv c_1\oplus \vv c_3) $ and $ w(\vv c_2\oplus \vv c_3)$ are both odd.   
For any $ \vv y \in \{0,1\}^n$, it holds that
\begin{align*}
  &	d_3(\vv y)  = w(\cv c_3\oplus\cv y)\neq 
    \begin{cases}
      w(\cv c_1\oplus \cv y) =d_1(\cv y),\\
      w(\cv c_2\oplus \cv y) =d_2(\cv y),
    \end{cases}
\end{align*}
which implies that $\mc Y_3^1=\mc Y_3^2=\emptyset$ and thus we have  $ \mc Y_3=\emptyset $. We henceforth discuss the cases with either $ w(\vv c_1\oplus \vv c_3) $ or $ w(\vv c_2\oplus \vv c_3)$ even.

When $ w(\vv c_1\oplus \vv c_3)$ is even, we will show that  $\mc Y_3=\emptyset$ if and only if $ |1|=1$, $|2|=|4|=|5|=|7|=0 $ and $|3|,|6|$ are odd.
By \eqref{eq:d1}--\eqref{eq:d4}, $\vv y \in \mc Y_3^1$, if it has $y_1=1$ and
\begin{IEEEeqnarray}{rCl}
	w_2+w_3 +w_6+w_7 & = &  \frac{|2|+|3|+|6|+|7|}{2}, \label{eq:y31} \\
	w_4+w_5+w_6+w_7 & \leq & \frac{|4|+|5|+|6|+|7|}{2},\label{eq:y32} \\
	w_1+w_3+w_5+w_7 & \leq & \frac{|1|+|3|+|5|+|7|}{2}.\label{eq:y33}
\end{IEEEeqnarray}
When $ w(\vv c_1\oplus \vv c_3) = |2|+|3|+|6|+|7| $ (ref. \eqref{eq:w2}) is even,
there are totally eight possible cases corresponding to the parities of $|2|$, $|3|$, $|6|$ and $|7| $. 
For the following seven cases, we have values of $w_1\geq 1,w_2,\ldots,w_6,w_7$ satisfying \eqref{eq:y31}--\eqref{eq:y33} and hence $ \mc Y_3^1\neq \emptyset $:
\begin{enumerate}
\item When $|2|$, $|3|$, $|6|$ and $|7| $ are all even, set $\cv y \in \{0,1\}^n$ with  $ y_1=1$, $w_1=1$, $w_2=\frac{|2|}{2}$, $w_3=\frac{|3|}{2}$, $w_4=0$, $w_5=0$, $w_6=\frac{|6|}{2}$, $w_7=\frac{|7|}{2}$.
Since $w(\vv c_1\oplus \vv c_3) = |1|+|2|+|5|+|6| $ is even and $|2|+|6|$ is even, $ |1|+|5| $ is even and hence $ |1|+|5|\geq 2 $. Thus \eqref{eq:y31}--\eqref{eq:y33} are satisfied, $ \vv y\in \mc Y_3^1 $   and then $\mc Y_3\neq \emptyset$.

\item When $|2|$, $|3|$, $|6|$ and $|7| $ are all odd, $w_1=1$, $w_2=\frac{|2|+1}{2}$, $w_3=\frac{|3|-1}{2}$, $w_4=0$, $w_5=0$, $w_6=\frac{|6|+1}{2}$, $w_7=\frac{|7|-1}{2}$.
\item When $|2|$, $|3|$ are odd,  and $|6|$, $|7| $ are even, $w_1=1$, $w_2=\frac{|2|+1}{2} $, $w_3=\frac{|3|-1}{2}$, $w_4=0$, $w_5=0$, $w_6=\frac{|6|}{2}$, $w_7=\frac{|7|}{2}$.

\item When $ |2|$, $|6|$ are odd, and $|3|$, $|7| $ are even, set $\cv y\in \{0,1\}^n$ with  $ y_1=1$, $w_1=1$, $w_2=\frac{|2|+1}{2} $, $w_3=\frac{|3|}{2}$, $w_4=0$, $w_5=0$, $w_6=\frac{|6|-1}{2}$, $w_7=\frac{|7|}{2}$.
 We can verify similar as case 1) that $ \vv y\in \mc Y_3^1 $.
  
\item When $ |2|$, $|7|$ are odd, and $|3|$, $|6| $ are even,  $w_1=1$, $w_2=\frac{|2|+1}{2} $, $w_3=\frac{|3|}{2}$, $w_4=0$, $w_5=0$, $w_6=\frac{|6|}{2}$, $w_7=\frac{|7|-1}{2}$. %
	
\item When $|3|$, $|7| $ are odd, and $|2|$, $|6|$ are even, set $\cv y\in \{0,1\}^n$ with  $y_1=1$, $w_1=1$, $w_2=\frac{|2|}{2} $, $w_3=\frac{|3|+1}{2}$, $w_4=0$, $w_5=0$, $w_6=\frac{|6|}{2}$, $w_7=\frac{|7|-1}{2}$, we can verify similar as case 1) that $ \vv y\in \mc Y_3^1 $.
	
\item When  $|6|$, $|7| $ are odd, and $|2|$, $|3|$ are even,  $w_1=1$, $w_2=\frac{|2|}{2} $, $w_3=\frac{|3|}{2}$, $w_4=0$, $w_5=0$, $w_6=\frac{|6|+1}{2}$, $w_7=\frac{|7|-1}{2}$. %
\end{enumerate}

For the remaining case that $ |3|$, $|6| $ are odd, and $ |2|$, $|7| $ are even, we discuss it in five sub-cases. In the first $4$ sub-cases, we have values of $ w_1\geq 1,w_2,\ldots,w_6,w_7$ satisfying \eqref{eq:y31}--\eqref{eq:y33} and hence $ \mc Y_3^1\neq \emptyset $:
\begin{enumerate}
\item[8-1)] When $ |2|> 0 $, $w_1=1$, $w_2=\frac{|2|}{2}+1 $, $w_3=\frac{|3|-1}{2}$, $w_4=0$, $w_5=0$, $w_6=\frac{|6|-1}{2}$, $w_7=\frac{|7|}{2}$.

\item[8-2)] When $ |7|> 0 $, $w_1=1$, $w_2=\frac{|2|}{2} $, $w_3=\frac{|3|+1}{2}$, $w_4=0$, $w_5=0$, $w_6=\frac{|6|+1}{2}$, $w_7=\frac{|7|}{2}-1$.%
  
\item[8-3)] When $ |4|+|5|> 0 $, $w_1=1$, $w_2=\frac{|2|}{2} $, $w_3=\frac{|3|-1}{2}$, $w_4=0$, $w_5=0$, $w_6=\frac{|6|+1}{2}$, $w_7=\frac{|7|}{2}$.

\item[8-4)] When $ |1|\geq 3 $, $w_1=1$, $w_2=\frac{|2|}{2} $, $w_3=\frac{|3|+1}{2}$, $w_4=0$, $w_5=0$, $w_6=\frac{|6|-1}{2}$, $w_7=\frac{|7|}{2}$.
\end{enumerate}
In the last sub-case, $ |1|\leq 2$, $|2|=|4|=|5|=|7|=0 $. First, in this case, we must have $|1|=1$ since $w(\cv c_3\oplus \cv c_4) = |1|+|6|$ is even.
For any $ \vv y\in\{0,1\}^n $ with $ y_1=1 $, 
\begin{equation}
	d_1(\vv y)+d_3(\vv y)=|3|+|6|+2w_0+2, \quad d_2(\vv y)+d_4(\vv y)=|3|+|6|+2w_0+1 .\label{eq:st1}
\end{equation}
If $ \vv y\in \mc Y_3^1$, $ d_1(\vv y)=d_3(\vv y)=\frac{|3|+|6|+2w_0+2}{2}\leq d_2(\vv y)\land d_4(\vv y) $. Then $ d_2(\vv y)+d_4(\vv y)\geq |3|+|6|+2w_0+2 $, which is a contradiction with \eqref{eq:st1}. Thus $ \vv y \not\in \mc Y_3^1 $, which implies $ \mc Y_3^1=\emptyset$.
If $ \vv y\in \mc Y_3^2 $, $ d_3(\vv y)=d_2(\vv y) $ and thus $\mc Y_3^2\neq \emptyset $ only if $ w(\vv c_2\oplus \vv c_3) $ is even, which never holds since  $ w(\vv c_2\oplus \vv c_3) =|3|$ is odd. Therefore  $ \vv y \not\in \mc Y_3^2 $ and then $ \mc Y_3^2 =\emptyset$. As a result, $ \mc Y_3=\emptyset $ when $ C $ satisfies $ |1|=1$, $|2|=|4|=|5|=|7|=0 $, $ |3|$ and $|6| $ are odd.

The case when $ w(\vv c_2\oplus \vv c_3)$ is even can be shown by the equivalence relation. Denote by $\tilde{C}$ the code obtained by modifying $C$ as follows:
\begin{itemize}
\item First, interchange $\vv c_1$ and $\vv c_2$;
\item Then, flip all the bits of columns of type $\bspan{i}$, $i>7$.
\end{itemize}
$\tilde C$ is equivalent to $C$. Let 
$\tilde{\cv c}_1, \tilde{\cv c}_2,\tilde{\cv c}_3,\tilde{\cv c}_4 $ be the four codewords of $\tilde C$. We can check that $|i|_{\tilde{C}} = |i|$ for $i=0,1,2,3$, $|4|_{\tilde{C}}=|7|$, $|7|_{\tilde{C}}=|4|$,  $|5|_{\tilde{C}}=|6|$ and $|6|_{\tilde{C}}=|5|$. Then  $w(\tilde{\cv c}_1 \oplus \tilde{\cv c}_3) = w(\cv c_2\oplus \cv c_3) $ and $w(\tilde{\cv c}_3 \oplus \tilde{\cv c}_4) = w(\cv c_3\oplus \cv c_4) $ are both even.
Let $\tilde{C}'$ be the code obtained by replacing the first column of type $\bspan{1}$ in $\tilde{C}$ by $\bspan{3}$. 
Thus we have $\ls_{\tilde{C}} = \ls_{\tilde{C}'}$ if and  only if $ |1|_{\tilde{C}}=1$, $|2|_{\tilde{C}}=|4|_{\tilde{C}}=|5|_{\tilde{C}}=|7|_{\tilde{C}}=0 $ and $|3|_{\tilde{C}}$, $|6|_{\tilde{C}}$ are odd. As $\tilde C'$ is equivalent to $C'$, we have
$\ls_{C} = \ls_{C'}$ if and only if $ |1|=1$, $|2|=|4|=|6|=|7|=0 $ and $|3|$, $|5|$ are odd. 

\section{Change of Two Columns: Proof of Theorem~\ref{thm:odd}}
\label{sec:fliptwobits}

Consider an $(n,4)$ code  $C$ with codewords $\cv c_1,\cv c_2,\cv c_3,\cv c_4$ and the first two columns being $\bspan{1}$ and $\bspan{7}$. Let $C'$ be the code obtained by flipping the first two bits of $\vv c_3$ in $C$, so that the first two columns of $C'$ are $\bspan{3}$ and $\bspan{5}$. Theorem~\ref{thm:odd} says that $\ls_{ C'} \geq \ls_{ C}$ together with a necessary and sufficient
condition such that the equality holds. Now we give the proof of
Theorem~\ref{thm:odd} for $C$ and $C'$. Other cases of Theorem~\ref{thm:odd} can be obtained due to code equivalence.

\subsection{Proof of $\ls_{ C'} \geq \ls_{ C}$}\label{sssec:geq}

Following the notations in \S\ref{sec:compgen}, for $C$ and $C'$, we have $\mc O=\{1,2,4\}$, $\mc P=\{3\}$, and by \eqref{eq:dprim}
\begin{IEEEeqnarray*}{rCl}
  d_1'(\vv y) & = & d_1(\vv y) +  (\overline{y_1}-y_1)+  (\overline{y_2}-y_2),\\
  d_2'(\vv y) & = & d_2(\vv y) +  (\overline{y_1}-y_1) - (\overline{y_2}-y_2),\\
  d_3'(\vv y) & = & d_3(\vv y) +  (\overline{y_1}-y_1) - (\overline{y_2}-y_2),\\
  d_4'(\vv y) & = & d_4(\vv y) - (\overline{y_1}-y_1) - (\overline{y_2}-y_2). 
\end{IEEEeqnarray*}
When $y_1=y_2 $, we have
\begin{equation}
d_{\mc P}'(\vv y)=d_{\mc P}(\vv y).  \label{eq:thm2b}
\end{equation}
When $y_1\neq y_2$, we have $d_1'(\vv y)=d_1(\vv y)$,
$d_4'(\vv y)=d_4(\vv y)$, and
\begin{equation}
d_2'(\vv y)-d_2(\vv y)= d_3'(\vv y)-d_3(\vv y)=\pm 2, \label{eq:thm2a}
\end{equation}
and hence
\begin{IEEEeqnarray}{rCl}
	\IEEEeqnarraymulticol{3}{l}{(d_{\mc O}'(\vv y)- d_{\mc O}(\vv y)) (d_{\mc P}'(\vv y)- d_{\mc P}(\vv y))}\nonumber \\
	&=& (d_{\{1,2,4\}}'(\vv y)-d_{\{1,2,4\}}(\vv y))(d_3'(\vv y)-d_3(\vv y))
	\geq 0.   \label{eq:ineq1}
\end{IEEEeqnarray}

The crucial part of comparing $C$ and $C'$ is
the following $5$ subsets of $\{0,1\}^n$:
\begin{IEEEeqnarray}{rCl}
	\mc Z_1 &=& \{y_1=y_2  \}, \label{eq:y21} \\
	\mc Z_2&=& \{y_1\neq y_2,d_{\mc O} \leq d_{\mc P} \land d_{\mc P}'  \},\\
	\mc Z_3 &=& \{y_1\neq y_2,d_{\mc O} > d_{\mc P} \land d_{\mc P}',d_{\mc P}\leq d_{\mc O}\land d_{\mc O}'   \},\\
	\mc Z_4 &=& \{y_1\neq y_2,d_{\mc P}'<d_{\mc O}\land d_{\mc O}' <d_{\mc P}  \},\\
	\mc Z_5&=& \{y_1\neq y_2,d_{\mc O}' \leq  d_{\mc P} \land d_{\mc P}' <d_{\mc O},d_{\mc O}'<d_{\mc P}  \}.\label{eq:y25}
\end{IEEEeqnarray}
Recall the function $F_2$ (defined in \S\ref{sec:compgen}) flips the first two bits of a binary vector. Define a mapping $g_2: \{0,1\}^n \rightarrow \{0,1\}^n$ as
\begin{equation*}
  g_2(\vv y)  =
  \begin{cases}
    \vv y  & \vv y \in \mc Z_1 \cup \mc Z_2 \cup \mc Z_5, \\
    F_2(\vv y) & \text{otherwise}.
  \end{cases}
\end{equation*}
The next lemma shows that $\{\mc Z_1, \mc Z_2, \mc Z_3, \mc Z_4, \mc Z_5\}$ and $g_1$ satisfy the properties described in \S\ref{sec:compgen} for $C$ and $C'$.

\begin{lemma}\label{lemma:2}
  $\{\mc Z_1, \mc Z_2, \mc Z_3, \mc Z_4, \mc Z_5\}$ defined in \eqref{eq:y21}--\eqref{eq:y25} forms a partition of $\{0,1\}^n$, and $g_2$ is a one-to-one and onto mapping. Moreover, for the $(n,4)$ codes $C$ and $C'$ formulated above,  
  \begin{enumerate}
  \item For $\vv y \in \mc Z_1$, $d_C(\vv y) = d_{C'}(\vv y)$;
  \item For $\vv y \in \mc Z_2$, $d_C(\vv y) = d_{C'}(\vv y)=d_{\mc O}$;
  \item For $\vv y \in \mc Z_3$, $d_C(\vv y) = d_{C'}( F_2(\vv y))=d_{\mc P}$;
  \item 
    For $\vv y \in \mc Z_4$, $d_C(\vv y) = d_{\mc O}\land d_{\mc P}\geq d_{C'}(F_2(\vv y))=d_{\mc O}'$; 
  \item For $\vv y \in \mc Z_5$, $d_C(\vv y) = d_{\mc O}\land d_{\mc P}> d_{C'}(\vv y)=d_{\mc P}'$.
  \end{enumerate}
\end{lemma}
\begin{IEEEproof}
For $i=3,4$, let 
\begin{equation*}
\mc Z_i'=\{F_2(\vv y): \vv y \in \mc Z_i \}.
\end{equation*}

We justify that $\mc Z_1,\mc Z_2,\mc Z_3,\mc Z_4,\mc Z_5$ form a partition of $\{0,1\}^n$ and $\mc Z_1,\mc Z_2,\mc Z_3',\mc Z_4',\mc Z_5$ form a partition of $\{0,1\}^n$:
First, we show that 
\begin{equation}\label{eq:y45}
\mc Z_4 \cup \mc Z_5 =\{y_1\neq y_2,d_{\mc O} > d_{\mc P} \land d_{\mc P}',d_{\mc P}> d_{\mc O}\land d_{\mc O}'   \},
\end{equation}
and then we obtain $\bigcup_{i=1}^5\mc Z_i=\{0,1\}^n$. Moreover, $\mc Z_1, \ldots, \mc Z_{5}$ are all disjoint by checking the definition. Thus $\mc Z_1, \ldots, \mc Z_{5}$ form a partition of $\{0,1\}^n$.

To show \eqref{eq:y45}, 
since $\mc Z_4\subseteq \{ d_{\mc O} > d_{\mc P} \land d_{\mc P}'\}$, we have
\begin{IEEEeqnarray}{rCl}
	\mc Z_4
	&=& \{y_1\neq y_2,d_{\mc P}'<d_{\mc O}\land d_{\mc O}' <d_{\mc P}  \}\cap  \{ d_{\mc O} > d_{\mc P} \land d_{\mc P}'\} \nonumber \\
	&=&	\{y_1\neq y_2,d_{\mc O} > d_{\mc P} \land d_{\mc P}',d_{\mc P}> d_{\mc O}\land d_{\mc O}'   \}\cap \nonumber\\
	&&\{d_{\mc O}\land d_{\mc O}'>d_{\mc P}' \}. \label{eq:ineq2}
\end{IEEEeqnarray}
Denote 
\begin{IEEEeqnarray*}{rCl}
	\mathcal A_1&=&\{y_1\neq y_2,	d_{\mc O} > d_{\mc P} \land d_{\mc P}',d_{\mc P}> d_{\mc O}\land d_{\mc O}'\} \cap\\
	&& \{d_{\mc O}\land d_{\mc O}'\leq d_{\mc P}'\}. \IEEEyesnumber  \label{eq:ineq3}
\end{IEEEeqnarray*}
For $\vv y\in \mathcal A_1$, we have $d_{\mc O}(\vv y)>d_{\mc O}'(\vv y)$ which implies $d_{\mc P}(\vv y)>d_{\mc P}'(\vv y)$ by \eqref{eq:thm2a} and \eqref{eq:ineq1}, and hence
\begin{equation*}
d_{\mc O}'(\vv y)\leq d_{\mc P}(\vv y)\land d_{\mc P}'(\vv y)<d_{\mc O}(\vv y),\quad d_{O}'(\vv y)<d_{\mc P}(\vv y).
\end{equation*}
Thus we have $\vv y\in \mc Z_5$ and then $\mathcal A_1\subseteq \mc Z_5$. 
For $\vv y \in \mc Z_5$, we have  $d_{\mc O}(\vv y)>d_{\mc O}'(\vv y)$ by the definition above, which implies $d_{\mc P}(\vv y)>d_{\mc P}'(\vv y)$ by \eqref{eq:thm2a} and \eqref{eq:ineq1}. Then we obtain 
\begin{equation*}
d_{\mc P}(\vv y)>d_{\mc O}(\vv y)\land d_{\mc O}'(\vv y)=d_{\mc O}'(\vv y)\leq d_{\mc P}'(\vv y)<d_{\mc O}(\vv y).
\end{equation*}
Thus $\vv y\in \mathcal A_1$ and then $\mc Z_5\subseteq \mathcal A_1$.
Therefore, $\mc Z_5= \mathcal A_1$.
From \eqref{eq:ineq2} and \eqref{eq:ineq3}, we obtain \eqref{eq:y45}.

We further show that
\begin{equation}
\label{eq:y34p}
\mc Z_3'\cup \mc Z_4' \subseteq\mc Z_3 \cup \mc Z_4.
\end{equation}
Since $F_2$ is an one-to-one mapping, we get $\mc Z_3'\cup \mc Z_4' =\mc Z_3 \cup \mc Z_4$. Therefore, $\mc Z_1, \mc Z_2, \mc Z_3', \mc Z_4', \mc Z_5$ form a partition of $\{0,1\}^n$.

To show \eqref{eq:y34p}, we see
\begin{equation}
\mc Z_4'=\{y_1 \neq y_2, d_{\mc P}<d_{\mc O}\land d_{\mc O}' <d_{\mc P}' \}%
\subseteq \mc Z_3, \label{eq:thm2a4}
\end{equation}
and
\begin{IEEEeqnarray*}{rCl}
	\mc Z_3'\setminus  \mc Z_4
	&=& \{y_1\neq y_2,d_{\mc O}' > d_{\mc P} \land d_{\mc P}',d_{\mc P}'\leq d_{\mc O}\land d_{\mc O}' \} \cap  \\
	&& (\{d_{\mc O}\land d_{\mc O}' \geq d_{\mc P}   \}  \cup   \{ d_{\mc P}'\geq d_{\mc O}\land d_{\mc O}'  \})\\
	&=& \{ y_1\neq y_2,d_{\mc O}' > d_{\mc P} \land d_{\mc P}',d_{\mc P}'\leq d_{\mc O}\land d_{\mc O}' ,  \\
	& & d_{\mc O}\land d_{\mc O}' \geq d_{\mc P}   \}  \cup 
	\{y_1\neq y_2,d_{\mc O}' > d_{\mc P} \land d_{\mc P}',\\
	& & d_{\mc P}'\leq d_{\mc O}\land d_{\mc O}', d_{\mc P}'\geq d_{\mc O}\land d_{\mc O}'  \}  \\
	&= &  \{y_1\neq y_2 ,d_{\mc P}\land d_{\mc P}'<\max (d_{\mc P},d_{\mc P}')\leq d_{\mc O}\land d_{\mc O}'  \}\cup\\
	&& \{ y_1\neq y_2 , d_{\mc O}\land d_{\mc O}'=d_{\mc P}', d_{\mc P}\land d_{\mc P}'<d_{\mc O}'  \}  \IEEEyesnumber  \label{eq:thm2a1} %
\end{IEEEeqnarray*}
where in the last equality $d_{\mc P}\land d_{\mc P}'<\max (d_{\mc P},d_{\mc P}')$ follows from \eqref{eq:thm2a}.
By \eqref{eq:ineq1}, when $y_1\neq y_2$, if $d_{\mc O}<d_{\mc O}'$, then $d_{\mc P}<d_{\mc P}'$;  and if $d_{\mc P}>d_{\mc P}'$, then $d_{\mc O}\geq d_{\mc O}'$. Hence, we can verify that both terms to union in \eqref{eq:thm2a1} are subsets of $\mc Z_3$. Therefore, $\mc Z_3'\setminus  \mc Z_4 \subset \mc Z_3$, which together with \eqref{eq:thm2a4}, proves \eqref{eq:y34p}.
The above claims are justified as follows:
\begin{enumerate}
	\item For $\vv y \in \mc Z_1$, as $y_1=y_2$, we have $d_{\mc P}'=d_{\mc P}$ by \eqref{eq:thm2b}, and hence
	\begin{IEEEeqnarray*}{rCl}
		d_{C}(\vv y)&=& d_{\mc O}(\vv y)\land d_{\mc P}(\vv y)=d_{\mc O}(\vv y)\land d_{\mc P}'(\vv y)=d_{C'}(\vv y).
	\end{IEEEeqnarray*}
	
	\item For $\vv y \in \mc Z_2$, by the definition of $\mc Z_2$, we have $d_{\mc O}\leq d_{\mc P}\land d_{\mc P}'$, and hence $d_C(\vv y) = d_{C'}(\vv y)=d_{\mc O}$.
	
	\item  For $\vv y \in \mc Z_3$, we have $d_{\mc P}\leq d_{\mc O}\land d_{\mc O}'$ by the definition of $\mc Z_3$. We then have
	\begin{IEEEeqnarray*}{rCl}
		d_{C}(\vv y)&=& d_{\mc O}(\vv y)\land d_{\mc P}(\vv y) = d_{\mc P}(\vv y),\\
		d_{C'}(F_2(\vv y))&=&  d_{\mc O}'(\vv y)\land d_{\mc P}(\vv y) = d_{\mc P}(\vv y).
	\end{IEEEeqnarray*}
	
	\item For $\vv y \in \mc Z_4$, we have $y_1\neq y_2,d_{\mc P}'<d_{\mc O}\land d_{\mc O}'<d_{\mc P}$ by the definition of $\mc Z_4$. By \eqref{eq:ineq1}, 
	$d_{\mc O}\land d_{\mc O}'=d_{\mc O}'$,
	which implies
	$d_{C'}(F_2(\vv y))=d_{\mc O}'(\vv y)\land d_{\mc P}(\vv y)=d_{\mc O}'(\vv y)$
	and hence
	\begin{IEEEeqnarray*}{rCl}
		d_C(\vv y)=d_{\mc O}(\vv y)\land d_{\mc P}(\vv y)\geq d_{\mc O}'(\vv y)=	d_{C'}(F_2(\vv y)).
	\end{IEEEeqnarray*}
	
	\item For $\vv y\in \mc Z_5$, we have $y_1\neq y_2,d_{\mc O}' \leq  d_{\mc P} \land d_{\mc P}' <d_{\mc O} $. By \eqref{eq:thm2a} and \eqref{eq:ineq1}, $d_{\mc P}'<d_{\mc P}$. Then we have
	\begin{IEEEeqnarray*}{rCl}
		d_C(\vv y)=d_{\mc O}\land d_{\mc P}\geq d_{\mc O}\land d_{\mc P}'=d_{\mc P}'=d_{C'}(\vv y),
	\end{IEEEeqnarray*}
	which implies
	\begin{equation*}
	d_{C'}(\vv y)< d_C(\cv y).
	\end{equation*}
\end{enumerate}

\end{IEEEproof}

Define two subsets of $\mc Z_4$:
\begin{IEEEeqnarray}{rCl}
  \mc Z_4^1 & = & \{ y_1\neq y_2, d_{\mc P}'< d_{\mc O}\land d_{\mc O}'<d_{\mc P}, d_{\mc O} \land d_{\mc P}>d_{\mc O}' \}, \label{eq:y41}\\
  \mc Z_4^2 & = & \{ y_1\neq y_2, d_{\mc P}'< d_{\mc O}\land d_{\mc O}'<d_{\mc P}, d_{\mc O} \land d_{\mc P}=d_{\mc O}' \}.\nonumber
\end{IEEEeqnarray}
We see that $\mc Z_4^1 \cup \mc Z_4^2 = \mc Z_4$ and $\mc Z_4^1 \cap \mc Z_4^2 = \emptyset$. By Lemma~\ref{lemma:2}, $d_C(\vv y) > d_C'(g_2(\vv y))$ for $\vv y \in \mc Z_4^1 \cup \mc Z_5$ and $d_C(\vv y) = d_C'(g_2(\vv y))$ for $\vv y \notin \mc Z_4^1 \cup \mc Z_5$. 
Hence by Lemma \ref{thm:com}, $\lambda_{C'} \geq \lambda_C$, where the equality holds if and only if $\mc Z_4^1 \cup \mc Z_5 = \emptyset$.

\subsection{Proof of $\ls_{ C'} = \ls_{ C}$}

Now we verify the necessary and sufficient conditions for $\mc Z_4^1\cup \mc Z_5=\emptyset$ when $C$'s columns are all from $\bspan{0},\bspan{1},\dots, \bspan{7}$.
Recall that $w(\vv c_2\oplus \vv c_3) = |2| + |3| + |4| + |5|$ (ref. \ref{eq:w4}). We discuss whether $\mc Z_4^1\cup \mc Z_5$ is empty or not according to the different parity of $|2|$, $|3|$, $|4|$ and $|5|$. See the cases listed in Table~\ref{tab:cases2col}. In the following Lemma~\ref{lemma:y4}, it is shown that for cases 9)--16) (i.e., when $ w(\vv c_2\oplus \vv c_3) $ is even), $\mc Z_4^1 = \emptyset$, and for cases 1), 4), 5) and 6), $\mc Z_4^1$ is nonempty. In the following Lemma~\ref{lemma:z5}, it is further verified that $\mc Z_5$ is empty for cases 7) -- 15). Thus we have $\mc Z_4^1\cup \mc Z_5\neq \emptyset$ for cases 1) and 4)-15).
 The necessary and sufficient conditions for $\mc Z_4^1\cup \mc Z_5=\emptyset$ is obtained by further analyzing cases 2), 3) and 16).

\begin{table}
  \centering
  \caption{Cases for discussing when $\mc Z_4^1\cup \mc Z_5=\emptyset$ in Theorem~\ref{thm:odd}. In the column of $\mc Z_4^1$ (resp. $\mc Z_5$), ``*'' means that the set can be empty. In the column of $\mc Z_5$, ``-'' means that whether $\mc Z_5$ is empty is not required to be verified as $\mc Z_4^1$ is nonempty.}
  \label{tab:cases2col}
  \begin{tabular}{c|c|cccc|c|c}
    \hline \rowcolor[HTML]{EFEFEF} 
    case&$ w(\vv c_2\oplus \vv c_3) $ & $|2|$ & $|3|$ & $|4|$ & $|5|$ & $\mc Z_4^1$ & $\mc Z_5$ \\
    \hline
    1 & \multirow{8}{*}{odd} & odd & even & even & even & nonempty & - \\ 
    2 & & even & odd & even & even & * & * \\
    3 & & even & even & even & odd & * & * \\
    4 & & odd & odd & odd & even & nonempty & - \\
    5 & & odd & even & odd & odd & nonempty & - \\
    6 & & even & odd & odd & odd & nonempty & -\\
    7 & & odd & odd & even & odd & - & nonempty \\
    8 & & even & even & odd & even & - & nonempty \\
    \hline
    9 & \multirow{8}{*}{even} & even & even & even & even & \multirow{8}{*}{$\emptyset$} & nonempty  \\ 
    10 & & odd & odd & even & even & & nonempty \\
    11 & & odd & even & even & odd & & nonempty \\ 
    12 & & even & odd & odd & even & & nonempty \\
    13 & & odd & even & odd & even & & nonempty \\
    14 & & even & even & odd & odd & & nonempty \\
    15 & & odd & odd & odd & odd & & nonempty \\
    16 & & even & odd & even & odd & & * \\
    \hline
  \end{tabular}
\end{table}

\begin{lemma}\label{lemma:y4}
  When $ w(\vv c_2\oplus \vv c_3) $ is even, 	$\mc Z_4^1 = \emptyset$. 
  When $ w(\vv c_2\oplus \vv c_3) $ is odd, for the following four cases, $\mc Z_4^1$ is nonempty: 
  \begin{enumerate}
  \item[1)] $ |2| $ is odd and $ |3|$, $|4|$, $|5| $ are even, 
  \item[4)] $ |2|$, $|3|$, $|4| $ are odd and $ |5| $ is even, 
  \item[5)]  $ |2|$, $|4|$, $|5| $ are odd and $ |3| $ is even, and
  \item[6)]  $ |3|$, $|4|$, $|5| $ are odd and $ |2| $ is even.
  \end{enumerate}
\end{lemma}

\begin{IEEEproof}
  By~\eqref{eq:y41}, we have
  \begin{align*}
    \mc Z_4^1&=
               \{ y_1\neq y_2, d_{\mc P}'< d_{\mc O}\land d_{\mc O}'<d_{\mc P}, d_{\mc O} \land d_{\mc P}>d_{\mc O}' \}	
    \\
             &=   	\{ y_1\neq y_2, d_3'< d_{1,2,4}\land d_{1,2,4}'<d_{3}, d_{1,2,4} \land d_{3}>d_{1,2,4}' \}	  .
  \end{align*}
  For $ \cv y\in\mc Z_4^1 $,
  due to $ d_3'<d_3 $, we have $ y_1=1$, $y_2=0 $
  and it can be easily verified that $ \mc Z_4^1 $ can be rewritten as 
  \begin{equation*}
    \mc Z_4^1=\{y_1=1,y_2=0,  d_3\leq d_1\land d_4, d_2=d_3+1\}.
  \end{equation*}
  Then by \eqref{eq:d1}--\eqref{eq:d4}, $\vv y \in \mc Z_4^1$ if and only if $y_1=1$, $y_2=0$, and
  \begin{IEEEeqnarray}{rCl}
    w_2+w_3-\frac{|2|+|3|}{2} & = & w_4+w_5 - \frac{|4|+|5|-1}{2}, \label{eq:2cw1} \\
    w_2+w_3+w_6+w_7 & \geq & \frac{|2|+|3|+|6|+|7|}{2},\label{eq:2cw2} \\
    w_2+w_6 - \frac{|2|+|6|}{2} & \geq & w_1+w_5-\frac{|1|+|5|}{2}.\label{eq:2cw3}
  \end{IEEEeqnarray}
  
  When $ w(\vv c_2\oplus \vv c_3)=|2|+|3|+|4|+|5| $ (ref. \eqref{eq:w4}) is even, we have $ \mc Z_4^1=\emptyset $ since \eqref{eq:2cw1} cannot be satisfied for any $\vv y$. %
  When $ w(\vv c_2\oplus \vv c_3) = |2|+|3|+|4|+|5| $ is odd, there are eight cases corresponding to the parities for $ |2|$, $|3|$, $|4|\text{ and }|5| $ (see Table~\ref{tab:cases2col}). For the cases 1), 4), 5) and 6), we have values of $w_1\geq 1,w_2,\ldots,w_6,w_7<|7|$ satisfying \eqref{eq:2cw1}--\eqref{eq:2cw3} and hence $ \mc Z_4^1\neq \emptyset $:
  \begin{itemize}
  \item[1)]
    When $ |2| $ is odd and $ |3|$, $|4|$, $|5| $ are even, $w_1=1$, $w_2=\frac{|2|+1}{2}$, $w_3=\frac{|3|}{2}, \ w_4=\frac{|4|}{2}, \ w_5=\frac{|5|}{2}$, $w_6=|6|$, $w_7=|7|-1 $.

  \item[4)] When $ |2|$, $|3|$, $|4| $ are odd and $ |5| $ is even, $w_1=1$, $w_2=\frac{|2|+1}{2}$, $w_3=\frac{|3|+1}{2}$, $w_4=\frac{|4|+1}{2}$,  $w_5=\frac{|5|}{2}$, $w_6=|6|$, $w_7=|7|-1 $.

  \item[5)] When  $ |2|$, $|4|$, $|5| $ are odd and $ |3| $ is even, $w_1=1$, $w_2=\frac{|2|+1}{2}$, $w_3=\frac{|3|}{2}, \ w_4=\frac{|4|+1}{2}, \ w_5=\frac{|5|-1}{2}$, $w_6=|6|$, $w_7=|7|-1 $. 
  \item[6)]
    When  $ |3|$, $|4|$, $|5| $ are odd and $ |2| $ is even, $w_1=1$, $w_2=\frac{|2|}{2}$, $w_3=\frac{|3|+1}{2}, \ w_4=\frac{|4|+1}{2}, \ w_5=\frac{|5|-1}{2}$, $w_6=|6|$, $w_7=|7|-1 $. 
  \end{itemize}
\end{IEEEproof}

\begin{lemma}
  \label{lemma:z5}
  For the following nine cases, $\mc Z_5$ is nonempty: 
  \begin{enumerate}
  \item[7)] $ |2| $, $|3|$ and $|5|$ are odd, $|4|$ is even;
  \item[8)] $ |2| $, $|3|$ and $|5|$ are even, $|4|$ is odd; and 
  \item[9--16)] $ w(\vv c_2\oplus \vv c_3) $ is even, except for the case that $ |2| $ and $|4|$ are even, $|3|$ and $|5|$ are odd.
  \end{enumerate}
  
\end{lemma}
\begin{IEEEproof}
  By~\eqref{eq:y25},  $ \mc Z_5 $ can be rewritten as 
  \begin{IEEEeqnarray*}{rCl}
    \mc Z_5&=&\{y_1=1,y_2=0, d_2\leq d_3\leq d_{\{1,2,4\}}+1 \}.
  \end{IEEEeqnarray*}

  When $ w(\vv c_2\oplus \vv c_3)=|2|+|3|+|4|+|5| $ is odd,
  $\mc Z_5 = \{y_1=1,y_2=0, d_2= d_3-1\leq d_{\{1,4\}} \}$.  By
  \eqref{eq:d1}--\eqref{eq:d4}, $\vv y \in \mc Z_5$ if and only if
  $y_1=1$, $y_2=0$ and
  \begin{IEEEeqnarray}{rCl}
    w_2+w_3-w_4-w_5& = &  \frac{|2|+|3|-|4|-|5|-1}{2}, \label{eq:4cw1} \\
    w_2+w_3+w_6+w_7 & \geq & \frac{|2|+|3|+|6|+|7|-1}{2},\label{eq:4cw2} \\
	w_1-w_2+w_5-w_6 & \leq & \frac{|1|-|2|+|5|-|6|+1}{2}.\label{eq:4cw3}
  \end{IEEEeqnarray}
  For the cases 7) and 8) as given in Table~\ref{tab:cases2col}, we have values of $w_1\geq 1,w_2,\ldots,w_6,w_7<|7|$ satisfying \eqref{eq:4cw1}--\eqref{eq:4cw3} and hence $ \mc Z_5\neq \emptyset $:
\begin{enumerate}
\item[7)] When $ |4| $ is even and $ |2|$, $|3|$, $|5| $ are odd,  $ w_1=1$, $w_2=\frac{|2|+1}{2}$, $w_3=\frac{|3|-1}{2}, \ w_4=\frac{|4|}{2}, \ w_5=\frac{|5|+1}{2}$, $w_6=|6|$, $w_7=|7|-1 $.  
\item[8)] When $ |4| $ is odd and $ |2|$, $|3|$, $|5| $ are even, $ w_1=1$, $w_2=\frac{|2|}{2}$, $w_3=\frac{|3|}{2}, \ w_4=\frac{|4|+1}{2}, \ w_5=\frac{|5|}{2}$, $w_6=|6|$, $w_7=|7|-1 $.  
\end{enumerate}

When $ w(\vv c_2\oplus \vv c_3) $ is even,
$\mc Z_5=\{y_1=1,y_2=0, d_2=d_3\leq d_{\{1,4\}}+1 \}$. By
\eqref{eq:d1}--\eqref{eq:d4}, $\vv y \in \mc Z_5$ if and only if
$y_1=1$, $y_2=0$, and
\begin{IEEEeqnarray}{rCl}
	w_2+w_3-w_4-w_5& = &  \frac{|2|+|3|-|4|-|5|}{2}, \label{eq:3cw1} \\
	w_2+w_3+w_6+w_7 & \geq & \frac{|2|+|3|+|6|+|7|-1}{2},\label{eq:3cw2} \\
	w_1-w_2+w_5-w_6 & \leq & \frac{|1|-|2|+|5|-|6|+1}{2}.\label{eq:3cw3}
\end{IEEEeqnarray}
For the cases 9) -- 15) as given in Table~\ref{tab:cases2col}, we have values of $w_1\geq 1,w_2,\ldots,w_6,w_7<|7|$ satisfying \eqref{eq:3cw1}--\eqref{eq:3cw3} and hence $ \mc Z_5 \neq \emptyset $:
\begin{enumerate}
\item[9)] When $ |2|$, $|3|$, $|4|$, $|5| $ are even, $ w_1=1$, $w_2=\frac{|2|}{2}$, $w_3=\frac{|3|}{2}, \ w_4=\frac{|4|}{2}, \ w_5=\frac{|5|}{2}$, $w_6=|6|$, $w_7=|7|-1 $.  
  
\item[10)] When $ |2|$, $|3| $ are odd and $ |4|$, $|5| $ are even, $ w_1=1$, $w_2=\frac{|2|+1}{2}$, $w_3=\frac{|3|-1}{2}, \ w_4=\frac{|4|}{2}, \ w_5=\frac{|5|}{2}$, $w_6=|6|$, $w_7=|7|-1 $.  
  
\item[11)] When $ |2|$, $|5| $ are odd and $ |3|$, $|4| $ are even, $ w_1=1$, $w_2=\frac{|2|+1}{2}$, $w_3=\frac{|3|}{2}, \ w_4=\frac{|4|}{2}, \ w_5=\frac{|5|+1}{2}$, $w_6=|6|$, $w_7=|7|-1 $.  
  
\item[12)] When $ |2|$, $|5| $ are even and $ |3|$, $|4| $ are odd,  $ w_1=1$, $w_2=\frac{|2|}{2}$, $w_3=\frac{|3|+1}{2}, \ w_4=\frac{|4|+1}{2}, \ w_5=\frac{|5|}{2}$, $w_6=|6|$, $w_7=|7|-1 $.  
  
\item[13)] When $ |2|$, $|4| $ are odd and $ |3|$, $|5| $ are even,  $ w_1=1$, $w_2=\frac{|2|+1}{2}$, $w_3=\frac{|3|}{2}, \ w_4=\frac{|4|+1}{2}, \ w_5=\frac{|5|}{2}$, $w_6=|6|$, $w_7=|7|-1 $.  
  
\item[14)] When $ |2|$, $|3| $ are even and $ |4|$, $|5| $ are odd,  $ w_1=1$, $w_2=\frac{|2|}{2}$, $w_3=\frac{|3|}{2}, \ w_4=\frac{|4|+1}{2}, \ w_5=\frac{|5|-1}{2}$, $w_6=|6|$, $w_7=|7|-1 $.  
\item[15)] When $ |2|$, $|3|$, $|4|$, $|5| $ are odd, $ w_1=1$, $w_2=\frac{|2|+1}{2}$, $w_3=\frac{|3|-1}{2}, \ w_4=\frac{|4|+1}{2}, \ w_5=\frac{|5|-1}{2}$, $w_6=|6|$, $w_7=|7|-1 $.  
\end{enumerate}
\end{IEEEproof}

In the reminder of the proof, we discuss cases 2), 3) and 16) as given in Table~\ref{tab:cases2col}.

\begin{lemma}\label{lm:2}
  For the case 2) that  $ |3| $ is odd and $ |2|$, $|4|$, $|5| $ are even, $\mc Z_4^1\cup \mc Z_5 = \emptyset$ if and only if $|1|=|7|=1$, $|2|=|4|=|6|=0$, $|3|$ is odd, and $|5|$ is even.
\end{lemma}
\begin{IEEEproof}
 For case 2), we discuss six sub-cases to verify whether $\mc Z_4^1$ is empty. For the first four sub-cases 2-1)--2-4), we have values of $w_1\geq 1,w_2,\ldots,w_6,w_7<|7|$ satisfying \eqref{eq:2cw1}--\eqref{eq:2cw3} and hence $ \mc Z_4^1\neq \emptyset $:
  \begin{enumerate}
  \item[2-1)] When $ |2|>0 $, $w_1=1$, $w_2=\frac{|2|}{2}+1$, $w_3=\frac{|3|-1}{2}, \ w_4=\frac{|4|}{2}, \ w_5=\frac{|5|}{2}$, $w_6=|6|$, $w_7=|7|-1 $.
  \item[2-2)] When $ |4|>0$ and $|5| >0$,  $ w_1=1$, $w_2=\frac{|2|}{2}$, $w_3=\frac{|3|+1}{2}, \ w_4=\frac{|4|}{2}+1, \ w_5=\frac{|5|}{2}-1$, $w_6=|6|$, $w_7=|7|-1 $.
  \item[2-3)] When $ |1|+|6|\geq 2$, $ w_1=1$, $w_2=\frac{|2|}{2}$, $w_3=\frac{|3|+1}{2}, \ w_4=\frac{|4|}{2}, \ w_5=\frac{|5|}{2}$, $w_6=|6|$, $w_7=|7|-1 $.
  \item[2-4)] When $ |5|>0\text{ and }|6|+|7|\geq 3 $,  $ w_1=1$, $w_2=\frac{|2|}{2}$, $w_3=\frac{|3|-1}{2}, \ w_4=\frac{|4|}{2}, \ w_5=\frac{|5|}{2}-1$, $w_6=|6|$, $w_7=|7|-1 $.
  \end{enumerate}
  For the remaining two cases, $ \mc Z_4^1$ can be empty: 
  \begin{enumerate}
\item[2-5)] When  $|1|=1$, $|2|=0$, $|5|=0$ and $|6|=0$, we must have $w_1=1$, but \eqref{eq:2cw3} is not satisfied. Thus $ \mc Z_4^1=\emptyset  $ if  $|1|=1$, $|2|=0$, $|5|=0$, $|6|=0$, $|3|$ is odd and $|4|$ is even.
  \item[2-6)] When  $|1|=1$, $|2|=0$, $|4|=0$, $|6|=0$ and $1\leq |7|\leq 2$, for $ \vv y\in \mc Z_4^1 $,  
  by~\eqref{eq:2cw1}, we get
  \begin{equation}
    w_3-w_5=\frac{|3|-|5|+1}{2} .\label{eq:o1a}
  \end{equation}
  Since  $|5|$ is even, by~\eqref{eq:2cw3} we get $ w_5\leq \frac{|5|-2}{2} $ and thus from \eqref{eq:o1a}, we get
  \begin{equation}
    w_3\leq \frac{|3|-1}{2} .\label{eq:o1b}
  \end{equation}
By~\eqref{eq:2cw2}, we get
  \begin{IEEEeqnarray}{rCl}
    w_3+w_7 &\geq & \frac{|3|+|7|}{2}.\label{eq:o1c}
  \end{IEEEeqnarray}
  By \eqref{eq:o1b} and \eqref{eq:o1c}, 
  \begin{equation}
    w_7\geq \frac{|7|+1}{2}. \label{eq:o1d}
  \end{equation}
  Since $ y_2=0 $, 
  $w_7\leq |7|-1$,
  which implies $ |7|\geq 3 $ together with \eqref{eq:o1d}. We get a contradiction with $ |7|\leq 2 $. Hence $ \mc Z_4^1=\emptyset $ when $|1|=1$, $|2|=0$, $|4|=0$, $|6|=0$, $1\leq |7|\leq 2$, $|3|$ is odd and $|5|$ is even.
  \end{enumerate}

 For case 2), we discuss six sub-cases to verify whether $\mc Z_5$ is empty. For the first four sub-cases 2-1$'$)--2-4$'$), we have values of $w_1\geq 1,w_2,\ldots,w_6,w_7<|7|$ satisfying \eqref{eq:4cw1}--\eqref{eq:4cw3} and hence $ \mc Z_5\neq \emptyset $:
  \begin{enumerate}
  \item[2-1$'$)] When $ |6|+|7|\geq 2 $, $ w_1=1$, $w_2=\frac{|2|}{2}$, $w_3=\frac{|3|-1}{2}, \ w_4=\frac{|4|}{2}, \ w_5=\frac{|5|}{2}$, $w_6=|6|$, $w_7=|7|-1 $.
  \item[2-2$'$)] When $ |4|>0 $, $ w_1=1$, $w_2=\frac{|2|}{2}$, $w_3=\frac{|3|+1}{2}, \ w_4=\frac{|4|}{2}+1, \ w_5=\frac{|5|}{2}$, $w_6=|6|$, $w_7=|7|-1 $.
  \item[2-3$'$)] When $ |2|>0$ and $|5|>0 $, $ w_1=1$, $w_2=\frac{|2|}{2}+1$, $w_3=\frac{|3|-1}{2}, \ w_4=\frac{|4|}{2}, \ w_5=\frac{|5|}{2}+1$, $w_6=|6|$, $w_7=|7|-1 $.
  \item[2-4$'$)]  When $ |5|>0$ and $|1|+|6|\geq 3 $, $ w_1=1$, $w_2=\frac{|2|}{2}$, $w_3=\frac{|3|+1}{2}, \ w_4=\frac{|4|}{2}, \ w_5=\frac{|5|}{2}+1$, $w_6=|6|$, $w_7=|7|-1 $.
  \end{enumerate}
  For the remaining two cases, $ \mc Z_5$ can be empty: 
\begin{enumerate}
\item[2-5$'$)] When $ |4|=|5|=|6|=0$ and $|7|=1 $, $w_7=0$ and hence \eqref{eq:4cw1} and \eqref{eq:4cw2} cannot be satisfied simultaneously. 
    Thus $ \mc Z_5=\emptyset $ when $ |4|=|5|=|6|=0$, $|7|=1 $, $|2|$ is even and $|3|$ is odd.
  \item[2-6$'$)] When $ |2|=|4|=|6|=0$, $|7|=1 $ and $1\leq |1|\leq 2$, $w_7=0$.
    For $ \vv y\in \mc Z_5$, 
    by~\eqref{eq:4cw1}, 
    \begin{equation}
      w_3-w_5=\frac{|3|-|5|-1}{2}. \label{eq:o3a}
    \end{equation}
    Since  $ |3| $ is odd, by~\eqref{eq:4cw2}, 
    \begin{equation}
      w_3\geq \frac{|3|+1}{2}. \label{eq:o3b}
    \end{equation}
    By~\eqref{eq:4cw3}, 
    \begin{equation}
      w_1+w_5\leq \frac{|1|+|5|+1}{2}. \label{eq:o3c}
    \end{equation}
    By \eqref{eq:o3a}, \eqref{eq:o3b} and \eqref{eq:o3c}, $ w_1=0 $ which contradicts with $ w_1
    \geq 1 $. Thus $ \mc Z_5=\emptyset $ when $ |2|=|4|=|6|=0$, $|7|=1 $, $1\leq |1|\leq 2$, $|3|$ is odd and $|5|$ is even.
  \end{enumerate}
  
  Combining the sufficient and necessary conditions for $\mc Z_4^1=\emptyset$ and the sufficient and necessary conditions  for $\mc Z_5=\emptyset$, we get that $\mc Z_4^1\cup \mc Z_5=\emptyset$ if and only if $|1|=|7|=1$, $|2|=|4|=|6|=0$, and $|3|$ is odd, $|5|$ is even.
\end{IEEEproof}

\begin{lemma}\label{lm:3}
  For the case 3) that $ |5| $ is odd and $ |2|$, $|3|$, $|4| $ are even, $\mc Z_4^1\cup \mc Z_5 = \emptyset$ if and only if $|1|=|7|=1$, $|2|=|4|=|6|=0$,  $|3|$ is even, and $|5|$ is odd.
\end{lemma}
\begin{IEEEproof}
 For the case 3), we discuss six sub-cases to verify whether $\mc Z_4^1$ is empty. For the first four sub-cases, we have values of $w_1\geq 1,w_2,\ldots,w_6,w_7<|7|$ satisfying \eqref{eq:2cw1}--\eqref{eq:2cw3} and hence $ \mc Z_4^1\neq \emptyset $:
  \begin{enumerate}
  \item[3-1)] When $ |2|>0 $,  $ w_1=1$, $w_2=\frac{|2|}{2}+1$, $w_3=\frac{|3|}{2}, \ w_4=\frac{|4|}{2}, \ w_5=\frac{|5|+1}{2}$, $w_6=|6|$, $w_7=|7|-1 $.
  \item[3-2)] When $ |6|+|7|\geq 2 $,  $ w_1=1$, $w_2=\frac{|2|}{2}$, $w_3=\frac{|3|}{2}, \ w_4=\frac{|4|}{2}, \ w_5=\frac{|5|-1}{2}$, $w_6=|6|$, $w_7=|7|-1 $.
  \item[3-3)] When $ |3|>0 $ and $ |1|+|6|\geq 3 $, $ w_1=1$, $w_2=\frac{|2|}{2}$, $w_3=\frac{|3|}{2}+1, \ w_4=\frac{|4|}{2}, \ w_5=\frac{|5|+1}{2}$, $w_6=|6|$, $w_7=|7|-1 $.
  \item[3-4)] When $ |3|>0$ and $|4|>0 $,  $ w_1=1$, $w_2=\frac{|2|}{2}$, $w_3=\frac{|3|}{2}+1, \ w_4=\frac{|4|}{2}+1, \ w_5=\frac{|5|-1}{2}$, $w_6=|6|$, $w_7=|7|-1 $.
  \end{enumerate}
  For the remaining two cases, $ \mc Z_4^1$ can be empty: 
  \begin{enumerate}
\item[3-5)] When $|2|=|3|=|6|=0, |7|=1 $, \eqref{eq:2cw2} cannot be satisfied. Then $ \mc Z_4^1=\emptyset $ when $|2|=|3|=|6|=0, |7|=1 $, $|4|$ is even and $|5|$ is odd;
  \item[3-6)] When $ |2|=|4|=|6|=0$, $|7|=1$ and $1\leq |1|\leq 2 $, $w_7=0$. By \eqref{eq:2cw1}--\eqref{eq:2cw3},
    \begin{IEEEeqnarray}{rCl}
      w_3&\geq& \frac{|3|+1}{2},  \label{eq:o2aa}\\
      w_3-w_5&=& \frac{|3|-|5|+1}{2},  \label{eq:o2b}\\
      w_1+w_5&\leq& \frac{|1|+|5|}{2},  \label{eq:o2c}
    \end{IEEEeqnarray}
  which implies $ w_1=1 $. Then $ w_5\leq \frac{|5|-1}{2} $ by \eqref{eq:o2c}. Thus $ w_3\leq \frac{|3|}{2} $ by \eqref{eq:o2b}, which is a contradiction with \eqref{eq:o2aa}. Thus we have $\mc Z_4^1=\emptyset$ when  $ |2|=|4|=|6|=0, |7|=1$, $1\leq |1|\leq 2 $, $|3|$ is even and $|5|$ is odd.
  \end{enumerate}

 For case 3), we discuss six sub-cases to verify whether $\mc Z_5$ is empty.
For the first four sub-cases 3-1$'$)--3-4$'$) we have values of $w_1\geq 1,w_2,\ldots,w_6,w_7<|7|$ satisfying \eqref{eq:4cw1}--\eqref{eq:4cw3} and hence $ \mc Z_5\neq \emptyset $:
  \begin{enumerate}
  \item[3-1$'$)] When $ |1|+|6|\geq 2 $, $ w_1=1$, $w_2=\frac{|2|}{2}$, $w_3=\frac{|3|}{2}, \ w_4=\frac{|4|}{2}, \ w_5=\frac{|5|+1}{2}$, $w_6=|6|$, $w_7=|7|-1 $.
  \item[3-2$'$)] When $ |4|>0 $ and $ w_1=1$, $w_2=\frac{|2|}{2}$, $w_3=\frac{|3|}{2}, \ w_4=\frac{|4|}{2}+1, \ w_5=\frac{|5|-1}{2}$, $w_6=|6|$, $w_7=|7|-1 $.
  \item[3-3$'$)] When $ |3|>0$ and $|6|+|7|\geq 3 $, $ w_1=1$, $w_2=\frac{|2|}{2}$, $w_3=\frac{|3|}{2}-1, \ w_4=\frac{|4|}{2}, \ w_5=\frac{|5|-1}{2}$, $w_6=|6|$, $w_7=|7|-1 $.
  \item[3-4$'$)] When $ |2|>0$ and $|3|>0 $, $ w_1=1$, $w_2=\frac{|2|}{2}+1$, $w_3=\frac{|3|}{2}-1, \ w_4=\frac{|4|}{2}, \ w_5=\frac{|5|+1}{2}$, $w_6=|6|$, $w_7=|7|-1 $.
    \end{enumerate}
For the remaining two cases, $ \mc Z_5$ can be empty: 
\begin{enumerate}
  \item[3-5$'$)] When $ |1|=1$ and $ |3|=|4|=|6|=0 $, \eqref{eq:4cw1} and \eqref{eq:4cw3}  cannot be satisfied simultaneously.   %
    Thus $ \mc Z_5=\emptyset $ when $ |1|=1$, $ |3|=|4|=|6|=0 $, $|2|$ is even and $|5|$ is odd;
  \item[3-6$'$)]
    Otherwise when $ |1|=1$, $|2|=|4|=|6|=0$ and $ 1\leq |7|\leq 2 $, $w_1=1$. For $\vv y \in \mc Z_5$,
    by~\eqref{eq:4cw1} and~\eqref{eq:4cw2}, we have
    \begin{IEEEeqnarray}{rCl}
      w_3-w_5&=&\frac{|3|-|5|-1}{2},\label{eq:o4a}\\
      w_3+w_7&\geq& \frac{|3|+|7|-1}{2}.\label{eq:o4b}
    \end{IEEEeqnarray}
    Since  $ |5| $ is odd, by~\eqref{eq:4cw3},
    \begin{equation}
      w_5\leq \frac{|5|-1}{2}.\label{eq:o4c}
    \end{equation}
    By \eqref{eq:o4a}, \eqref{eq:o4b}, \eqref{eq:o4c} and $w_7\leq |7|-1$, we get $ |7|\geq 3 $, which is a contradiction. Thus $ \mc Z_5=\emptyset $ when $ |1|=1$, $|2|=|4|=|6|=0$, $1\leq |7|\leq 2 $, $|3|$ is even and $|5|$ is odd.
  \end{enumerate}

Combining the sufficient and necessary conditions for $\mc Z_4^1=\emptyset$ and the sufficient and necessary conditions  for $\mc Z_5=\emptyset$, we get that $\mc Z_4^1\cup \mc Z_5=\emptyset$ if and only if $|1|=|7|=1$, $|2|=|4|=|6|=0$, and $|3|$ is even, $|5|$ is odd.
\end{IEEEproof}

\begin{lemma}\label{lm:16}
  For the case 16) that $ |2|$, $|4| $ are even and $ |3|$, $|5| $ are odd, $ \mc Z_5=\emptyset $ if and only if $|1|=|7|=1$, $|2|=|4|=|6|=0$, and $|3|$, $|5|$ are both odd.
\end{lemma}
\begin{IEEEproof}
We discuss this case in five sub-cases. For the first four sub-cases 16-1)--16-4) we have values of $w_1\geq 1,w_2,\ldots,w_6,w_7<|7|$ satisfying \eqref{eq:3cw1}--\eqref{eq:3cw3} and hence $ \mc Z_5\neq \emptyset $:
  \begin{enumerate}
  \item[16-1)] When $ |6|+|7|\geq 2 $,  $ w_1=1$, $w_2=\frac{|2|}{2}$, $w_3=\frac{|3|-1}{2}, \ w_4=\frac{|4|}{2}, \ w_5=\frac{|5|-1}{2}$, $w_6=|6|$, $w_7=|7|-1 $.
  \item[16-2)] When $ |1|+|6|\geq 2 $, $ w_1=1$, $w_2=\frac{|2|}{2}$, $w_3=\frac{|3|+1}{2}, \ w_4=\frac{|4|}{2}, \ w_5=\frac{|5|+1}{2}$, $w_6=|6|$, $w_7=|7|-1 $.
  \item[16-3)] When $ |4|>0 $, $ w_1=1$, $w_2=\frac{|2|}{2}$, $w_3=\frac{|3|+1}{2}, \ w_4=\frac{|4|}{2}+1, \ w_5=\frac{|5|-1}{2}$, $w_6=|6|$, $w_7=|7|-1 $.
  \item[16-4)] When $ |2|>0 $, $ w_1=1$, $w_2=\frac{|2|}{2}+1$, $w_3=\frac{|3|-1}{2}, \ w_4=\frac{|4|}{2}, \ w_5=\frac{|5|+1}{2}$, $w_6=|6|$, $w_7=|7|-1 $.
  \end{enumerate}
For the remaining sub-case, $ \mc Z_5$ can be empty: 
\begin{enumerate}
  \item[16-5)]   When $ |1|=|7|=1$ and $|2|=|4|=|6|=0 $, $w_1=1$ and $w_7=0$. 
By~\eqref{eq:3cw1} and \eqref{eq:3cw2}, $ w_5\geq \frac{|5|+1}{2} $. While  $ |5| $ is odd and by~\eqref{eq:3cw3}, $ w_5\leq \frac{|5|-1}{2} $, which is a contradiction. Thus $ \mc Z_5=\emptyset $ when $ |1|=|7|=1, |2|=|4|=|6|=0 $, and $|3|$, $|5|$ are odd.
  \end{enumerate}
\end{IEEEproof}

	By Lemma~\ref{lm:2}, \ref{lm:3} and \ref{lm:16}, we can conclude that $ \mc Z_4^1\cup \mc Z_5=\emptyset $ if and only if $ |1|=|7|=1$, $|2|=|4|=|6|=0 $ and at least one of $ |3|$ and $ |5| $ is odd.

\section{Optimal Linear $(n,4)$ Codes: Proof of Theorem~\ref{thm:linearcompare}}\label{sec:linearcode}
In this section, we apply the comparing technique of codes with difference in one column (ref. \S\ref{sec:1col}) on linear codes, and prove Theorem~\ref{thm:linearcompare}.

Recall the definition of the $(n,4)$ linear code $C(n_3,n_5,n_6)$ in Definition~\ref{def:linear}, where at two of $n_3$, $n_5$ and $n_6$ are nonzero. Consider an $(n,4)$ linear code $C(n_3,n_5,n_6)$ with $n_3>0$.
  Let $ C'$ be the code obtained by replacing a column of type $\bspan{3}$ of $C$ by $\bspan{5}$. Theorem~\ref{thm:linearcompare} claims that
  \begin{enumerate}
  \item When $n_3, n_5+1,n_6$ have the same parity, $\lambda_{ C'}= \lambda_ C$;
  \item When $n_3, n_5,n_6$ have the same parity,%
    \begin{itemize}
    \item if $n_3=1$,  $\lambda_{C'}= \lambda_ C$, and
    \item  if $n_3\geq 2$,  $\lambda_{ C'}>\lambda_C$.
    \end{itemize}
  \item When $n_5\leq \min\{n_3, n_6\} $ and  $n_3-1, n_5,n_6$ have the same parity, 
    \begin{itemize}
    \item if $n_3= n_5+1$,  $\lambda_{ C'}= \lambda_C$, and
    \item if $n_3> n_5+1$,  $\lambda_{C'}> \lambda_ C$.
    \end{itemize}
  \end{enumerate}
  We prove the above claims in this section.
  
\subsection{Formulae of $\alpha_C^3(d)$ and $\alpha_C^5(d)$}

WLOG, we assume the first $n_3$ columns of $C(n_3,n_5,n_6)$ are $\bspan{3}$ and the last $n_6$ columns are $\bspan{6}$.
By the discussion in Appendix~\ref{sec:1col}, we have $\mc O=\{1,4\}$ and $\mc P=\{2,3\}$. Substituting $s=\bspan{3}$ in \eqref{eq:d1di}, we have
\begin{IEEEeqnarray*}{rCl}
  d_1'(\vv y)&=& d_1(\vv y) - y_1 + \overline{y_1}, \\
  d_2'(\vv y)&=& d_2(\vv y) - y_1 + \overline{y_1},  \\
  d_3'(\vv y)&=& d_3(\vv y) + y_1 - \overline{y_1},  \\
  d_4'(\vv y)&=& d_4(\vv y) + y_1 - \overline{y_1},
\end{IEEEeqnarray*}
where by \eqref{eq:d}
\begin{IEEEeqnarray*}{rCl}
  d_1 & = & w_3+w_5+w_6, \\
  d_2 & = & w_3+\overline{w_5}+\overline{w_6}, \\
  d_3 & = & \overline{w_3}+w_5+\overline{w_6}, \\
  d_4 & = & \overline{w_3}+\overline{w_5}+w_6. 
\end{IEEEeqnarray*}

To use Corollary \ref{cor:1}, we rewrite $\mc Y_3$ and $\mc Y_5$ (defined in \eqref{eq:y3} and \eqref{eq:y5}, respectively) as
\begin{align*}
	\mc Y_3 & =  \{d_{\mc P}' = d_{\mc O}' < d_{\mc P} = d_{\mc O}\} \\
	& = \{d_2\land d_3 = d_1\land d_4 =  d_2'\land d_3'+1 = d_1'\land d_4'+1  \},
\end{align*}
and 
\begin{align*}
	\mc Y_5 &= \{ d_{\mc P}' = d_{\mc O} < d_{\mc O}' = d_{\mc P} \}\\
	&= \{d_2'\land d_3'+1=d_1\land d_4 +1 = d_2\land d_3 = d_1'\land d_4' \}.
\end{align*}
We analyze $ \mc Y_3 $ and $\mc Y_5$ to simplify the formula.
First, we have
\begin{IEEEeqnarray*}{rCl}
  \mc Y_3(d,1) & \triangleq & \{\vv y\in \mc Y_3: d_{C}(\vv y) = d, y_1 = 1\} \\
	&= &\{y_1=1, d_2\land d_3 = d_1\land d_4 =d, d_2'\land d_3'= d_1'\land d_4' = d-1\} \\
	&= &\Big\{ y_1=1,  d_2\land d_3 = d_1\land d_4 =d, (d_2-1)\land (d_3+1)=
	 (d_1-1)\land (d_4+1) = d-1 \Big\} \\
	&=& \{y_1=1, d_1=d_2=d, d_3\geq d, d_4\geq d \} \\
	&=&\Big \{y_1=1, w_3+w_5+w_6 = w_3+\overline{w_5} +\overline{w_6} = d, \overline{w_3} +w_5+\overline{w_6}\geq d,\overline{w_3}+\overline{w_5}+w_6\geq d \Big\} \IEEEyesnumber \label{eq:cli4}\\
	&=& \Big\{y_1= 1, w_3 =d- \frac{n_5+n_6}{2}, w_6= \frac{n_5+n_6}{2}-w_5, d- \frac{n_3+n_6}{2} \leq w_5\leq n_5-d+ \frac{n_3+n_6}{2}\Big\},\IEEEyesnumber \label{eq:cli7}
\end{IEEEeqnarray*}
and
\begin{IEEEeqnarray*}{rCl}
\mc Y_3(d,0) & \triangleq & \{\vv y\in \mc Y_3: d_{C}(\vv y) = d, y_1 = 0\} \\ 
&= &\{y_1=0, d_2\land d_3 = d_1\land d_4 =d, d_2'\land d_3'= d_1'\land d_4' = d-1\} \\
&= &\Big\{ y_1=0,  d_2\land d_3 = d_1\land d_4 =d, (d_2+1)\land (d_3-1)=(d_1+1)\land (d_4-1) = d-1 \Big\} \\
&=& \{y_1=0, d_3=d_4=d, d_1\geq d, d_2\geq d \} \\
&=& \Big\{y_1=0,  \overline{w_3}+\overline{w_5}+w_6 =\overline{w_3} +w_5+\overline{w_6}= d,  w_3+\overline{w_5} +\overline{w_6}\geq d, w_3+w_5+w_6 \geq d \Big\}.\IEEEyesnumber  \label{eq:cli5}
\end{IEEEeqnarray*}
From \eqref{eq:cli4} and \eqref{eq:cli5},
$\cv y \in \{\vv y\in \mc Y_3: d_{C}(\vv y) = d, y_1 = 1\}  $  if and only if $  F_{n_3+n_5}(\cv y) \in \{\vv y\in \mc Y_3: d_{C}(\vv y) = d, y_1 = 0\} $,
where $F_{n_3+n_5}(\cv y) $ is the vector obtained by flipping the first $n_3+n_5$ bits of $\cv y$.
Thus 
\begin{align}
	\alpha_C^3(d) = \left| \left\{\vv y\in \mc Y_3: d_{C}(\vv y) = d\right\} \right| = \left| \mc Y_3(d,0) \right| + \left| \mc Y_3(d,1) \right| = 2\left| \mc Y_3(d,1) \right|. \label{eq:cli8}
\end{align}
Similarly we get
\begin{IEEEeqnarray*}{rCl}
    \mc Y_5(d,0) & = & \{\vv y\in \mc Y_5: d_{C}(\vv y) = d, y_1 = 0\}  \\
    &=& \{y_1=0, d_2'\land d_3' = d_1\land d_4  =d, d_2\land d_3 = d_1'\land d_4' = d+1\} \\
    &=& \{y_1=0, (d_2+1)\land (d_3-1) = d_1\land d_4  =d, d_2\land d_3 = (d_1+1)\land (d_4-1) = d+1\} \\
    &=&\{ y_1=0, d_1=d,d_3=d+1,d_4\geq d+2, d_2\geq d+1 \}\\
    &=& \{y_1=0,w_3+w_5+w_6 = \overline{w_3} +w_5+\overline{w_6}-1=d,   w_3+\overline{w_5}+\overline{w_6}\geq d+1,\overline{w_3} +\overline{w_5}+w_6 \geq d+2  \} \\
    &=& \Big\{ y_1=0, w_5 = d- \frac{n_3+n_6-1}{2}, w_6=\frac{n_3+n_6-1}{2}-w_3, \\
    & & \quad  d-  \frac{n_5+n_6-1}{2}\leq w_3\leq n_3-d+\frac{n_5+n_6-3}{2}\Big\}, \IEEEyesnumber \label{eq:cli6}
\end{IEEEeqnarray*}
and
\begin{align}
	\alpha_C^5(d) = 2\big| \mc Y_5(d,0)  \big|. \label{eq:cli9}
\end{align}

\subsection{Proof of Cases 1) and 2)}

To prove 1) and 2), consider
$n_3$ and $n_6$ have the same parity. As no integer $\omega_5$ satisfies the condition in \eqref{eq:cli6}, we have $\mc Y_5(d,0) =\emptyset$ for all possible $d$. Hence by Corollary \ref{cor:1}--3), $\lambda_{C'} \geq \lambda_C$ where the equality holds if and only if $\mc Y_3=\emptyset$.

Further, if $n_5$ is of the different parity as $n_3$ and $n_6$, we see that $\mc Y_3(d,1)=\emptyset$ for all possible $d$ by checking \eqref{eq:cli7} and hence 1) is proved.

To prove 2), consider $n_5$ is of the same parity as $n_3$ and $n_6$. By \eqref{eq:cli7}, for $\vv y \in \mc Y_3(d,1)$, $w_3\leq \frac{n_3}{2}$. So when $n_3=1$, $w_3=0$ which is a contradiction to $y_1=1$. Hence, when $n_3=1$, $\mc Y_3(d,1) = \emptyset$ for all possible $d$. When $n_3\geq 2$, consider two cases: 
\begin{itemize}
\item $n_3>1,n_5,n_6$ are all odd, and hence $n=n_3+n_5+n_6$ is odd. We see that $\mc Y_3(\frac{n-1}{2},1)$ includes all $\vv y$ with $w_3=\frac{n_3-1}{2}$, $w_5=\frac{n_5-1}{2}$ and $w_6=\frac{n_6+1}{2}$. 
\item $n_3,n_5,n_6$ are all even, and hence $n=n_3+n_5+n_6$ is even. We see that $\mc Y_3(\frac{n-1}{2},1)$ includes all $\vv y$ with $w_3=\frac{n_3}{2}$, $w_5=\frac{n_5}{2}$ and $w_6=\frac{n_6}{2}$. 
\end{itemize}

\subsection{Proof of Case 3)}

Last, we prove case 3), where $n_5 \leq \min\{n_3,n_6\}$ and  $n_3-1, n_5,n_6$ have the same parity. In this case, both $\mc Y_5$ and $\mc Y_3$ are not empty and hence to apply Corollary \ref{cor:1}, we need to derive the formula of $\alpha^3(d)$ and $\alpha^5(d)$, where the subscript $C$ will be omitted henceforth to simplify the notations.

By \eqref{eq:cli7}, \eqref{eq:cli8}, we have
\begin{align}
  \alpha_{d+1}^3 
  &=2 \sum_{w_5 =d- \frac{n_3+n_6-3}{2}  }^{ n_5-d+ \frac{n_3+n_6-3}{2} } \binom{n_3-1}{d- \frac{n_5+n_6}{2} }  \binom{n_5}{w_5}
    \binom{n_6}{ \frac{n_5+n_6}{2}-w_5}. \label{eq:cli3}
\end{align}
By \eqref{eq:cli6} and \eqref{eq:cli9}, we have 
\begin{IEEEeqnarray*}{rCl}
  \alpha^5(d) 
  &= & 2 \sum_{w_3 = d- \frac{n_5+n_6-2}{2} }^{	n_3-d+ \frac{n_5+n_6-4}{2} } \binom{n_3-1}{w_3}  \binom{n_5}{ d- \frac{n_3+n_6-1}{2}} \binom{n_6}{\frac{n_3+n_6-1}{2}-w_3}  %
  \\
  & = & 2 \sum_{\tilde{w}_3 =d- \frac{n_3+n_6-3}{2}  }^{	 n_5-d+ \frac{n_3+n_6-3}{2} } \binom{n_3-1}{\tilde{ w}_3-\frac{n_5-n_3+1}{2}}   \binom{n_5}{ d- \frac{n_3+n_6-1}{2}}\binom{n_6}{\frac{n_5+n_6}{2}-\tilde{w}_3}, \IEEEyesnumber \label{eq:cli2}
\end{IEEEeqnarray*}
where the last equality is obtained by substituting $\tilde{ w}_3= w_3+\frac{n_5-n_3+1}{2}$.
We can verify that
\begin{itemize}
\item when $d > \frac{n-3}{2}$, $\alpha^3(d)=\alpha^5(d)=0$;
\item when $d < \max\left(\frac{n_5+n_6}{2},\frac{n_3+n_6-3}{2} \right)$, $\alpha^3(d)=0$; and 
\item when $d < \frac{n_3+n_6-1}{2}$, $\alpha^5(d)=0$. 
\end{itemize}

When $n_3=n_5+1$, by comparing \eqref{eq:cli3} and \eqref{eq:cli2}, we have 
$\alpha^3(d+1) = \alpha^5(d)$ for $d=0,\dots, n-1$. From Corollary \ref{cor:1}--1), $\lambda_{C'}= \lambda_C$.
	
When $n_3> n_5+1$, since $n_3$ and $n_5$ has diverse parity, we have $n_3\geq n_5+3$.
	When $d< \frac{n_3+n_6-3}{2}$ or $d> \frac{n-3}{2}$, 
	\begin{equation*}
		\alpha^3(d+1) = \alpha^5(d)=0.
	\end{equation*}
	When $d= \frac{n_3+n_6-3}{2}$, 
	\begin{align*}
		\alpha^3(d+1) &=2 \sum_{w_5 =0  }^{ n_5 } \binom{n_3-1}{ \frac{n_3-n_5-3}{2} }  \binom{n_5}{w_5} \binom{n_6}{ \frac{n_5+n_6}{2}-w_5} \\
		&>0= \alpha^5(d) .
	\end{align*}
	When $d\in \big[\frac{n_3+n_6-1}{2}, \frac{n-3}{2}\big]$, 
	\begin{equation*}
		\alpha^3(d+1) > \alpha^5(d) ,
	\end{equation*}
	since 	for $ \tilde{ w}_3, w_5$ that satisfy $d- \frac{n_3+n_6-3}{2} \leq \tilde{ w}_3 = w_5 \leq n_5-d+ \frac{n_3+n_6-3}{2} $,
	\begin{align}
		&\binom{n_3-1}{\tilde{ w}_3-\frac{n_5-n_3+1}{2}}  \binom{n_5}{ d- \frac{n_3+n_6-1}{2}}  \nonumber\\
		=&  	
		\binom{n_3-1}{ \frac{n_3-1}{2} +\tilde{ w}_3-\frac{n_5}{2}}  \binom{n_5}{ \frac{n_5}{2}+d- \frac{n-1}{2}} \nonumber \\
		<&\binom{n_3-1}{  \frac{n_3-1}{2} +d- \frac{n-1}{2}}  \binom{n_5}{ \frac{n_5}{2} +\tilde{ w}_3-\frac{n_5}{2}} \label{ieq:cli1}\\
		=&	 \binom{n_3-1}{d- \frac{n_5+n_6}{2} }  \binom{n_5}{w_5} ,\nonumber
	\end{align}
	where the inequality in \eqref{ieq:cli1} holds due to a refined version of \cite[Claim 47]{chen2013optimal}  with strict inequality and 
	\begin{align*}
		\Big| d- \frac{n-1}{2}\Big|  =  \frac{n-1}{2} - d> \Big|\tilde{ w}_3-\frac{n_5}{2}\Big|.
	\end{align*}
        For the completeness, we prove the refined version of \cite[Claim 47]{chen2013optimal} in Appendix (see Lemma \ref{lemma:cli1}).
	By Corollary \ref{cor:1}--2),  $\lambda_{C'}> \lambda_C$.

\section{Analysis of Class-I Codes}
\label{sec:classI}

In this section, we study the ML decoding performance of a Class-I code and a code obtained by changing one column using the approach introduced in \S\ref{sec:1col}. The proofs of  Theorem~\ref{thm:11} and Theorem~\ref{thm:301} are given.

\subsection{Characterizations of $\mc Y_3$ and $\mc Y_5$}
\label{sec:35}

Recall the definition of Class-I codes in Definition~\ref{def:class1}.
We consider a Class-I $(n,4)$ code $C$ with the first column $\bspan{1}$. Let $C'$ be the code obtained by replacing the first column of $C$ to $\bspan{3}$.
By the discussion in \S\ref{sec:1col},
$\mc O=\{1,2,4\}$ and $\mc P=\{3\}$.
See the formulae of $d_{\mc O}$, $d_{\mc O}'$, $d_{\mc P}$ and $d_{\mc P}'$ in Example~\ref{ex:s1s3} for this case.
Guided by Theorem~\ref{the:1} and Corollary~\ref{cor:1}, we first study $\mc Y_3$ and $\mc Y_5$ defined in \eqref{eq:y3} and \eqref{eq:y5}.

For $\vv y \in \mc Y_3$, $d_{\mc P}'=d_3 - y_1 + \overline{y_1}  < d_3 = d_{\mc P}$ implies $y_1 = 1$. Hence we rewrite $\mc Y_3$ as
\begin{IEEEeqnarray*}{rCl}
  \mc Y_3 & = &  \{d_{\mc P}' = d_{\mc O}' < d_{\mc P} = d_{\mc O}\} \\
  & = & \{y_1=1, d_3-1 = [(d_1\land d_2)-1]\land (d_4+1) < d_3= d_1\land d_2\land d_4  \} \\
  & = &  \{y_1 = 1, d_4 \geq d_1\land d_2 = d_3\}. \IEEEyesnumber \label{eq:c13}
\end{IEEEeqnarray*}
For $\vv y \in \mc Y_5$, $d_3 - y_1 + \overline{y_1}  < d_3$ implies $y_1 = 1$. Hence, we rewrite $\mc Y_5$ as
\begin{IEEEeqnarray*}{rCl}
  \mc Y_5 & = & \{ d_{\mc P}' = d_{\mc O} < d_{\mc O}' = d_{\mc P} \} \\
  & = & \{y_1=1, d_3-1 = d_1\land d_2\land d_4 < [(d_1\land d_2)-1]\land (d_4+1) = d_3\} \\
  & = & \{y_1 = 1, d_1\land d_2 \geq d_4+2 = d_3+1\}. \IEEEyesnumber \label{eq:c15}
\end{IEEEeqnarray*}

\subsubsection{Characterization of $\alpha_C^3(i)$}

For $\vv y \in \mc Y_3$, by Lemma~\ref{lemma:1}, $d_C(\vv y ) = d_3$. 
By \eqref{eq:d1} -- \eqref{eq:d4} and \eqref{eq:c13}, we have the following necessary and sufficient condition for $\vv y \in \mc Y_3$ with $d_C(\vv y) = i$:  $y_1  =  1$ and
\begin{IEEEeqnarray*}{rCl}
  w_1 + \overline{w_3} & = & i - w_5 - \overline{w_6},  \\
  w_1 - \overline{w_1} & \leq & \overline{w_5} + w_6 - w_5 - \overline{w_6}, \\
  w_3 - \overline{w_3} & = & w_5 + \overline{w_6} - (w_5+ w_6)\land (\overline{w_5} + \overline{w_6}).
\end{IEEEeqnarray*}
We discuss two cases according to $w_5+w_6<\overline{w_5} + \overline{w_6}$ or not. Define 
$\mc Y_{3}^{A}(i)$ the subset of $\mc Y_3$ with $d_C(\vv y)=i$ and $w_5+w_6<\overline{w_5} + \overline{w_6}$, and define $\mc Y_{3}^{B}(i)$ the subset of $\mc Y_3$ with $d_C(\vv y)=i$ and $w_5+w_6\geq\overline{w_5} + \overline{w_6}$. We see that
\begin{IEEEeqnarray*}{rCl}
  \alpha_C^3(i)
  & = & |\{\vv y \in \mc Y_3: d_C(\vv y) = i\}|\nonumber \\
  & = & |\mc Y_3^{A}(i)| + |\mc Y_3^{B}(i)|, \label{eq:alpha3i}
\end{IEEEeqnarray*}
where $|\mc Y_3^{A}(i)|$ and $|\mc Y_3^{B}(i)|$ are characterized as follows.

As $\mc Y_{3}^{A}(i)$ is the collection of $\vv y$ satisfying $y_1=1$ and
\begin{IEEEeqnarray}{rCl}
  w_5+ w_6 & < & (|5|+|6|)/2,  \label{eq:3yc1} \\
  w_1+w_5 & = & i - (|3|+|6|)/2, \label{eq:3yc2}\\
  w_1 + w_5 - w_6 & \leq & (|1|+|5|-|6|)/2, \label{eq:3yc3} \\
 w_3 +  w_6& = & {(|3|+|6|)}/{2}, \label{eq:3yc4}
\end{IEEEeqnarray}
we have
\begin{equation*}
  |\mc Y_3^{A}(i)| =  \sum_{\substack{w_1\geq 1, w_3,w_5,w_6:\\\eqref{eq:3yc1},\eqref{eq:3yc2},\eqref{eq:3yc3},\eqref{eq:3yc4}}} \binom{|1|-1}{w_1-1}\binom{|3|}{w_3} \binom{|5|}{w_5}\binom{|6|}{ w_6}.\label{eq:c1b}
\end{equation*}
As $\mc Y_{3}^{B}(i)$ is the collection of $\vv y$ satisfying $y_1=1$ and
\begin{IEEEeqnarray}{rCl}
  w_5+ w_6 & \geq & {(|5|+|6|)}/{2}, \label{eq:3yd1} \\
   w_1 + \overline{w_6}  & = & i - {(|3|+|5|)}/{2}, \label{eq:3yd2} \\
  w_1 + w_5 - w_6 & \leq & {(|1|+|5|-|6|)}/{2}, \label{eq:3yd3} \\
  w_3 - w_5 & = & {(|3|-|5|)}/{2}, \label{eq:3yd4}
\end{IEEEeqnarray}
we have
\begin{equation}\label{eq:y3b}
  |\mc Y_3^{B}(i)| =  \sum_{\substack{w_1\geq 1, w_3,w_5,w_6:\\\eqref{eq:3yd1},\eqref{eq:3yd2},\eqref{eq:3yd3},\eqref{eq:3yd4}}} \binom{|1|-1}{w_1-1}\binom{|3|}{w_3} \binom{|5|}{w_5}\binom{|6|}{ w_6}.
\end{equation}

\subsubsection{Characterization of $\alpha_C^5(i)$}
For $\vv y \in \mc Y_5$, by Lemma~\ref{lemma:1}, $d_C(\vv y ) = d_3-1$. 
By \eqref{eq:d1} -- \eqref{eq:d4} and \eqref{eq:c15}, we have the following necessary and sufficient condition for $\vv y \in \mc Y_5$ with $d_C(\vv y) = i$:   $y_1  =  1$ and
\begin{IEEEeqnarray*}{rCl}
  w_1 + \overline{w_3} & = & i+1 - w_5 - \overline{w_6},\\
  w_1 - \overline{w_1} & = & \overline{w_5} + w_6 - w_5 - \overline{w_6} +1,\\
  w_3 - \overline{w_3} & \geq & w_5 + \overline{w_6} - (w_5+ w_6)\land (\overline{w_5} + \overline{w_6}) + 1,
\end{IEEEeqnarray*}
which can be further simplified as $y_1 = 1$ and 
\begin{IEEEeqnarray}{rCl}
  w_3 & = & {(n+|3|-1)}/{2} - i, \label{eq:5yc1} \\
  w_1 + w_5 - w_6  & = & {(|1|+|5|-|6|+1)}/{2}, \label{eq:5yc2} \\
  w_3 + w_6 & \geq & {(|3|+|6|)}/{2}+1,  \label{eq:5yc3}\\
  w_3 - w_5 & \geq & {(|3|-|5|)}/{2}+1.  \label{eq:5yc4}
\end{IEEEeqnarray}
Hence
\begin{equation}\label{eq:alpha5i}
  \alpha_C^5(i)  = \sum_{\substack{w_1\geq 1, w_3, w_5, w_6:\\\eqref{eq:5yc1},\eqref{eq:5yc2},\eqref{eq:5yc3},\eqref{eq:5yc4}}} \binom{|1|-1}{w_1-1}\binom{|3|}{w_3}\binom{|5|}{w_5}\binom{|6|}{w_6}.
\end{equation}

\subsection{Class-I Codes with $|1|=1$: Proof of Theorem~\ref{thm:11}}
\label{sec:11}

Following the discuss in the last subsection, we consider the special case with $|1|=1$, and  $|3| = \min\{|3|, |5|, |6|\}$. Theorem~\ref{thm:11} states that $\lambda_{C'} > \lambda_C$ when $n\neq 3$ and $\lambda_{C'} = \lambda_C$ when $n= 3$. We prove Theorem~\ref{thm:11} for $|3| = \min\{|3|, |5|, |6|\}$, and for codes with other values of $\min\{|3|, |5|, |6|\}$, we can transform them to an equivalent code with $|3| = \min\{|3|, |5|, |6|\}$.

When $|1|=1$, $w_1 = 1$ we can simplify $\alpha_C^3(i)$ and $\alpha_C^5(i)$ as follows.
By \eqref{eq:3yd1} and \eqref{eq:3yd4},
\begin{equation}\label{eq:3yd5}
  w_3+w_6\geq \frac{|3|+|6|}{2},
\end{equation}
and by \eqref{eq:3yd3} and \eqref{eq:3yd4},
\begin{equation}\label{eq:3yd6}
  w_3-w_6\leq \frac{|3|-|6|-1}{2}.
\end{equation}
By \eqref{eq:y3b}, 
\begin{IEEEeqnarray*}{rCl}
  \IEEEeqnarraymulticol{3}{l}{\sum_{i=1}^d\alpha_C^3(i) \geq  \sum_{i=1}^d |\mc Y_3^B(i)|}   \IEEEyesnumber \IEEEeqnarraynumspace\label{eq:110}\\
  & = & \sum_{(w_3,w_6)\in \mathcal{W}_3} \binom{|3|}{w_3}\binom{|5|}{\frac{|5|-|3|}{2}+w_3}\binom{|6|}{w_6}\\
  & = & \sum_{(w_3',w_6')\in \mathcal{W}_3'}  \binom{|3|}{\frac{|3|-|6|}{2}+w_6'}\binom{|5|}{\frac{|5|-|6|}{2}+w_6'}\binom{|6|}{\frac{|6|-|3|}{2}+w_3'}
  \IEEEyesnumber \IEEEeqnarraynumspace\label{eq:113}
\end{IEEEeqnarray*}
where
\begin{IEEEeqnarray*}{rCl}
  \mathcal{W}_3 & = &  \left\{\substack{\eqref{eq:3yd5}, \eqref{eq:3yd6}, \\ w_6\geq \frac{n+|6|+1}{2}-d ,\\ 0\leq w_3 \leq |3|, 0\leq w_6\leq |6| }\right\} = \left\{\substack{w_3+w_6\geq \frac{|3|+|6|}{2} \\ w_3-w_6\leq \frac{|3|-|6|-1}{2} \\ w_6\geq \frac{n+|6|+1}{2}-d ,\\ 0\leq w_3 \leq |3|, 0\leq w_6\leq |6| }\right\}, \\
  \mathcal{W}_3' & = & \left\{ \substack{w_3'+w_6'\geq \frac{|3|+|6|}{2}, \quad w_3'-w_6' \geq \frac{|3|-|6|+1}{2} \\ w_3'\geq \frac{n+|3|+1}{2}-d,\\ \frac{|3|-|6|}{2}\leq w_3'\leq \frac{|3|+|6|}{2}, \frac{|6|-|3|}{2}\leq w_6'\leq \frac{|3|+|6|}{2}   }\right\},
\end{IEEEeqnarray*}
and  \eqref{eq:113} is
obtained by change of variables $w_3'-\frac{|3|}{2} = w_6 - \frac{|6|}{2}$ and $w_6'-\frac{|6|}{2} = w_3 - \frac{|3|}{2}$.

By \eqref{eq:5yc2}, $w_5 = \frac{|5|-|6|}{2} +w_6$. Substituting $w_5$ into \eqref{eq:5yc4}, we have
\begin{equation}
  w_3 - w_6 \geq {(|3|-|6|)}/{2}+1.  \label{eq:5yc5}
\end{equation}
By \eqref{eq:alpha5i},
\begin{IEEEeqnarray*}{rCl} %
  \sum_{i=0}^{d-1}\alpha_C^5(i) & = & \sum_{(w_3,w_6)\in \mathcal{W}_5} \binom{|3|}{w_3}\binom{|5|}{\frac{|5|-|6|}{2}+w_6}\binom{|6|}{w_6} %
  \IEEEyesnumber\IEEEeqnarraynumspace \label{eq:115}
\end{IEEEeqnarray*}
where
\begin{equation}\label{eq:w5}
  \mathcal{W}_5 = \left\{\substack{\eqref{eq:5yc3}, \eqref{eq:5yc5}, \\ w_3\geq \frac{n+|3|+1}{2}-d,\\ 0\leq w_3 \leq |3|, 0\leq w_6\leq |6|  }   \right\} = \left\{ \substack{w_3 + w_6 \geq \frac{|3|+|6|}{2}+1,  \\ w_3-w_6 \geq \frac{|3|-|6|}{2}+1,\\ w_3\geq \frac{n+|3|+1}{2}-d,\\ 0\leq w_3 \leq |3|, 0\leq w_6\leq |6|  } \right\}.
\end{equation}

When $2\leq n\leq 4$ and $n\neq 3$, we have $|3|=|5|=|6|=1$. It can be verified that $\mc W_5 = \emptyset$ for all $d$, but $\mc W_3'\neq \emptyset$ when $d\geq 2$. Therefore, by Corollary~\ref{cor:1}, $\ls_{\mc C'} > \ls_{\mc C}$. When $n=3,$ we have $|3|=0 $ and $\{|5|, |6|\}=\{0,2\}$. Then $\mc W_3=\mc W_5=\emptyset$. Therefore, by Corollary~\ref{cor:1}, $\ls_{\mc C'} = \ls_{\mc C}$.

When $n>4$, we show that  $\mathcal{W}_5 \subset \mathcal{W}_3'$.
Due to $|3|\leq |6|$, we have $\frac{|3|-|6|}{2}\leq 0$ and $\frac{|3|+|6|}{2}\geq |3|$. For $(w_3,w_6)\in \mathcal{W}_5 $, we have \eqref{eq:5yc3}, \eqref{eq:5yc5} and $0\leq w_3\leq |3|$, which implies $\frac{|6|-|3|}{2}+1\leq w_6\leq \frac{|3|+|6|}{2}-1$. Thus 
\begin{equation*}
\frac{|3|-|6|}{2}\leq w_3\leq \frac{|3|+|6|}{2}, \quad\frac{|6|-|3|}{2}\leq w_6\leq \frac{|3|+|6|}{2} ,
\end{equation*}
showing
$(w_3,w_6)\in \mathcal{W}_3'$.

For $(w_3,w_6)\in \mathcal{W}_5$, we have
\begin{IEEEeqnarray*}{rCl}
w_3-\frac{|3|}{2} &\geq & \max \left(w_6-\frac{|6|}{2}, \frac{|6|}{2}-w_6\right)+1 = \left|w_6-\frac{|6|}{2}\right|+1.
\end{IEEEeqnarray*}
Since $|3|\leq |6|$, we have $w_3-\frac{|3|}{2} \leq \frac{|6|}{2}$.
Based on these two inequalities and by Lemma~\ref{lemma:cli1} in Appendix, 
for $(w_3,w_6)\in \mathcal{W}_5$, we have
\begin{equation*}
  \binom{|3|}{w_3}\binom{|6|}{w_6} =  \binom{|3|}{\frac{|3|}{2} +(w_3-\frac{|3|}{2} )}\binom{|6|}{\frac{|6|}{2} +(w_6-\frac{|6|}{2} )}<  \binom{|3|}{\frac{|3|}{2}+(w_6-\frac{|6|}{2})} \binom{|6|}{\frac{|6|}{2}+(w_3-\frac{|3|}{2})}.
\end{equation*}
Comparing \eqref{eq:115} and \eqref{eq:113},
we obtain $\sum_{i=1}^d\alpha_C^3(i) \geq \sum_{i=0}^{d-1}\alpha_C^5(i)$ for any $d=1,\ldots,n$. 
 When $|6|\geq 1,$ there exists $w_3'=\frac{|3|+|6|}{2}$, $w_6' = \frac{|6|}{2}$ so that $(w_3',w_6')\in \mc W_3'$ when $d\geq \frac{n+1-|6|}{2}$ and thus $\mc W_3'\neq 
\emptyset$ when $d\geq \frac{n+1-|6|}{2}$. Comparing \eqref{eq:115} and \eqref{eq:113},
we obtain $\sum_{i=1}^d\alpha_C^3(i) > \sum_{i=0}^{d-1}\alpha_C^5(i)$  when $d\geq \frac{n+1-|6|}{2}$. When $|6|=0,$ we have $|5|\geq 1$ and by similar verification while exchanging $ \sum_{i=1}^d |\mc Y_3^B(i)| $ with $ \sum_{i=1}^d |\mc Y_3^A(i)|$ in~\eqref{eq:110}, we obtain $\sum_{i=1}^d\alpha_C^3(i) > \sum_{i=0}^{d-1}\alpha_C^5(i)$  when $d\geq \frac{n+1-|5|}{2}$. 
By Corollary~\ref{cor:1}, $\ls_{\mc C'} > \ls_{\mc C}$, proving Theorem~\ref{thm:11}.

\subsection{Class-I Codes with $|3|=0$ or $1$: Proof of Theorem~\ref{thm:301}}
\label{sec:301}

We consider the special case with $|3| = \min\{|3|, |5|, |6|\} =0$ or $1$. 
Theorem~\ref{thm:301} states that $\lambda_{C'} \geq \lambda_C$. We prove Theorem~\ref{thm:301} for this case. For codes with $\min\{|3|, |5|, |6|\}=|5|$ or $|6|$, we can transform them to an equivalent code with $|3| = \min\{|3|, |5|, |6|\}$.

\subsubsection{$C$ is Class-1-a} In this case, $n$ is odd, $|1|$ is odd, $|5|$ and $|6|$ are even, and $|3|=0$, which means $w_3=0$. By~\eqref{eq:5yc1}, $\alpha_C^5(i)=0$ if $i\neq \frac{n-1}{2}$.
So when $d < \frac{n+1}{2}$, 
\begin{equation}
\sum_{i=0}^{d-1} \alpha_C^5(i)=0 , \label{eq:c1a}
\end{equation}
and hence $\sum_{i=1}^d \alpha_C^3(i)\geq \sum_{i=0}^{d-1} \alpha_C^5(i)$.
By \eqref{eq:5yc2} and \eqref{eq:5yc3},
\begin{equation}
  \label{eq:7}
  w_6-w_1\leq \frac{|6|-|1|-3}{2}.
\end{equation}
When $d\geq \frac{n+1}{2}$, 
\begin{IEEEeqnarray}{rCl}
  \sum_{i=0}^{d-1} \alpha_C^5(i) & = & \alpha_{C}^5\left(\frac{n-1}{2}\right) \nonumber \\
  &=&\sum_{\substack{w_1\geq 1, \eqref{eq:5yc3},\eqref{eq:7},\\  \eqref{eq:5yc2}: w_5=w_6-w_1+\frac{|1|+|5|-|6|+1}{2},\\
      \eqref{eq:5yc4}: w_5\leq \frac{|5|}{2}-1}} \binom{|1|-1}{w_1-1}\binom{|5|}{w_5}\binom{|6|}{w_6}\nonumber\\
  &\leq & \sum_{\substack{w_1\geq 1, \eqref{eq:5yc3}, \eqref{eq:7}}} \binom{|1|-1}{w_1-1}\binom{|5|}{\frac{|5|}{2}}\binom{|6|}{w_6}\nonumber\\
  & = & \sum_{\substack{w_1\geq 1,w_6\geq \frac{|6|}{2}+1,\\w_6-w_1\leq \frac{|6|-|1|-3}{2}}} \binom{|1|-1}{w_1-1}\binom{|5|}{\frac{|5|}{2}}\binom{|6|}{w_6}, \IEEEeqnarraynumspace \label{eq:sa_2}
\end{IEEEeqnarray}
Substituting $w_1'=|1|-w_1 + 1$ into \eqref{eq:sa_2}, we obtain
\begin{equation}
  \label{eq:5}
  \alpha_{C}^5\left(\frac{n-1}{2}\right) \leq \sum_{\substack{1\leq w_1'\leq |1|,w_6\geq \frac{|6|}{2}+1,\\ w_1'\leq \frac{|1|+|6|-1}{2}-w_6}} \binom{|1|-1}{w_1'-1}\binom{|5|}{\frac{|5|}{2}}\binom{|6|}{w_6}
\end{equation}

By \eqref{eq:3yd1} and \eqref{eq:3yd4},
\begin{equation}
  \label{eq:8}
  w_6\geq \frac{|6|}{2},
\end{equation}
and by \eqref{eq:3yd3} and \eqref{eq:3yd4}
\begin{equation}
  \label{eq:9}
  w_1-w_6\leq \frac{|1|-|6|}{2},
\end{equation}
which is equivalent to $w_1-w_6\leq \frac{|1|-|6|-1}{2}$ as $|1|-|6|$ is odd.
When $d\geq \frac{n+1}{2}$, we further have
\begin{IEEEeqnarray}{rCl}
  \sum_{i=1}^d \alpha_C^3(i)&\geq& \sum_{i=1}^{\frac{n+1}{2}} \alpha_C^3(i)\nonumber \\
  &\geq & \sum_{i=1}^{\frac{n+1}{2}}|\mc Y_3^B(i)|\nonumber\\
  &=&   \sum_{\substack{w_1\geq 1, \eqref{eq:8}, \eqref{eq:9},\\
      \eqref{eq:3yd4}: w_1-w_6\leq \frac{|1|-|6|+1}{2}}} \binom{|1|-1}{w_1-1} \binom{|5|}{\frac{|5|}{2}}\binom{|6|}{ w_6}\nonumber\\
  &=&  \sum_{\substack{w_1\geq 1, w_6\geq \frac{|6|}{2},\\
      w_1\leq \frac{|1|-|6|-1}{2}+w_6}} \binom{|1|-1}{w_1-1} \binom{|5|}{\frac{|5|}{2}}\binom{|6|}{ w_6}, \label{eq:a3xs}
\end{IEEEeqnarray}

As $\frac{|1|-|6|-1}{2}+w_6 \geq \frac{|1|+|6|-1}{2}-w_6$ when $w_6\geq \frac{|6|}{2}$, comparing the RHS' of \eqref{eq:5} and \eqref{eq:a3xs}, 
we have $\sum_{i=1}^d \alpha_C^3(i)\geq \sum_{i=0}^{d-1} \alpha_C^5(i)$ for $d\geq \frac{n+1}{2}$. By Corollary~\ref{cor:1}, $\ls_{\mc C'} \geq \ls_{\mc C}$. %

\subsubsection{$C$ is Class-1-b} In this case, $n$ is even, $|1|$ is odd, $|5|$ and $|6|$ are odd, and $|3|=1$, which means $w_3=0\text{ or }1$.
When $i=\frac{n}{2}-1$, by \eqref{eq:5yc1}, $w_3=1$, and by \eqref{eq:5yc2} and \eqref{eq:5yc4}
\begin{equation}
  \label{eq:12}
  w_6-w_1\leq \frac{|6|-|1|-2}{2}.
\end{equation}
Hence
\begin{IEEEeqnarray}{rCl}
  \alpha_{C}^5\left(\frac{n}{2}-1\right)
  &=&\sum_{\substack{w_1\geq 1, w_6\geq \frac{|6|+1}{2}, \eqref{eq:12},\\  w_5=w_6-w_1+\frac{|1|+|5|-|6|+1}{2},\\
			w_5\leq \frac{|5|-1}{2}}} \binom{|1|-1}{w_1-1}\binom{|5|}{w_5}\binom{|6|}{w_6}\nonumber\\
	&\leq & \sum_{\substack{w_1\geq 1, w_6\geq \frac{|6|+1}{2}, \eqref{eq:12}}} \binom{|1|-1}{w_1-1}\binom{|5|}{\frac{|5|+1}{2}}\binom{|6|}{w_6}\nonumber\\
	&=& \sum_{\substack{w_1\geq 1, w_6\geq \frac{|6|+1}{2},\\w_6-w_1\leq \frac{|6|-|1|-2}{2}}} \binom{|1|-1}{w_1-1}\binom{|5|}{\frac{|5|+1}{2}}\binom{|6|}{w_6},\label{eq:of1} \\
        & = & \sum_{\substack{|1|\geq w_1'\geq 1, w_6\geq \frac{|6|+1}{2},\\w_1'\leq \frac{|6|+|1|}{2}-w_6}} \binom{|1|-1}{w_1'-1}\binom{|5|}{\frac{|5|+1}{2}}\binom{|6|}{w_6}.\IEEEeqnarraynumspace \label{eq:of2}
\end{IEEEeqnarray}
where \eqref{eq:of2} is obtained by
substituting $w_1'=|1|-w_1+1$ into \eqref{eq:of1}.

When $i=\frac{n}{2}$, by \eqref{eq:5yc1}, $w_3=0$, and by 
\eqref{eq:5yc2} and \eqref{eq:5yc4}
\begin{equation}
  \label{eq:11}
  w_6-w_1\leq \frac{|6|-|1|}{2}-2.
\end{equation}
Hence
\begin{IEEEeqnarray}{rCl}
  \alpha_{C}^5\left(\frac{n}{2}\right)&=&\sum_{\substack{w_1\geq 1, w_6\geq \frac{|6|+3}{2}, \eqref{eq:11},\\  w_5=w_6-w_1+\frac{|1|+|5|-|6|+1}{2},\\
      w_5\leq \frac{|5|-3}{2}}} \binom{|1|-1}{w_1-1}\binom{|5|}{w_5}\binom{|6|}{w_6}\nonumber\\
  &\leq & \sum_{\substack{w_1\geq 1, w_6\geq \frac{|6|+3}{2}, \eqref{eq:11}}} \binom{|1|-1}{w_1-1}\binom{|5|}{\frac{|5|-1}{2}}\binom{|6|}{w_6}\nonumber\\
  &=&  \sum_{\substack{w_1\geq 1, w_6\geq \frac{|6|+3}{2},\\w_6-w_1\leq \frac{|6|-|1|}{2}-2}} \binom{|1|-1}{w_1-1}\binom{|5|}{\frac{|5|-1}{2}}\binom{|6|}{w_6}  \label{eq:of3} \\
  & = & \sum_{\substack{|1|\geq w_1'\geq 1, w_6\geq \frac{|6|+3}{2},\\w_1'\leq -w_6+ \frac{|6|+|1|-2}{2}}} \binom{|1|-1}{w_1'-1}\binom{|5|}{\frac{|5|-1}{2}}\binom{|6|}{w_6} . \label{eq:of4}
\end{IEEEeqnarray}
where \eqref{eq:of4} is obtained by substituting $w_1'=|1|-w_1+1$ into \eqref{eq:of3}.

When $w_3=1$, by \eqref{eq:3yd1} and \eqref{eq:3yd4},
\begin{equation}
  \label{eq:8sdss}
  w_6\geq \frac{|6|-1}{2},
\end{equation}
and by \eqref{eq:3yd3} and \eqref{eq:3yd4},
\begin{equation}
  \label{eq:120sl}
  w_1-w_6\leq \frac{|1|-|6|-1}{2}.
\end{equation}
Similar as~\eqref{eq:y3b}, we have
\begin{IEEEeqnarray}{rCl}
&&	\left|\{w_3=1\}\cap \left(\mathop\cup_{i\leq \frac{n}{2}} \mc Y_3^B(i)\right)\right|\nonumber\\
&=&  \sum_{\substack{w_1\geq 1, \eqref{eq:8sdss}, \eqref{eq:120sl},\\ w_1-w_6\leq \frac{n}{2}-\frac{1+|5|}{2}-|6|}} \binom{|1|-1}{w_1-1} \binom{|5|}{\frac{|5|+1}{2}}\binom{|6|}{ w_6}  \nonumber\\
&=&  \sum_{\substack{w_1\geq 1,w_6\geq \frac{|6|-1}{2},\\ w_1\leq w_6+\frac{|1|-|6|-1}{2}}} \binom{|1|-1}{w_1-1} \binom{|5|}{\frac{|5|+1}{2}}\binom{|6|}{ w_6}  \label{eq:sa_5}
\end{IEEEeqnarray}
where \eqref{eq:sa_5} follows that $\frac{|1|-|6|-1}{2}\leq \frac{n}{2}-\frac{1+|5|}{2}-|6|$. %
Since $ w_6+\frac{|1|-|6|-1}{2}\geq -w_6 + \frac{|6|+|1|}{2}$ when $w_6\geq \frac{|6|+1}{2}$, comparing the RHS' of \eqref{eq:of2} and \eqref{eq:sa_5}, we get
\begin{equation}
	\alpha_{C}^5\left(\frac{n}{2}-1\right) \leq \left|\{w_3=1\}\cap \left(\mathop\cup_{i\leq \frac{n}{2}} \mc Y_3^B(i)\right)\right|. \label{eq:of5}
\end{equation}

When $w_3=0$, by \eqref{eq:3yd1} and \eqref{eq:3yd4},
\begin{equation}\label{eq:ssisi}
  w_6\geq \frac{|6|+1}{2},
\end{equation}
and by \eqref{eq:3yd3} and \eqref{eq:3yd4},
\begin{equation}
  \label{eq:8s8fs}
  w_1-w_6\leq \frac{|1|-|6|+1}{2}.
\end{equation}
Similar~\eqref{eq:y3b}, we have
\begin{IEEEeqnarray}{rCl}
  &&	\left|\{w_3=0\}\cap \left(\mathop\cup_{i\leq \frac{n}{2}+1} \mc Y_3^B(i)\right)\right|\nonumber\\
  &=&  \sum_{\substack{w_1\geq 1,\eqref{eq:ssisi}, \eqref{eq:8s8fs},\\ w_1-w_6\leq \frac{n}{2}+1-\frac{1+|5|}{2}-|6|}} \binom{|1|-1}{w_1-1} \binom{|5|}{\frac{|5|-1}{2}}\binom{|6|}{ w_6}  \nonumber\\
  &=&  \sum_{\substack{w_1\geq 1,w_6:w_6\geq \frac{|6|+1}{2},\\
      w_1\leq w_6+\frac{|1|-|6|+1}{2}}} \binom{|1|-1}{w_1-1} \binom{|5|}{\frac{|5|-1}{2}}\binom{|6|}{ w_6} , \label{eq:sa_6}
\end{IEEEeqnarray}
where \eqref{eq:sa_6} follows that $\frac{|1|-|6|+1}{2} \leq \frac{n}{2}+1-\frac{1+|5|}{2}-|6|$. %
Since $w_6+\frac{|1|-|6|+1}{2}\geq -w_6+ \frac{|6|+|1|-2}{2}$ when $w_6\geq \frac{|6|+1}{2}$,  comparing the RHS' of \eqref{eq:of4} and \eqref{eq:sa_6}, we get 
\begin{equation}
\alpha_{C}^5\left(\frac{n}{2}\right)\leq \left|\{w_3=0\}\cap \left(\mathop\cup_{i\leq \frac{n}{2}+1} \mc Y_3^B(i)\right)\right|. \label{eq:of6}
\end{equation}

When $d < \frac{n}{2}$, $\sum_{i=0}^{d-1} \alpha_C^5(i)=0 \leq \sum_{i=1}^d \alpha_C^3(i)$.
When $d=\frac{n}{2}$, by \eqref{eq:of5},
\begin{IEEEeqnarray*}{rCl}
	\sum_{i=0}^{d-1}\alpha_C^5(i)&=&\alpha_{C}^5\left(\frac{n}{2}-1\right) \leq \left|\{w_3=1\}\cap \left(\mathop\cup_{i\leq \frac{n}{2}} \mc Y_3^B(i)\right)\right| \leq \sum_{i=1}^{d}\alpha^3_i.
\end{IEEEeqnarray*}
When $d\geq \frac{n}{2}+1$, by \eqref{eq:of5} and \eqref{eq:of6},
\begin{IEEEeqnarray}{rCl}
	\sum_{i=0}^{d-1}\alpha_C^5(i)&=&\alpha_{C}^5\left(\frac{n}{2}-1\right)+\alpha_{C}^5\left(\frac{n}{2}\right)\nonumber\\
	&\leq& \left|\{w_3=1\}\cap \left(\mathop\cup_{i\leq \frac{n}{2}} \mc Y_3^B(i)\right)\right| +  %
        \left|\{w_3=0\}\cap \left(\mathop\cup_{i\leq \frac{n}{2}+1} \mc Y_3^B(i)\right)\right| \nonumber\\
	& \leq& \sum_{i=1}^{d}\alpha^3_i.\nonumber
\end{IEEEeqnarray}
Thus we have $\sum_{i=1}^d \alpha_C^3(i)\geq \sum_{i=0}^{d-1} \alpha_C^5(i)$ for $1\leq d \leq n$. By Corollary~\ref{cor:1}, $\ls_{\mc C'} \geq \ls_{\mc C}$. %

\subsection{Algorithm for Verifying Optimal Codes: Proof of Theorem~\ref{thm:18}}
  \label{sec:comp}
  \begin{algorithm}[t]
  	\SetKwInOut{Input}{input}\SetKwInOut{Output}{output}
  	\SetAlgoLined
  	\Input{$n$}
  	\Output{$a$ (If $a=-1$, Theorem~\ref{thm:18} holds with blocklength $n$.) }
  	Initialize $ a=-1 $\;        
  		\For{$ n_1=3,5,7,\ldots$ and $n_1\leq n$}{
  			\For{$ n_3=2,3,\ldots,\left \lfloor \frac{n-n_1}{3} \right \rfloor $}{
  				\For{$ n_5=n_3,n_3+2,\ldots$ and $n_5 \leq \left \lfloor \frac{n-n_1-n_3}{2} \right \rfloor $}{
  					$n_6 =n-n_1-n_3-n_5$;
  					
  					\If{ $n_5+n_6$ is even }{
  						Compute $\alpha_C^3(i)$ and $\alpha_C^5(i)$, $i=0,\ldots,n$ for code $C$ with $ |1|_C=n_1 $, $|3|_C=n_3$, $|5|_C=n_5$, $|6|_C=n_6 $\;
  						\If{$ \sum_{i=1}^{d} \alpha_{i}^3< \sum_{i=0}^{d-1} \alpha_C^5(i)$ for some $ d\in\{1,...,n\} $}{$ a=1 $;
  							break;}
  					}	
  				}
  		}}
  	\caption{Verify Theorem~\ref{thm:18} with blocklength $n$}\label{alg:1}
  \end{algorithm}
  To prove Theorem~\ref{thm:18}, we give an algorithm (see the pseudo-code in Algorithm~\ref{alg:1}) which checks whether all Class-I $ (n,4) $ code $ C $  with $|1|\geq 3$ are not optimal.
  In the algorithm, it only verify the Class-I $ (n,4) $ code $ C $  with $|1|\geq 3$ 
  and $ |3|_C\leq |5|_C\leq |6|_C$, and compares $C$ with $C'$ obtained by replacing one column $\bspan{1}$ of $C$ to $\bspan{3}$. 
  Other $ (n,4) $ Class-I codes with $|1|\geq 3$ can be converted to ones of the above type by flipping columns and interchanging rows, and hence do  not need to be checked again.
  The algorithm calculates $\alpha_{C}^3(i)$ and $\alpha_{C}^5(i)$ exactly using the formulae in \S\ref{sec:35}.
  For a give blocklength $n$, if for each code $C$ checked by the algorithm we have $ \sum_{i=1}^{d} \alpha_{C}^3(i) \geq \sum_{i=0}^{d-1} \alpha_{C}^5(i)$ for $d=1,\ldots,n$, then by Corollary~\ref{cor:1}, $\lambda_C \leq \lambda_{C'}$. %
 To prove Theorem~\ref{thm:18}, we evaluate Algorithm~\ref{alg:1} and get output $-1$ for $n$ up to $300$.
  The total number of types of Class-I codes to evaluate is $O(n^3)$. For each type, there are less than $2n$ values $\alpha_C^3(i)$ or $\alpha_C^5(i)$ to evaluate, each of which has complexity $O(n^2)$. Therefore,
  the complexity of the algorithm is  $O(n^6)$ .

\section{Concluding Remarks}
\label{sec:remark}

In this paper, we derived a technique for comparing the ML decoding performance of two $(n,4)$ codes. We use this technique for two cases: i) codes with differences in one column and ii) codes with differences in two columns.  The code comparison results obtained from these two cases can help us to derive many results about the optimal $(n,4)$ codes. We characterized all the optimal codes for $n$ from $2$ to $300$. The optimal codes obtained are all equivalent to  linear codes except for $n=3$,  where there exist nonlinear optimal codes that are not cosets of the linear codes. 

Technically, we could study larger values of $n$ if more computation costs are paid. But it is more attractive to show analytically that whether all the optimal codes are equivalent to linear codes when $n>300$.
Our technique has induced various ways towards solving the problem that may deserve further study. One is to verify whether $ \sum_{i=1}^{d} \alpha_{C}^3(i) \geq \sum_{i=0}^{d-1} \alpha_{C}^5(i)$ for $d=1,\ldots,n$, where the formulae of $\alpha_{C}^3(i)$ and $\alpha_{C}^5(i)$ are given in \S\ref{sec:35}.
If the inequalities can be confirmed in general for $n>300$, all the optimal codes are equivalent to linear codes universally when $n>300$.
If some of the inequalities do not hold, Theorem~\ref{the:1} can be further applied. 
Towards solving the optimal code problem completely, another way is to
apply our code comparison technique to other differences of two
$(n,4)$ codes.

The general principle of our code comparison  technique can be applied to codes with more than $4$ codewords, where the crucial part is to find a desired partition of $\{0,1\}^n$.

\appendix[A Binomial Inequality]\label{sec:proof}

We present a refined version of an inequality in \cite[Claim 47]{chen2013optimal}.
\begin{lemma}\label{lemma:cli1}
	Let $n_1>n_2>0$ be two integers with the same parity and let $v_1$ and $v_2$ be integers or half integers such that $\frac{n_2}{2}\geq |v_1|>|v_2|\geq 0$ and $n_1/2 + v_1$ and $n_1/2+v_2$ are integers. Then
	\begin{align*}
		\binom{n_1}{\frac{n_1}{2}+v_1}\binom{n_2}{\frac{n_2}{2}+v_2} > 	\binom{n_1}{\frac{n_1}{2}+v_2}\binom{n_2}{\frac{n_2}{2}+v_1} .
	\end{align*}
\end{lemma}
\begin{IEEEproof}
  As $\binom{a}{b}=\binom{a}{a-b}$, we only need to prove the case with $v_1>v_2>0$.
  The lemma can be proved by expanding the binomial terms:
  \begin{IEEEeqnarray*}{rCl}
    \frac{\binom{n_1}{\frac{n_1}{2}+v_1}\binom{n_2}{\frac{n_2}{2}+v_2}}{\binom{n_1}{\frac{n_1}{2}+v_2}\binom{n_2}{\frac{n_2}{2}+v_1}}
    &=&\frac{\frac{n_1!}{ (\frac{n_1}{2}+v_1)! (\frac{n_1}{2}- v_1)! } \cdot \frac{n_2!}{ (\frac{n_2}{2}+v_2)! (\frac{n_2}{2}- v_2)! }}{\frac{n_1!}{ (\frac{n_1}{2}+v_2)! (\frac{n_1}{2}- v_2)! } \cdot \frac{n_2!}{ (\frac{n_2}{2}+v_1)! (\frac{n_2}{2}- v_1)! }}\\
    &=& \frac{  (\frac{n_1}{2}+v_2)! (\frac{n_1}{2}- v_2)!  (\frac{n_2}{2}+v_1)! (\frac{n_2}{2}- v_1)!   }
    { (\frac{n_1}{2}+v_1)! (\frac{n_1}{2}- v_1)!  (\frac{n_2}{2}+v_2)! (\frac{n_2}{2}- v_2)!}\\
    &=& \frac{(\frac{n_1}{2}- v_1+1)\cdots (\frac{n_1}{2}- v_2) \cdot  (\frac{n_2}{2}+v_2+1) \cdots  (\frac{n_2}{2}+v_1)    }{ (\frac{n_1}{2}+v_2+1) \cdots  (\frac{n_1}{2}+v_1) \cdot  (\frac{n_2}{2}- v_1+1)\cdots (\frac{n_2}{2}- v_2) }\\
    &=& \frac{\frac{\frac{n_1}{2}- v_1+1}{\frac{n_1}{2}+v_2+1}\cdots \frac{\frac{n_1}{2}- v_2}{\frac{n_1}{2}+v_1}      }{\frac{\frac{n_2}{2}- v_1+1}{\frac{n_2}{2}+v_2+1}\cdots \frac{\frac{n_2}{2}- v_2}{\frac{n_2}{2}+v_1}   } \\
    &>&1,
  \end{IEEEeqnarray*}
  where the last inequality holds since for $0<a<b$ and $\epsilon > 0, $
  $\frac{a}{b} > \frac{a-\epsilon}{b-\epsilon}$.
\end{IEEEproof}


\begin{thebibliography}{10}
\providecommand{\url}[1]{#1}
\csname url@samestyle\endcsname
\providecommand{\newblock}{\relax}
\providecommand{\bibinfo}[2]{#2}
\providecommand{\BIBentrySTDinterwordspacing}{\spaceskip=0pt\relax}
\providecommand{\BIBentryALTinterwordstretchfactor}{4}
\providecommand{\BIBentryALTinterwordspacing}{\spaceskip=\fontdimen2\font plus
\BIBentryALTinterwordstretchfactor\fontdimen3\font minus
  \fontdimen4\font\relax}
\providecommand{\BIBforeignlanguage}[2]{{%
\expandafter\ifx\csname l@#1\endcsname\relax
\typeout{** WARNING: IEEEtran.bst: No hyphenation pattern has been}%
\typeout{** loaded for the language `#1'. Using the pattern for}%
\typeout{** the default language instead.}%
\else
\language=\csname l@#1\endcsname
\fi
#2}}
\providecommand{\BIBdecl}{\relax}
\BIBdecl

\bibitem{shannon1948mathematical}
C.~E. Shannon, ``A mathematical theory of communication,'' \emph{Bell system
  technical journal}, vol.~27, no.~3, pp. 379--423, 1948.

\bibitem{Arikan09}
E.~{Arikan}, ``Channel polarization: A method for constructing
  capacity-achieving codes for symmetric binary-input memoryless channels,''
  \emph{IEEE Transactions on Information Theory}, vol.~55, no.~7, pp.
  3051--3073, 2009.

\bibitem{gallager1962low}
R.~Gallager, ``Low-density parity-check codes,'' \emph{IRE Transactions on
  information theory}, vol.~8, no.~1, pp. 21--28, 1962.

\bibitem{mitchell2015spatially}
D.~G. Mitchell, M.~Lentmaier, and D.~J. Costello, ``Spatially coupled ldpc
  codes constructed from protographs,'' \emph{IEEE Transactions on Information
  Theory}, vol.~61, no.~9, pp. 4866--4889, 2015.

\bibitem{di2002finite}
C.~Di, D.~Proietti, I.~E. Telatar, T.~J. Richardson, and R.~L. Urbanke,
  ``Finite-length analysis of low-density parity-check codes on the binary
  erasure channel,'' \emph{IEEE Transactions on Information theory}, vol.~48,
  no.~6, pp. 1570--1579, 2002.

\bibitem{eslami2013finite}
A.~Eslami and H.~Pishro-Nik, ``On finite-length performance of polar codes:
  stopping sets, error floor, and concatenated design,'' \emph{IEEE
  Transactions on communications}, vol.~61, no.~3, pp. 919--929, 2013.

\bibitem{mondelli2014polar}
M.~Mondelli, S.~H. Hassani, and R.~L. Urbanke, ``From polar to {R}eed-{M}uller
  codes: A technique to improve the finite-length performance,'' \emph{IEEE
  Transactions on Communications}, vol.~62, no.~9, pp. 3084--3091, 2014.

\bibitem{olmos2015scaling}
P.~M. Olmos and R.~L. Urbanke, ``A scaling law to predict the finite-length
  performance of spatially-coupled {LDPC} codes,'' \emph{IEEE Transactions on
  Information Theory}, vol.~61, no.~6, pp. 3164--3184, 2015.

\bibitem{gaudio2017performance}
L.~Gaudio, T.~Ninacs, T.~Jerkovits, and G.~Liva, ``On the performance of short
  tail-biting convolutional codes for ultra-reliable communications,'' in
  \emph{SCC 2017; 11th International ITG Conference on Systems, Communications
  and Coding}.\hskip 1em plus 0.5em minus 0.4em\relax VDE, 2017, pp. 1--6.

\bibitem{cheng2021bch}
J.~Cheng and L.~Chen, ``{BCH} based {U-UV} codes and its decoding,'' in
  \emph{2021 IEEE International Symposium on Information Theory (ISIT)}.\hskip
  1em plus 0.5em minus 0.4em\relax IEEE, 2021, pp. 1433--1438.

\bibitem{valembois2004sphere}
A.~Valembois and M.~P. Fossorier, ``Sphere-packing bounds revisited for
  moderate block lengths,'' \emph{IEEE Transactions on Information Theory},
  vol.~50, no.~12, pp. 2998--3014, 2004.

\bibitem{wiechman2008improved}
G.~Wiechman and I.~Sason, ``An improved sphere-packing bound for finite-length
  codes over symmetric memoryless channels,'' \emph{IEEE Transactions on
  Information Theory}, vol.~54, no.~5, pp. 1962--1990, 2008.

\bibitem{polyanskiy2010channel}
Y.~Polyanskiy, H.~V. Poor, and S.~Verd{\'u}, ``Channel coding rate in the
  finite blocklength regime,'' \emph{IEEE Transactions on Information Theory},
  vol.~56, no.~5, pp. 2307--2359, 2010.

\bibitem{Slepian1956A}
D.~Slepian, ``A class of binary signaling alphabets,'' \emph{Bell System
  Technical Journal}, vol.~35, no.~1, pp. 203--234, 1956.

\bibitem{fontaine1959group}
A.~Fontaine and W.~Peterson, ``Group code equivalence and optimum codes,''
  \emph{IRE Transactions on Information Theory}, vol.~5, no.~5, pp. 60--70,
  1959.

\bibitem{wagner1966search}
T.~J. Wagner, ``A search technique for quasi-perfect codes,'' \emph{Information
  and Control}, vol.~9, no.~1, pp. 94--99, 1966.

\bibitem{tokura1967search}
N.~Tokura, K.~Taniguchi, and T.~Kasami, ``A search procedure for finding
  optimum group codes for the binary symmetric channel,'' \emph{IEEE
  Transactions on Information Theory}, vol.~13, no.~4, pp. 587--594, 1967.

\bibitem{cordaro1967optimum}
J.~Cordaro and T.~Wagner, ``Optimum (n, 2) codes for small values of channel
  error probability (corresp.),'' \emph{IEEE Transactions on Information
  Theory}, vol.~13, no.~2, pp. 349--350, 1967.

\bibitem{peterson72}
W.~W. Peterson and E.~J. Weldon~Jr., \emph{Error-Correcting Codes}.\hskip 1em
  plus 0.5em minus 0.4em\relax MIT Press, 1972.

\bibitem{chen2013equidistant}
P.-N. Chen, H.-Y. Lin, and S.~M. Moser, ``Equidistant codes meeting the
  {P}lotkin bound are not optimal on the binary symmetric channel,'' in
  \emph{2013 IEEE International Symposium on Information Theory (ISIT)}.\hskip
  1em plus 0.5em minus 0.4em\relax IEEE, 2013, pp. 3015--3019.

\bibitem{chen2013optimal}
------, ``Optimal ultrasmall block-codes for binary discrete memoryless
  channels,'' \emph{IEEE Transactions on Information Theory}, vol.~59, no.~11,
  pp. 7346--7378, 2013.

\bibitem{klove2006binary}
T.~Kl{\o}ve, ``Binary linear codes that are optimal for error correction,'' in
  \emph{General Theory of Information Transfer and Combinatorics}.\hskip 1em
  plus 0.5em minus 0.4em\relax Springer, 2006, pp. 1081--1083.

\bibitem{vazquez2016bayesian}
G.~Vazquez-Vilar, A.~T. Campo, A.~G. i~F{\`a}bregas, and A.~Martinez,
  ``Bayesian $ m $-ary hypothesis testing: The meta-converse and
  {V}erd{\'u}-{H}an bounds are tight,'' \emph{IEEE Transactions on Information
  Theory}, vol.~62, no.~5, pp. 2324--2333, 2016.

\bibitem{lin2018weak}
H.-Y. Lin, S.~M. Moser, and P.-N. Chen, ``Weak flip codes and their optimality
  on the binary erasure channel,'' \emph{IEEE Transactions on Information
  Theory}, vol.~64, no.~7, pp. 5191--5218, 2018.

\end{thebibliography}
\end{document}